\documentclass[11pt,letterpaper]{article}
\usepackage[letterpaper,margin=1in]{geometry}
\usepackage{abstract,setspace,leftidx,indentfirst}
\usepackage{authblk}    % allow multiple authors
\usepackage{graphicx}% Include figure files
\usepackage{dcolumn}% Align table columns on decimal point
\usepackage{bm}% bold math
\usepackage[colorlinks]{hyperref}% add hypertext capabilities
\usepackage{physics}
\usepackage[squaren]{SIunits}
\usepackage{mathtools}
\usepackage{amsthm, amssymb, amsfonts}
\usepackage{xcolor}
\usepackage{mathrsfs}
\usepackage[normalem]{ulem}
\usepackage[ruled,vlined]{algorithm2e}
\usepackage{xurl}
\usepackage{comment}
\usepackage{enumitem}

\usepackage[
    backend=biber,
    style=alphabetic,
    maxalphanames=3,
    maxnames=999,
    giveninits=false,
    doi=false,
    url=false,
    isbn=false,
    eprint=false
]{biblatex}
\DeclareFieldFormat[article,inproceedings]{title}{\mkbibemph{#1}}
\AtEveryBibitem{
  \clearfield{month}
  \clearfield{issn}
}

\newtheorem{theorem}{Theorem}
\newtheorem{definition}[theorem]{Definition}
\newtheorem{lemma}[theorem]{Lemma}
\newtheorem{corollary}[theorem]{Corollary}
\newtheorem{fact}[theorem]{Fact}

\newtheorem*{remark*}{Remark}

\hypersetup{
	colorlinks=true,
	linkcolor=blue,
	filecolor=blue,      
	urlcolor=blue,
	citecolor=blue
}
\setlist[enumerate,1]{label=\textnormal{(\arabic*)}}

\title{Learning and Generating Mixed States Prepared by Shallow Channel Circuits}
\author[1]{Fangjun Hu\thanks{\href{mailto:fhu@quera.com}{fhu@quera.com}}}
\author[1]{Christian Kokail}
\author[1]{Milan Kornja\v ca}
\author[1]{Pedro L. S. Lopes}
\author[2]{Weiyuan Gong}
\author[1]{Sheng-Tao Wang}
\author[3]{Xun Gao}
\author[1]{Stefan Ostermann}
\affil[1]{QuEra Computing Inc.}
\affil[2]{School of Engineering and Applied Sciences, Harvard University}
\affil[3]{JILA and Department of Physics, CU Boulder}
\date{\vspace{-15mm}}

\begin{document}
\maketitle

\begin{abstract}
    Learning quantum states from measurement data is a central problem in quantum information and computational complexity.
    In this work, we study the problem of learning to generate mixed states on a finite-dimensional lattice. Motivated by recent developments in mixed state phases of matter, we focus on arbitrary states in the trivial phase.
    A state belongs to the trivial phase if there exists a shallow preparation channel circuit under which local reversibility is preserved throughout the preparation.
    We prove that any mixed state in this class can be efficiently learned from measurement access alone. Specifically, given copies of an unknown trivial phase mixed state, our algorithm outputs a shallow local channel circuit that approximately generates this state in trace distance. The sample complexity and runtime are polynomial (or quasi-polynomial) in the number of qubits, assuming constant (or polylogarithmic) circuit depth and gate locality. Importantly, the learner is not given the original preparation circuit and relies only on its existence.
    Our results provide a structural foundation for quantum generative models based on shallow channel circuits. In the classical limit, our framework also inspires an efficient algorithm for classical diffusion models using only a polynomial overhead of training and generation.
\end{abstract}

\section{Introduction}

Understanding the structure of efficiently preparable quantum states is a central goal in quantum information theory, computational complexity, and quantum many-body physics.
While significant progress has been made for pure states prepared by shallow unitary circuits \cite{kim_2024_learning, landau_2025_learning}, much less is known about the learnability of mixed states prepared by shallow channel circuits. 
In this work, we study the problem of efficiently \textit{learning and generating} such mixed states from their measurement data.

We consider the following concrete learning task. Suppose we are given independent copies of an unknown $n$-qubit mixed state $\rho$ on a $k$-dimensional lattice. 
We may perform measurements on these copies and collect classical data.
The goal is to output a description of a channel circuit $\mathcal{W}$ -- namely, a circuit of completely positive trace-preserving (CPTP) maps -- that generates a state $\mathcal{W}(\ket{0}\!\bra{0}^{\otimes n})$ approximating $\rho$ in trace distance.

Learning mixed states looks more challenging than learning pure states due to their higher amount of degrees of freedom.
In the absence of any structural assumptions, quantum state tomography requires resources exponential in $n$ in the worst case.
One possible way to overcome this exponential barrier is to impose structure on the target state. The most natural assumption is that the target state is prepared by a shallow channel circuit with bounded depth and gate locality. 
% This is a natural operational model for many noisy physical evolutions and open-system dynamics.
A question naturally arises: can mixed states prepared by shallow channel circuits be learned efficiently, both in sample complexity and computational time?

At first glance, this problem remains difficult. In fact, only the shallowness of the preparation circuit of mixed states does not guarantee ease of learning.
There exist shallow channel circuits that prepare states with long-range conditional mutual information (CMI) \cite{lee_2024_universal}, and a mixed state with long-range CMI is generically hard to learn \cite{kumar_2026_unlearnable}.
% indicating that a shallow circuit structure alone does not guarantee efficient learnability.

To better understand when efficient learning may be possible, we turn to insights from the theory of quantum phases of matter.
For pure states, a robust notion of topological phase was proposed in terms of equivalence under mutual shallow local unitary circuits \cite{chen_2010_local} and has been developed extensively in broad research spanning from topological phase classification to quantum information theory \cite{zeng_2019_quantum}.
Under this definition, a pure state that can be prepared using a shallow and local unitary circuit is classified as belonging to the \textit{trivial phase}, and is consequently regarded as exhibiting \textit{short-range entanglement}.
In recent years, substantial effort has also been devoted to extending phase classifications to mixed states \cite{sang_2024_mixed, sang_2025_statbility}. However, unlike unitary circuits, general CPTP gates are not invertible, which complicates attempts to define mixed-state phases via mutual shallow-and-local circuit equivalence.

\begin{figure}
    \centering
    \includegraphics[width=0.9\linewidth]{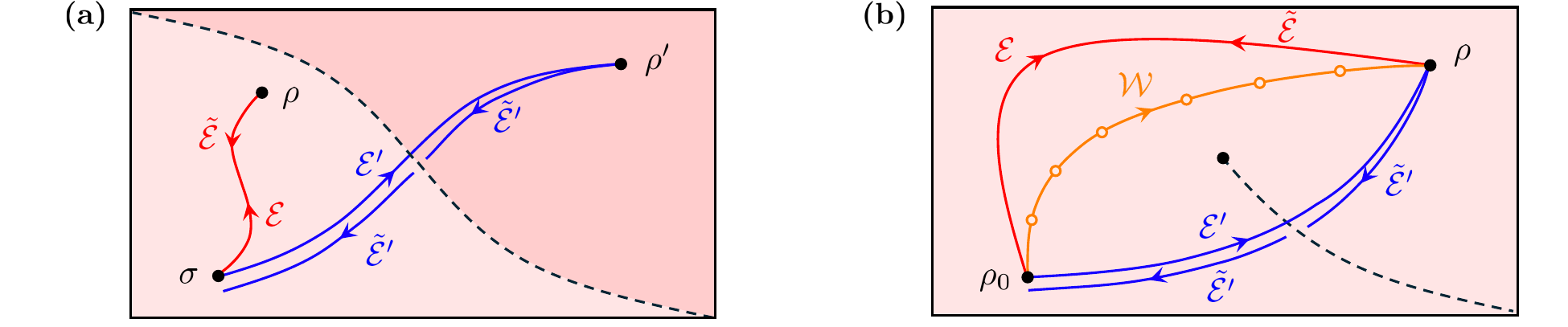}
    \caption{Phase diagrams in the space of mixed states.
    (a) For states $\rho$ and $\sigma$ in the same phase, one can transform $\rho$ and $\sigma$ mutually using shallow-and-local channel circuits $\mathcal{E}$ and $\tilde{\mathcal{E}}$ along one single path (red).
    For $\rho'$ in a different phase, any circuit $\mathcal{E}'$ preparing $\rho'$ from $\sigma$ (blue) must contain some step in the middle where local reversibility fails, denoted as the phase boundary (dashed line).
    (b) Trivial phase mixed state learnability. As long as a shallow-and-local channel circuits $\mathcal{E}$ (red) that prepares $\rho$ from $\rho_0=\ket{0}\!\bra{0}^{\otimes n}$ with local reversibility, our work always produces a new generation channel $\mathcal{W}$ (orange) that generates $\rho$, without knowing anything about $\mathcal{E}$.
    Notice that we do not exclude the possibility that there is some other preparation channel $\mathcal{E}'$ (blue) containing a phase transition, but our result only relies on the existence of a transition-free path (red), hence it still works in this scenario. 
    We remark that $\mathcal{W}$ may not exhibit local reversibility and can cross the phase boundary (not shown in the figure). Also, the preparation dynamics $\mathcal{E}$ can be time-continuous, but the dynamics of $\mathcal{W}$ in our construction is always time-discrete with $k+1$ steps.
    }
    \label{fig:schematic}
\end{figure}

To address the longstanding challenge of characterizing short-range entanglement in mixed states, the notion of \textit{local reversibility} has recently been introduced and studied in condensed matter theory.
Informally, local reversibility requires that information lost in a small region can be approximately recovered using operations supported on the same region \cite{sang_2025_mixed, ma_2025_circuit}, such that we can eliminate the effects caused by the gates in the preparation channel gate-by-gate.
% This condition serves as a quantitative proxy for short-range entanglement in mixed states.
Two mixed states are said to lie in the same phase if one can be transformed into the other via a shallow local channel circuit along a path in the space of mixed states that preserves local reversibility throughout the evolution (Figure \ref{fig:schematic}a).
We will see that this definition automatically implies that the second state can also be transformed back to the first state, with another shallow local channel circuit preserving local reversibility (Section \ref{sec:MSP_via_LR}).
Conversely, if any such transformation necessarily passes through intermediate states that violate local reversibility, the two states belong to distinct phases, and we say that a \textit{phase transition} occurs when local reversibility is violated during evolution.
In this sense, local reversibility provides a structural property that is stable under shallow channel circuits and characterizes a special class of \textit{trivial phase mixed states} -- the phase contains all product states.
For pure states in the trivial phase (i.e., prepared by shallow unitary circuits), recent works have established efficient learning algorithms \cite{kim_2024_learning, landau_2025_learning}, however, the analogous question for mixed states has remained open.

\subsection{Main result}
\label{sec:main_result}

In this paper, we give a positive result for mixed states prepared by shallow channel circuits with local reversibility.
Specifically, we show that mixed states in the trivial phase can be learned efficiently in both sample complexity and runtime.
We informally state our main result for learning a generation circuit of trivial phase mixed states as follows (see formal statement in Theorem \ref{thm:main}),

\begin{theorem}[Main Theorem, informal]
    \label{thm:main_informal}
    Suppose a mixed state $\rho$ in the trivial phase. Namely, $\rho$ is prepared by a shallow channel circuit $\rho =\mathcal{E}_d \circ \cdots \circ \mathcal{E}_2 \circ \mathcal{E}_1 ( \ket{0} \! \bra{0}^{\otimes n} )$ where each $\mathcal{E}_{\ell} = \prod_x \mathcal{E}_{\ell, x}$ is a layer of non-overlapping local channel gates.
    Each gate $\mathcal{E}_{\ell, x}$ acts with support size at most $c$, and there exists local channel gates $\{ \tilde{\mathcal{E}}_{\ell, x} \}$ that eliminate the effects caused by $\{ \mathcal{E}_{\ell, x} \}$ gate-by-gate. 
    Then, given trace distance error $\varepsilon$ and success probability $1 - \delta$, there exists an algorithm that runs in $\mathrm{poly} (n, 2^{(c d)^k}, 1/\varepsilon, \log ( 1/\delta ))$ time, learns from $\mathrm{poly} ( n, 2^{(c d)^k}, 1/\varepsilon, \log ( 1/\delta ))$ copies of $\rho$, and generates $\rho$ using a $(k + 1)$-layer channel circuit $\mathcal{W}$, where each gate size is at most $c'=O(c d)^k$.
\end{theorem}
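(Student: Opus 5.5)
The plan mirrors the pure-state approach of \cite{kim_2024_learning, landau_2025_learning}: first establish the structure that local reversibility gives the target state, then learn the generating gates directly from local marginals.

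\textbf{Step 1: local recoverability of $\rho$.} I will show that local reversibility yields approximate Markov/recoverability properties for $\rho$. For any region $A$ with a buffer $B$ of width about $cd$, there should be a channel $\mathcal{R}_{B\to AB}$ supported on $AB$ with
\[
\rho \approx \mathcal{R}_{B\to AB}\bigl(\operatorname{tr}_A \rho\bigr),
\]
so that $A$ can be traced out and regenerated locally. The proof runs the reversal gates $\tilde{\mathcal{E}}_{\ell,x}$ backwards: conjugating the trace over $A$ through the circuit leaves the region affected only inside the lightcone of $A$. Inside that lightcone I undo the gates using the reversal channels, reset the region to $\ket{0}^{\otimes}$, and then re-apply the forward gates. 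The composite acts only on the $O(cd)$-neighbourhood of $A$. Its error is a sum of the per-gate reversibility errors, which the definition controls.

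\textbf{Step 2: a $(k+1)$-layer generation circuit.} I will colour the $k$-dimensional lattice into $k+1$ classes of well-separated cells, with each cell's $(cd)$-neighbourhood disjoint from those of other cells in its class. The coarsening is done hierarchically, by dimension of the cell skeleton: vertices, then edges, and so on, up to the bulk $k$-cells. At each scale the buffer must enclose the previous one, so cell sizes grow geometrically and the gate size becomes $O(cd)^k$. Starting from $\ket{0}\!\bra{0}^{\otimes n}$, layer $j$ applies, in parallel over the cells of class $j$, the recovery map of Step 1 that regenerates the cell from its already-prepared boundary. Going backwards, this is a telescoping sequence of local erasures, $\rho \to \operatorname{tr}_{\text{class }k+1}\rho \to \cdots$. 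Step 1 bounds each step's error, and these errors add in trace distance. Choosing per-gate error $\varepsilon/n$ gives total error $\varepsilon$.

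\textbf{Step 3: learning.} I will estimate all marginals on regions of size $c' = O(cd)^k$ with classical shadows, or with local tomography, to accuracy $\varepsilon/\mathrm{poly}(n)$. This costs $\mathrm{poly}(n, 2^{c'}, 1/\varepsilon, \log(1/\delta))$ samples. For each gate I then search for a channel on the $2^{O(c')}$-dimensional space whose action on the learned boundary marginal matches the learned cell marginal. An explicit choice is the Petz map built from the estimated marginals; a semidefinite program minimising the local trace distance is the alternative. Existence from Step 1 guarantees that a good feasible solution exists, and the SDP finds one in $\mathrm{poly}(2^{c'})$ time. The remaining issue is converting local closeness into global closeness. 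Marginal agreement on a cell alone does not imply global correctness, so I will use a recursive hybrid argument. At each layer the relevant quantity is the action of the learned channel on the true partially-erased state. That state agrees with the learned marginal on the gate's support, and data processing lets the error be charged locally.

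\textbf{Main obstacle.} I expect Step 1 to be hardest: extracting a bona fide local recovery channel for an \emph{arbitrary} region from gate-by-gate reversibility. The reversal maps undo only single gates in sequence, and erasing $A$ interacts with gates on the lightcone boundary that cannot be fully undone. My first attempt will be to undo only the gates in the backward lightcone of $A$ and to treat the boundary gates as part of the buffer. If this fails, I will fall back on establishing approximate Markovianity (small CMI) and then invoke the Fawzi--Renner recovery theorem.
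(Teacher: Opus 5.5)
\textbf{Gap in Step 2.} Only layer $k+1$ fills holes. In layers $2,\dots,k$, every new cell borders cells of higher class that are still unprepared. Your layer-$j$ step $\mathrm{Tr}_{\text{classes}\geq j}\rho\mapsto\mathrm{Tr}_{\text{classes}>j}\rho$ therefore needs a recovery map whose buffer would partly lie in the traced-out region. Step 1 does not provide this: in its marginal form (Corollary \ref{cor:AM}) it only lets you trace out regions at distance $\geq 2s$ from the regenerated cell.

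This is not just a missing proof; the intermediate targets can be unreachable. In 2D, place random bits $r_1,\dots,r_m$ just inside a face cell along its boundary, and write $a_j=r_j\oplus r_{j+1}$ (indices mod $m$) into fresh bits just outside it. This constant-depth $3$-local circuit is exactly locally reversible: randomizations are reversed by resets, and the XOR gates are self-inverse. Its $1$-skeleton marginal has $\bigoplus_j a_j=0$, while every proper arc of the ring is uniform. Here $\mathrm{Tr}_{\text{classes}\geq 2}\rho$ is a product state and layer $2$ consists of channels with disjoint supports. So the ring bits come out in independent blocks, each a proper arc. If the output is $\delta$-close to the target, some block's parity is $\delta$-close to uniform, hence so is the total parity, which forces $\delta=\Omega(1)$.

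The missing idea is local extendibility (Theorem \ref{thm:LE}). The connecting maps $\Phi_{BE\to BC}$ of layers $2,\dots,k$ \emph{discard} the part $E$ of the prepared region lying next to the still-unprepared region $D$, with $\mathrm{dist}(A,C)\geq 2s$ and $\mathrm{dist}(B,D)\geq 2s$. The prepared regions $S_i=(S_{i-1}\setminus E_i)\cup C_i$ are then not monotone, and only the last layer uses recovery maps. Existence follows by inverting on $BE$, resetting $CD(s)\setminus A$, re-applying $\mathcal{B}_{BC}\setminus\mathcal{B}_A$, and tracing out $DE$. The retreat absorbs exactly the damage caused by having no access to $D$.

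For Step 1 itself, the working mechanism is not to undo the lightcone of the erased region, where reversibility is no longer guaranteed. Instead, invert on the buffer only the gates outside the backward lightcone of the erased and far regions (Lemma \ref{lem:LI}). This commutes with the erasure and leaves two decoupled lightcones, so one resets and re-applies, as in Theorem \ref{thm:AM}.

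\textbf{Gap in Step 3.} Data processing turns global closeness into local closeness, not the reverse, so the error cannot be ``charged locally''. A channel can reproduce the learned marginal on its own support exactly while destroying all correlations between $B$ and the rest of the lattice: discard $B$ and prepare a fresh copy of $\rho_{BC}$. So an SDP that only matches marginals on the gate's support does not control $\|\mathcal{W}'_i(\rho_{S_{i-1}})-\rho_{S_i}\|_1$, which your hybrid argument needs. The paper fits each map on an enlarged region $A_{\mathrm{in}}BC$ (resp.\ $A_{\mathrm{in}}BCDE$) whose buffer $A_{\mathrm{in}}$ has width $2s$. It then lifts the bound to the whole lattice with a recovery map $\Psi_{A_{\mathrm{in}}\to A}$ that commutes with the learned map (Lemma \ref{lem:learning_locally}), at cost $2\varepsilon_{\mathrm{LI}}$. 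A twirled Petz map would give a global guarantee for hole-filling steps once small CMI is established, but there is no closed-form analogue for extension maps.
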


Our result claims that learning and generating arbitrary trivial phase mixed states is efficient in the sense of both sample and computational complexity.
In particular, if $c = O(1)$ and $d = O(1)$, then our Theorem \ref{thm:main_informal} asserts that we can use polynomial time and samples to learn and generate $\rho$. If both $c = \mathrm{polylog}(n)$ and $d = \mathrm{polylog}(n)$, then $(c d)^k = \mathrm{polylog}(n)$ and $2^{(c d)^k}$ is a quasi-polynomial in $n$.
In quantum many body physics, the gate size of preparation circuit is typically assumed to scale as either $O(1)$ or $\mathrm{polylog}(n)$ \cite{coser_2019_classification, sang_2025_statbility}, and the parameters $c$ and $c'=O(c d)^k$ in Theorem \ref{thm:main_informal} satisfies the requirement.
Here, we allow a much larger channel gate size $c'$ in the generation circuit $\mathcal{W}$. In fact, we can show that each gate in $\mathcal{W}$ can be further decomposed with at most $(2d + 2)$ layers of gates, each of size at most $c$ (see Theorem \ref{thm:AM} and Theorem \ref{thm:LE}).
Therefore, if the circuit structure of preparation circuit $\mathcal{E}$ is known and $c=O(1)$, we can apply an epsilon-net searching to learn the gates in $\mathcal{W}$.
We will give a more detailed comment in Section \ref{sec:analysis}.

When taking all channels in our setting to be unitary gates, our result immediately resembles the efficient learning and generation algorithm given in Ref.\,\cite{kim_2024_learning}, where $\rho = \ket{\psi}\!\bra{\psi}$ with $\ket{\psi} = U \ket{0}^{\otimes n}$ for some shallow unitary circuit $U$.
Furthermore, the time and sample complexity scalings stated above are as good as the ones in the case of trivial phase pure state learning \cite{landau_2025_learning}.

An important feature of our result is that it relies only on the existence of a shallow local preparation path that preserves local reversibility. Concretely, suppose there exists a shallow local channel circuit $\mathcal{E}$ that prepares $\rho$ from the product state $\rho_0 = \ket{0}\!\bra{0}^{\otimes n}$ and such that all intermediate states along this preparation satisfy local reversibility. Then our algorithm, given only measurement access to $\rho$, outputs a (possibly different) shallow local generation circuit $\mathcal{W}$ that approximately generates $\rho$ (Figure \ref{fig:schematic}b). In particular, the learner is not given $\mathcal{E}$ or any description (for example, the circuit structure) of the preparation circuit.

We emphasize that our guarantee does not require all preparation paths to preserve local reversibility.
It is possible that there exists another shallow circuit $\mathcal{E}'$ preparing a trivial phase mixed state $\rho$ that passes through intermediate states violating local reversibility \cite{chen_2010_local, wu_2012_phase, hu_2025_local, kumar_2026_unlearnable}.
% -- for example, a uniform mixture over all closed loop configurations with trivial homology on a torus.
Our result is unaffected by such an alternative preparation path: the learning algorithm succeeds as long as there exists at least one transition-free path. Thus, learnability depends only on membership in the trivial phase, rather than on a specific preparation history.

% Our algorithm is based on two structural properties of such states: (i) \textit{approximate Markovianity}, a well-established consequence of short-range correlations; and (ii) \textit{local extendibility}, initially proposed in Ref.\,\cite{kim_2026_topological}, and was first leveraged to learning different phases of pure states in Ref.\,\cite{kim_2024_learning}.
% We then develop a learning procedure that leverages these two properties to iteratively construct a shallow channel circuit approximating the unknown state. We will give a more detailed outline of our techniques in Section \ref{sec:technical_overview}.

\subsection{Applications}
\label{sec:applications}

\textbf{Quantum generative models}. 
Quantum generative models aim to use quantum devices to generate target quantum states whose measurement outcomes obey probability distributions that are difficult to sample from classically \cite{bharti_2022_noisy}.
Prior work has shown that shallow unitary quantum circuits may offer generative advantages, under standard computational assumptions \cite{bremner_2010_classical, gao_2017_quantum, bermejovega_2018_architectures}, which can be leveraged to realize certain quantum generative models \cite{huang_2025_generative}.
% Some recent work leverages these results to encode an unknown distribution into a pure state, which can be efficiently learned and generated by using a shallow unitary quantum circuit, but it is hard to generate in a classical computer \cite{huang_2025_generative}.

% Even though most existing works focus on pure states generated by noiseless shallow unitary circuits, near-term quantum devices are inherently noisy and are more accurately modeled by shallow channel circuits rather than unitary circuits. 
Mixed states provide greater flexibility for encoding probability distributions and modeling stochastic processes.
One typical example of mixed state quantum generative models is the \textit{quantum diffusion model} \cite{zhang_2024_generative, hu_2025_local, liu_2025_measurement}. 
Quantum diffusion models generalize classical diffusion models by defining a forward noise (diffusion) process that gradually transforms the target state into a simple product state, together with a learned reverse (denoising) process that approximately reverses this evolution.
However, efficient learning of the reverse dynamics typically relies on a strong assumption: the learner is given explicit knowledge of a forward diffusion path that is guaranteed to avoid any phase transition \cite{hu_2025_local, liu_2025_measurement}.
In realistic generative settings, the target state is observed only through samples or measurement access, and whether the underlying preparation path exhibits a transition or not is unknown.

Our results provide a novel paradigm for quantum generative models. 
For states in the trivial phase, explicit knowledge of the forward noise path is unnecessary. It suffices that there exists a shallow local channel circuit preparing the state from a product state, without exhaustively searching for all possible phase-transition-free preparation circuits.
In contrast to prior approaches using Haar random unitaries as specific diffusion paths, whose complexity scales exponentially in system size \cite{zhang_2024_generative}, our algorithm runs in low time and sample complexity in the number of qubits.

\textbf{Channel circuit compilation}.
Our results provide a method for simplifying the preparation of mixed states by compiling potentially complex generation procedures into shallow local channel circuits.
This perspective is particularly relevant in the context of quantum Gibbs sampling.
Under rapid mixing assumptions, the target Gibbs state can be prepared by evolving an initial state under a local Lindbladian until convergence \cite{chen_2025_efficient}.
While such dynamics are efficient in principle, directly simulating the continuous-time evolution may still be computationally costly in practice.
In contrast, one can approximate the local behavior of the dynamics on small patches and then apply our learning algorithm to reconstruct a global generation circuit.
The resulting circuit has depth scaling only linearly with the lattice dimension $k$, offering a potentially significant reduction in implementation complexity.

% https://arxiv.org/abs/1803.00095
% https://arxiv.org/abs/2601.03426

\textbf{Learning phases of matter}. 
Circuit shallowness and gate locality play important roles in the field of topological order for pure states \cite{wen_1989_vacuum, wen_1990_topological, wen_2007_quantum, kitaev_2003_fault, chen_2010_local, haah_2011_local, haah_2016_invariant}, which was recently generalized to mixed state phases \cite{sang_2024_mixed, sang_2025_statbility}.
Our learning algorithm applies to arbitrary mixed states in the trivial phase. An important refinement of this class arises when additional symmetry constraints are imposed, leading to \textit{symmetry-protected topological} (SPT) mixed states \cite{coser_2019_classification, ma_2023_average, deGroot_2022_symmetry, ma_2025_symmetry}.
Let $G$ be a symmetry group with unitary representation $\{U_g\}_{g \in G}$ acting on the system. 
Two trivial phase mixed states $\rho$ and $\sigma$ are said to be in the same \textit{(weakly)-SPT} phase if $\rho = \mathcal{E}(\sigma)$ with a local channel circuit $\mathcal{E}$ that respects the symmetry action. Concretely, each channel gate $\mathcal{G}$ in the circuit satisfies the covariance condition $\mathcal{G}(U_g X U_g^\dagger) = U_g \mathcal{G}(X) U_g^\dagger$ for all $g \in G$.
Such symmetry constraints enrich the structure of trivial phase mixed states and play an important role in the study of symmetry-protected phases and phenomena such as strong-to-weak spontaneous symmetry breaking (SWSSB) \cite{lee_2025_symmetry, lee_2023_quantum, lessa_2025_strong, kuno_2025_strong}.
Our results imply that, given copies of an unknown SPT mixed state, we can always output a shallow local channel circuit that generates this state. This provides a systematic procedure for learning low-complexity generative descriptions of SPT mixed phases.

% Our results imply that these states can also be efficiently learned from measurement access alone. In particular, given copies of an unknown SPT mixed state, our algorithm reconstructs a shallow local channel circuit that generates the state, without requiring any prior knowledge of the underlying preparation dynamics. This provides a systematic procedure for learning low-complexity generative descriptions of SPT mixed phases.

\textbf{One-way test of trivial phase}. Our learning algorithm succeeds for all mixed states in the trivial phase.
This property naturally leads to a testing procedure for trivial phase mixed states: given measurement access to an unknown state $\rho$, we run the learning algorithm and check whether it produces a generation circuit. If the algorithm fails to produce such a circuit, this certifies that $\rho$ cannot be generated by any shallow local channel circuit satisfying local reversibility, and therefore $\rho$ does not belong to the trivial phase.

The guarantee is inherently one-way: if the algorithm succeeds, we cannot formally conclude that the state must lie in the trivial phase, since technically the learned generation circuit $\mathcal{W}$ may not satisfy local reversibility (see Figure \ref{fig:schematic}b).
Nevertheless, our algorithm can still effectively rule out trivial phase structures for many states of interest. For example, in quantum error-correcting codes, states corresponding to codes below the noise threshold are expected to violate the short-range entanglement constraints in our trivial phase mixed state learning algorithm \cite{kitaev_2003_fault, kim_2013_long, sang_2025_statbility}, and thus can be detected through failure of the learning procedure.

\subsection{Related works}
\label{sec:related_works}

\textbf{Quantum learning theory}.
In the pure state setting, several works have established efficient learning guarantees for states generated by shallow unitary circuits \cite{landau_2025_learning, kim_2024_learning}.
% In particular, Huang et al.~\cite{huang_2024_learning} showed that shallow quantum circuits can be efficiently learned. 
% Subsequent works developed algorithms for learning pure states prepared by shallow local unitary circuits \cite{landau_2025_learning, kim_2024_learning}.
In addition to learning quantum states, Ref.\,\cite{huang_2024_learning} also provided an algorithm that can learn a shallow quantum circuit given query access.
The more recent algorithm using this query model can further learn a unitary constructed out of a quantum circuit with depth double-logarithmic in the system size and an arbitrary Clifford circuit in any sequence~\cite{grewal2025query}.
Beyond this depth, it is known that learning circuits with depth logarithmic in the system size are hard in the worst case~\cite{huang_2024_learning}, and learning random circuits with depth super-logarithmic in the system size are hard under learning theory assumptions~\cite{nietner2025average,chen2025information} or cryptographic assumptions~\cite{fefferman2025hardness,schuster2025random}.

In contrast, the mixed state setting is less well understood. 
For general unstructured mixed states, learning is information-theoretically as hard as learning arbitrary pure states using random purification methods~\cite{pelecanos_2025_mixed}.
For structured mixed states, some positive results are known. 
% The steady states of local Lindbladians allow for efficient reconstruction of the Lindbladians \cite{bairey_2020_learning}.
In one dimension, quantum states admitting a matrix product operator (MPO) representation with constant bond dimension can be efficiently reconstructed from local measurements~\cite{votto_2025_learning}.
Moreover, an efficient algorithm to find the closest matrix product state is also known for one-dimensional quantum states that have constant fidelity with some matrix product states with bond dimensions scaling polynomially in the system size~\cite{bakshi2025learning}.
However, it remains unclear how to efficiently convert such descriptions into explicit shallow channel circuits that generate the states.

% \textbf{Topological phases for mixed states}. 
% Ref.\,\cite{ma_2025_circuit} constructs explicit shallow local channel circuits with local reversibility that connect these two Gibbs states of the same \textit{known} Hamiltonian, which lie in the same mixed state phase (not necessarily a trivial phase). This result has also been leveraged in the study of self-correcting quantum memories and related fault-tolerant constructions.

\textbf{Classical diffusion models}. Our structural framework also admits a natural classical limit. When restricted to commuting (diagonal) states, quantum channel gates reduce to classical noisy channel gates.
Under this limit, the preparation and its recovery process in our work exactly reduces to the denoising and diffusion process in the well-known classical diffusion models \cite{sohl_2015_deep, song_2019_generative, song_2021_sde, hu_2025_local}.
In particular, diffusion models typically begin from a simple product distribution (e.g., pixel-wise independent noise), which lies in the classical analogue of the trivial phase. Our results imply that any classical distribution in the trivial phase can be efficiently learned and generated via local noisy channels, with low overhead of sample and computational complexity.

\textbf{Unlearnability results}. 
Prior work has also investigated the problem of recognizing phases of matter from data.
In particular, it has been shown that for phases defined via shallow local circuits, either pure or mixed, identifying the phase of a state can require time exponential in the correlation length \cite{schuster_2025_hardness}.
On the other hand, from the perspective of state reconstruction, Ref.\,\cite{kumar_2026_unlearnable} shows that states exhibiting long-range CMI are generically hard to learn.
These results highlight intrinsic barriers to efficient learning in the absence of additional structural assumptions. 
% In contrast, our work shows that when local reversibility holds, mixed states in the trivial phase admit efficient learning algorithms.

\subsection{Discussions}

Our algorithm can be interpreted as providing only a one-way test for the trivial phase mixed state.
% successful output of the algorithm does not formally imply that the state must belong to the trivial phase.
An interesting direction for future work is to refine the constructive framework developed in this paper to obtain a two-way test that can conclusively determine whether a given mixed state belongs to the trivial phase.

More broadly, our work raises the question of identifying the fundamental limits of learnability for quantum states. While trivial phase mixed states admit efficient learning algorithms, many physically relevant states exhibit long-range correlations that may obstruct efficient reconstruction. It would be desirable to develop general criteria that characterize which classes of quantum states are inherently hard to learn from measurement data, and how such hardness relates to structural properties such as entanglement range, conditional mutual information, or phase structure.

In the application of quantum generative models, although our algorithm provides an efficient method for learning a quantum state that encodes a desired distribution, it remains unclear how to systematically encode classical information into quantum states and how to select appropriate measurement bases to sample from the target distribution. Developing methods for encoding classical data and extracting samples in this framework is an interesting direction for future work.

Finally, it is natural to ask whether the learning framework developed here can be extended beyond the trivial phase to more general topological mixed-state phases. Achieving this would have important consequences for quantum information processing. In particular, efficient learning of generative descriptions for such phases may lead to efficient decoding procedures for quantum error-correcting codes if the physical noise level remains below the error-correction threshold.

% \textbf{Note added}. During the preparation of this paper, we are aware of two forthcoming works that study local extendibility in a more systematic manner in the context of the entanglement bootstrap \cite{kim_2026_topological, sang_2026}.
% Their results establish the generic structural properties of locally extendible states, which are complementary to the learning perspective considered in this work.

% \section{Summary of Results}
% \label{sec:summary}

\section{Technical Overview}
\label{sec:technical_overview}

\subsection{Trivial phase mixed states via local reversibility}
\label{sec:problem_setting}

\begin{figure}[t]
    \centering    
    \includegraphics[width=0.5\linewidth]{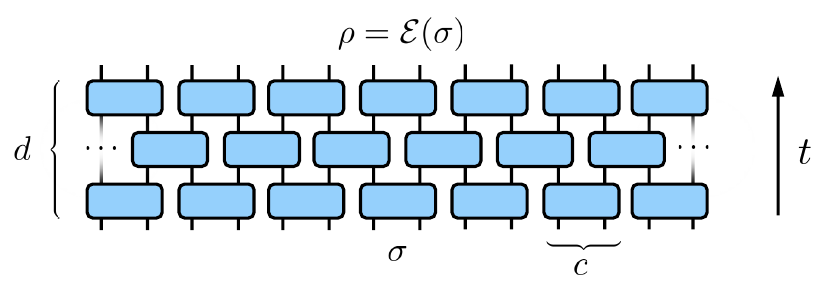}
    \caption{
    Schematic of the general circuit setting. We consider a channel circuit $\mathcal{E}$ of depth $d$, where each gate acts on at most $c$ qubits. The resulting lightcone extends over a region of radius at most $s = c d$.
    }
    \label{fig:circuit}
\end{figure}

We suppose a $k$-dimensional lattice $\Lambda$ with linear size $L$ and qubit number $n = O(L^k)$.
In the traditional study of phases of matter for quantum pure states, two states $\ket{\psi}$ and $\ket{\phi}$ are in the same phase if they are connected by a shallow (\textit{i.e.}, constant or $\mathrm{polylog} (n)$ depth) local unitary gate circuit $U$, namely $\ket{\psi} = U \ket{\phi}$ \cite{chen_2010_local}.
Consequently, the trivial phase is defined as the collection of states $\ket{\psi} = U \ket{0}^{\otimes n}$ that is prepared by a constant or $\mathrm{polylog} (n)$ depth local unitary gate circuit $U$. This definition is consistent, since $\ket{\psi} = U \ket{\phi}$ always automatically implies that $\ket{\phi} = U^{\dagger} \ket{\psi}$ and $U^{\dagger}$ is also a shallow and local circuit.

Now we consider a mixed state $\rho$ prepared by a shallow circuit with $d$ layers of local channel gates with trivial state $\rho_0 = \ket{0} \! \bra{0}^{\otimes n}$ as the input state:
\begin{equation}
    \rho =\mathcal{E} (\rho_0), \quad \mathcal{E}=\mathcal{E}_d \circ \cdots \circ \mathcal{E}_2 \circ \mathcal{E}_1,
\end{equation}
where each $\mathcal{E}_{\ell} = \prod_x \mathcal{E}_{\ell, x}$ is a layer of non-overlapping (quasi)-local channel gates $\mathcal{E}_{\ell, x}$ and $x$ is the position label of $\mathcal{E}_{\ell, x} $ (see Figure \ref{fig:circuit}). 
Here, \textit{shallow} means the depth of the circuit is constant $d=O(1)$ or polylogarithmic $d = \mathrm{polylog} (n)$; \textit{local} means the size of support of each $\mathcal{E}_{\ell, x}$ is at most $c = O(1)$; \textit{quasi-local} means the size of support of each $\mathcal{E}_{\ell, x}$ is at most $c = \mathrm{polylog} (n)$.
Since our main theorem does not distinguish between $c = O(1)$ and $c = \mathrm{polylog} (n)$, we will just use the terminology ``local'' to refer to both of these two cases. 

\begin{figure}[t]
    \centering
    \includegraphics[width=\linewidth]{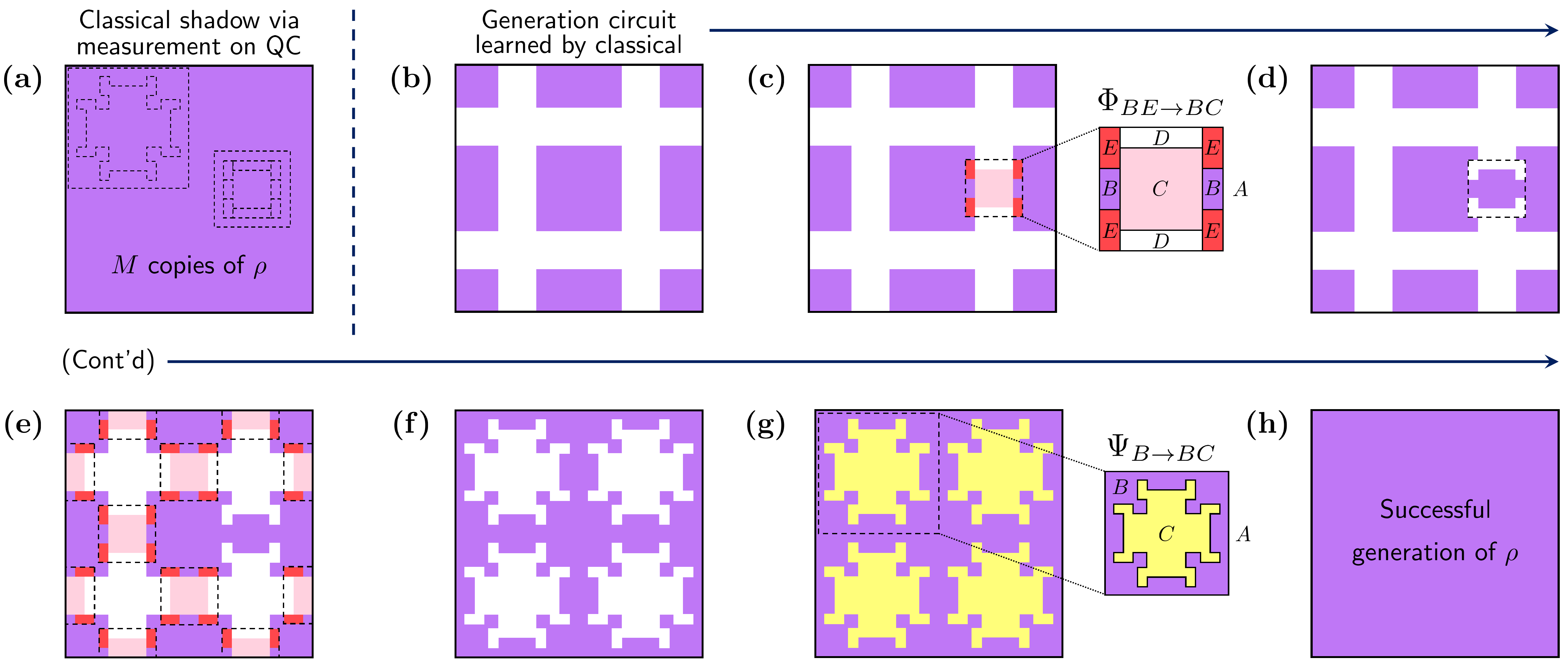}
    \caption{
    Learning scheme of a 2D trivial phase mixed state
    (generalized from Figure 13 of Ref.\,\cite{kim_2024_learning}).
    (a) Classical shadow by measurement for copies of $\rho$ on a quantum computer (QC). The snapshots of classical shadow on different local regions will be repeatedly used for learning and generation.
    (b-h) Learning and generating $\rho$ by using a $3$-layer circuit.
    (b) In layer 1, we learn and generate all local states with spatial support in purple.
    (c-f) Schematic of layer 2. All channels in layer 2 are local extension maps $\Phi_{BE \to BC}$ (Section \ref{sec:TO-LE}), exemplified in (c), where we learn new regions $C$ and discard $E$, leading to the state in (d).
    All local extension maps in layer 2 act in parallel, shown in (e), and the learned state has spatial support on (f).
    (g) In layer 3, we extend to the remaining hole-shaped region $C$ by using a local recovery map $\Psi_{B \to BC}$ supported on $BC$ (Section \ref{sec:TO-AM}). (h) We obtain the overall state $\rho$.
    All channels act locally and are efficiently learnable from classical shadow snapshots (Section \ref{sec:learning_locally}).
    }
    \label{fig:covering}
\end{figure}

For the phase of mixed states, the generalization of phase definition is not straightforward because if one can transfer from one mixed state $\sigma$ to another mixed state $\rho$ using some local shallow channel circuit $\mathcal{E} (\sigma) = \rho$, it is not obvious under what circumstance that there exists a local shallow channel circuit that transform from $\rho$ to $\sigma$.
The following observation for pure state phase enables us to make a reasonable generalization of phase definition for mixed states: if $\ket{\psi} = U \ket{\phi}$ for some shallow unitary circuits $U = U_d \cdots U_2 U_1$ where each $U_{\ell} = \prod_x U_{\ell, x}$ is a layer of non-overlapping local unitary gates $U_{\ell, x}$, then one can recover $\ket{\phi}$ from $\ket{\psi}$ by applying $\{ U^{\dagger}_{\ell, x} \}$ gate-by-gate from $\ell = d$ to $\ell = 1$. This motivates the following phase definition for mixed states: for any $\rho$ and $\sigma$, we say they are in the same phase if and only if $\rho =\mathcal{E} (\sigma)$ with local and shallow $\mathcal{E}$, and one can approximately recovery $\sigma$ by applying a sequence of local channel gates that cancel the effect of $\{ \mathcal{E}_{\ell, x} \}$ gate-by-gate. More specifically, for any $\ell, x$, there exists $\tilde{\mathcal{E}}_{\ell, x}$ with same support as $\mathcal{E}_{\ell, x}$ such that the following \textit{local reversibility} holds (see Definition \ref{def:LR} formally):
\begin{equation}
    \tilde{\mathcal{E}}_{\ell, x} \circ \mathcal{E}_{\ell, x} \circ \mathcal{E}_{\leq \ell - 1} (\sigma) \approx \mathcal{E}_{\leq \ell - 1} (\sigma),
\end{equation}
where $\mathcal{E}_{\leq \ell - 1} := \mathcal{E}_{\ell - 1} \circ \cdots \circ \mathcal{E}_2 \circ \mathcal{E}_1$ is the first $\ell - 1$ layers of $\mathcal{E}$ (see Figure \ref{fig:LR}).
We emphasize that $\tilde{\mathcal{E}}_{\ell, x} \circ \mathcal{E}_{\ell, x}$ is not necessarily an identity map, but it approximately stabilizes the particular state $\mathcal{E}_{\leq \ell - 1} (\sigma)$.
A phase transition occurs when such a $\tilde{\mathcal{E}}_{\ell, x}$ does not exists for some gate $\mathcal{E}_{\ell, x}$.
We will give a formal definition of local reversibility mixed state phases in Section \ref{sec:MSP_via_LR}. 
Especially, a \textit{trivial phase mixed state} is a state in the same phase as the trivial product state $\ket{0} \! \bra{0}^{\otimes n}$.

\begin{figure}[t]
    \centering    
    \includegraphics[width=0.95\linewidth]{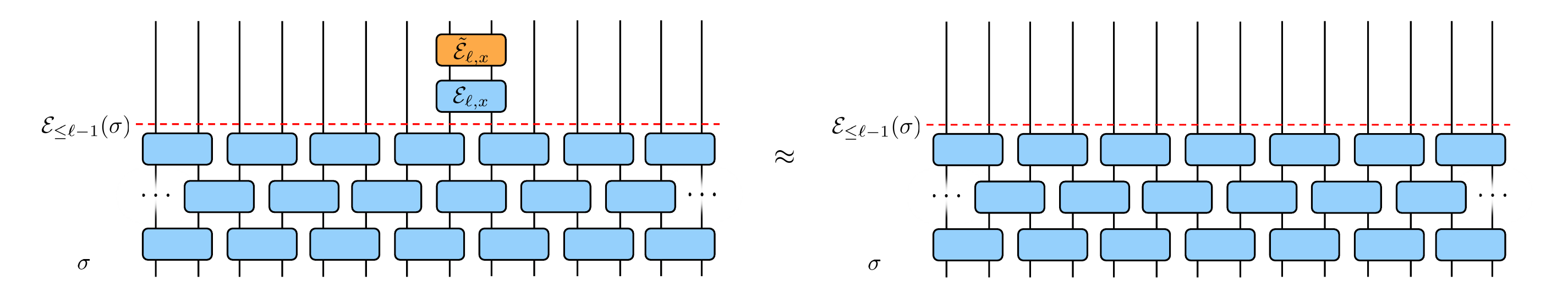}
    \caption{
    Schematic for local reversibility: any channel gate $\mathcal{E}_{\ell, x}$ can always be effectively canceled by some other channel gate $\tilde{\mathcal{E}}_{\ell, x}$.
    }
    \label{fig:LR}
\end{figure}

\subsection{Learning local systems}
\label{sec:local_tomography}

A natural starting point for learning a many-body mixed state is to reconstruct its local marginal $\rho_S$ in subsystems $S$. For local regions, this task is information-theoretically and computationally tractable: given polynomially many samples, one can estimate each local density matrix to constant 
accuracy using standard tomography procedures (for example, classical shadows \cite{huang_2020_predicting}).

However, learning all local marginals does not by itself yield a global state. The fundamental obstacle is a global consistency problem. An arbitrary collection of locally learned density matrices $\{ \rho_{S_1}, \rho_{S_2}, \cdots \}$ need not arise as the marginals of any single global mixed state.
Even if the reduced density matrices of every pair of overlapping regions are consistently the same, it is still hard to find a global state that simultaneously realizes all of them. 
This phenomenon is closely related to the quantum marginal problem, which is known to be as hard as the local Hamiltonian problem in general \cite{liu_2006_consistency}. Thus, the central challenge is not local learning, but how to sew together local pieces into a single globally consistent mixed state.

A naive approach would attempt to solve a global consistency program over all qubits simultaneously. Such a formulation quickly becomes computationally infeasible, as the global density matrix lives in a Hilbert space of dimension exponential in the system size $n$. The key difficulty is therefore to enforce global compatibility using only local information and local computations.

Our algorithm overcomes this obstacle by exploiting structural properties of trivial phase mixed states. 
These states are not arbitrary; we will see that they satisfy locality constraints that ensure global correlations are mediated through a short sequence of local correlations.
As a consequence, global consistency can be reduced to a sequence of local consistency. 
The remainder of this section explains the structural properties that make such a stitching procedure possible via two important properties of trivial phase mixed states: \textit{approximate markovianity} and \textit{local extendibility}.

\subsection{Approximate Markovianity}
\label{sec:TO-AM}

The first key structural property is the approximate Markovianity.
Informally, a tripartite state on regions $\Lambda = A B C$ is approximately Markovian if the correlations between $A B C$ are mediated almost entirely through the separator region $B$. Equivalently, conditioned on $B$, the regions $A$ and $C$ are nearly independent.

In quantum information theory, this property is quantified by the conditional mutual information (CMI) $I (A : C |  B) = S (A B) + S (B C) - S (B) - S (A B C)$. Exact vanishing of CMI $I (A : C |  B) = 0$ characterizes quantum Markov chains; small CMI implies the existence of a \textit{local recovery map} $\Psi_{B \rightarrow B C}$ that approximately recovers the global state (see Definition \ref{def:AM} formally):
\begin{equation}
    \Psi_{B \rightarrow B C} (\rho_{A B}) \approx \rho_{A B C} .
\end{equation}
A typical partition $\Lambda = A B C$ is given in Figure \ref{fig:partition}a. Under such a partition, as long as we have the knowledge of $\rho_{B C}$, even if we discard the subsystem $C$, the global state can still be recovered from the annulus-shaped $B$, without knowing any further information about $A$ or implementing any operation acting on $A$.

In Section \ref{sec:proof_AM}, we establish that trivial phase mixed states satisfy approximate Markovianity by explicitly constructing a local recovery map $\Psi_{B \rightarrow B C}$ from $B$ to $B C$ with controlled error (see Theorem \ref{thm:AM}).
Importantly, constructing an approximate local recovery map is algorithmically efficient. The definition of approximate Markovianity only involves the  $\rho_{BC}$ and $\rho_C$, which act on constant-size regions under our locality assumptions. 
Once the reduced density matrices $\rho_{B}$ and $\rho_{BC}$ are estimated to sufficient accuracy, one can compute a recovery map $\Psi_{B \to BC}$ by solving a local optimization problem \cite{berta_2016_fidelity}.
Moreover, the \textit{twirled Petz map} \cite{fawzi_2015_quantum, junge_2018_universal, mark_2016_quantum} admits a closed-form expression in terms of $\rho_{BC}$ and $\rho_C$, and is guaranteed to achieve near-optimal recovery, even without solving an optimization problem explicitly (see Appendix \ref{sec:cmi_decay}).

However, approximate Markovianity alone is not sufficient to guarantee that the locally learned marginals can be assembled into a single global state. It ensures that correlations are locally mediated, but it does not by itself provide a constructive mechanism for extending a partially reconstructed region. 
The scenario in Figure \ref{fig:partition}a, where one extends from an annulus-shaped region to fill its hole, is usually not the case when we want to extend two separate local subsystems into a larger, simply-connected component. However, such a connecting extension acting on two separate local subsystems is inevitable in learning the global states. For this reason, we require an additional structural ingredient, local extendibility.

\begin{figure}
    \centering
    \includegraphics[width=0.9\linewidth]{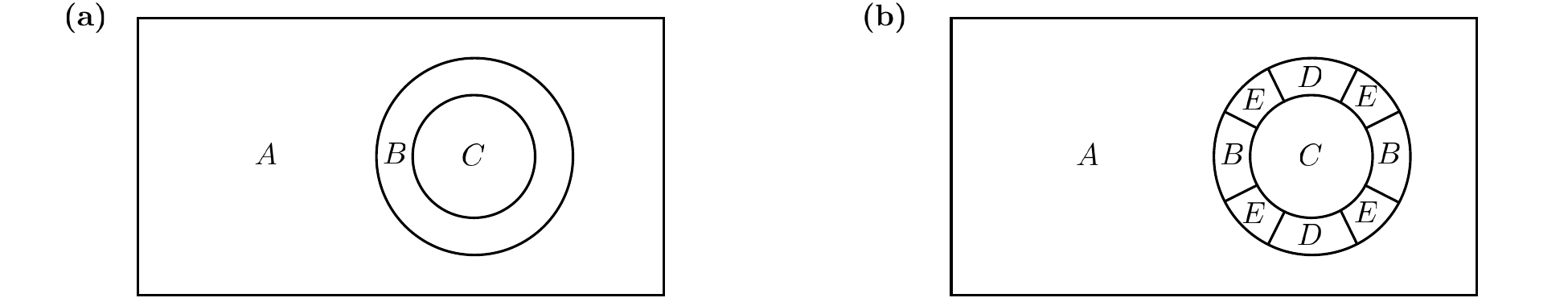}
    \caption{
    (a) Approximate Markovianity under tripartition $\Lambda = A B C$. There exists a channel $\Psi_{B \rightarrow B C}$ such that $\Psi_{B \rightarrow B C} (\rho_{A B}) \approx \rho_{A B C}$.
    (b) Local extendibility under pentapartition $\Lambda = A B C D E$. For $B' = B E$, there exists a channel $\Phi_{B E \rightarrow B C}$ such that $\Phi_{B E \rightarrow B C} (\rho_{A B E}) \approx \rho_{A B C}$. }
    \label{fig:partition}
\end{figure}

\subsection{Local extendibility}
\label{sec:TO-LE}

Local extendibility, a relaxed variant of approximate Markovianity, is first proposed in Ref.\,\cite{kim_2026_topological}, and leveraged to efficiently learn trivial phase pure states \cite{kim_2024_learning}.
The concept of local extendibility is initially introduced to complement the theory of \textit{entanglement bootstrap} \cite{shi_2020_fusion, shi_2021_entanglement, yang_2025_topological}.
Informally speaking, as we introduced, a quantum Markov recovery map is a strictly extensive map $B \rightarrow B C$, locally acting on $B C$; in contrast, a \textit{local extension map} $\Phi_{B' \rightarrow B C}$ is defined as a quantum channel where $B$ is a proper subset of $B'$ (namely $B \subsetneq B'$), locally acting on $B' \to BC$ (see Definition \ref{def:LE} formally):
\begin{equation}
    \Phi_{B' \rightarrow B C} (\rho_{A B'}) \approx \rho_{A B C} .
\end{equation}

The key feature of this definition is that the redundancy $B' \backslash B$ is discardable for extending from $B'$ toward $C$. This ``retreat-to-advance'' mechanism is crucial: it prevents the reconstructed region from accumulating incorrect entanglement as it grows. We will see that this property is particularly powerful when $B'$ is disconnected while $B C$ forms a connected component.
A typical example of local extension is depicted in Figure \ref{fig:partition}b, where $B' = B E$. 
Such configurations arise naturally from the lattice coverings in our reconstruction procedure.
% after learning several separated patches, the current reconstructed region may consist of multiple disconnected components, and the local extendibility property precisely guarantees the connection among these disconnected components.
We will see this in Section \ref{sec:learning_scheme} (Figure \ref{fig:covering}c).

In Section \ref{sec:proof_LE}, we prove that trivial phase mixed states always satisfy local extendibility, by explicitly constructing a local channel $\Phi_{B' \rightarrow B C}$ from $B'$ to $B C$ (with $B \subsetneq B'$) with controlled error (see Theorem \ref{thm:LE}).
Unlike local recovery maps, it is still unknown whether local extension maps have closed-form expressions like twirled Petz maps that can be learned efficiently in generic cases.
Nevertheless, we will show in Section \ref{sec:learning_locally} that when assisted with approximate Markovianity, any local extension map can be learned by efficiently solving a semi-definite programming algorithm.

\subsection{Learning scheme}
\label{sec:learning_scheme}

Combining approximate Markovianity and local extendibility, we obtain a global state reconstruction procedure that proceeds iteratively while maintaining consistency across overlapping regions. Crucially, this construction avoids the exponential blowup that would arise from naively stitching together local marginals.

The core idea is captured by a geometric \textit{covering scheme}.
Informally, consider a $k$-dimensional lattice. We reconstruct the target state $\rho$ starting from the product state $\ket{0}\!\bra{0}^{\otimes n}$ via $k+1$ sequential layers of local channels, gradually extending the support of the state from smaller regions to the entire lattice:

\vspace{0mm}
\begin{itemize}[itemsep=0mm, topsep=0mm, partopsep=0mm]
    \item \textit{Layer 1} (\textit{Initialization}). We generate disjoint local patches whose reduced states match the corresponding marginals of $\rho$.

    \item \textit{Layers 2 through $k$} (\textit{Local Extension}). At each subsequent layer, we apply local extension maps that connect the previously disconnected regions that have already been learned.

    \item \textit{Layer $k+1$} (\textit{Local Recovery}). In the final layer, we apply local recovery maps to reconstruct the remaining regions that are individually simply-connected.
\end{itemize}

After $k+1$ layers, the resulting state approximates $\rho$ in trace distance.
We give a visualization for the special case $k=2$ in Figure \ref{fig:covering}, and leave the formalized description of the covering scheme in Section \ref{sec:cover_scheme}.
Moreover, the sequence of local channels applied across all layers defines a shallow channel circuit $\mathcal{W}$ that generates $\rho$ from $\ket{0}\!\bra{0}^{\otimes n}$. Because each operation acts only on local neighborhoods and each layer consists of parallel local maps, the overall depth of $\mathcal{W}$ depends only on the lattice dimension $k$ (and not on system size $n$). Combined with the efficient learnability of the local recovery and local extension maps established earlier, this yields a reconstruction algorithm with (quasi)-polynomial sample complexity and runtime.

% \begin{figure}
%     \centering
%     \includegraphics[width=0.9\linewidth]{Figure-Covering-Intro.pdf}
%     \caption{
%     The covering scheme for the special case $k=2$. 
%     (a) Layer 1: we learn and generate all local states with spatial support in purple.
%     (b-e) Schematic of layer 2. All channels in layer 2 are local extension maps $\Phi_{BE \to BC}$, exemplified in (b), where we learn new regions $C$ from $BE$ and discard $E$, leading to the state in (c) (see pentapartition $\Lambda = ABCDE$ in Figure \ref{fig:partition}b).
%     All local extension maps in layer 2 act in parallel. Examples of these maps are visualized in (d), encircled by dashed lines.
%     After acting all local extension maps in layer 2, the learned state has spatial support on (e).
%     (c) In layer 3, we simply extend to the remaining hole-shaped region $C$ which is not learned yet by using a local recovery map $\Psi_{B \to BC}$ supported on local $BC$. 
%     All local extension maps and local recovery maps in this Figure act locally and are classical learnable from local classical shadows (Section \ref{sec:learning_locally}).
%     }
%     \label{fig:covering}
% \end{figure}

\subsection{Proof roadmap}
\label{sec:proof_roadmap}

In Section \ref{sec:setup}, we introduce the formal framework used throughout the paper. This includes notation and conventions, the definition of local reversibility, the formal definition of mixed phases and trivial phase mixed states, and the precise formulation of approximate Markovianity and local extendibility. We will also present an important technique called \textit{local inversion} -- arising from the local extendibility property -- at the end of this section. The subsequent proofs of approximate Markovianity and local extendibility will largely rely on this local inversion technique.

In Section \ref{sec:proof_AM} and \ref{sec:proof_LE},
we prove that trivial phase mixed states satisfy approximate Markovianity and local extendibility by showing the \textit{existence} of local recovery and local extension maps, respectively.
% Together, these two structural properties form the backbone of the learning algorithm. 
In the subsequent Section \ref{sec:learning_alg}, we combine these ingredients to analyze the full generation procedure.
We prove the main theorem by providing error and complexity analysis in Section \ref{sec:analysis}, and give the implications of our main result to quantum generative models in Section \ref{sec:application}.

\section{Preliminaries}
\label{sec:setup}

\subsection{Notions and concepts}
\label{sec:notions}

Throughout this paper, we always use $\mathcal{E}$ to denote the channel gate circuit that transform from $\ket{0} \! \bra{0}^{\otimes n}$ to $\rho$. We further assume we know the circuit structure of $\mathcal{E}=\mathcal{E}_d \circ \cdots \circ \mathcal{E}_2 \circ \mathcal{E}_1$ and $\mathcal{E}_{\ell} = \prod_x \mathcal{E}_{\ell, x}$. The support of each local $\mathcal{E}_{\ell, x}$ is at most $c$, and $x$ is the position label of $\mathcal{E}_{\ell, x}$ (see Figure \ref{fig:circuit}).
Here, \textit{quantum channel} (or simply \textit{channel}) is a completely positive trace-preserving (CPTP) map, and we use ``$\circ$'' to denote the map composition.
We always abbreviate $\mathcal{I}_{\Lambda\backslash S} \otimes \mathcal{C}_S$ as $\mathcal{C}_S$ for any channel $\mathcal{C}_S$ acting on $S$, where $\mathcal{I}_{\Lambda \backslash S}$ is the identity channel on $\Lambda\backslash S$.
Terminologically, we always refer to the unknown $\mathcal{E}$ as \textit{preparation channel} for $\rho$, and call the learned channel $\mathcal{W}$ (not necessarily equal to $\mathcal{E}$) in Theorem \ref{thm:main_informal} the \textit{generation channel}.

Some other notions and concepts are listed as follows:

\begin{itemize}
    \item \textbf{Subsystems}. We use $| S |$ as the size of subsystem $S \subseteq \Lambda$, namely the number of qubits in $S$. We abbreviate $\bar{S} = \Lambda \backslash S$. We denote $S (s)$ as the set of qubits that are either in $S$ or are distance at most $s$ away from $S$. 
    We denote $\mathrm{Supp} (\mathcal{C})$ as the support of channel circuit $\mathcal{C}$, namely the collection of qubits that $\mathcal{C}$ exactly acts on.

    \item \textbf{Sub-circuits}. For a channel circuit $\mathcal{C}$, we say it is a sub-circuit of $\mathcal{E}$, if its channel gates are a subset of $\mathcal{E}$, and $\mathcal{C}$ inherits the connection relation of gates between adjacent layers of $\mathcal{E}$. We use $| \mathcal{C} |$ as the size of sub-circuit $\mathcal{C} \subseteq \mathcal{E}$, namely the number of channel gates in $\mathcal{E}$. For any channel gate $\mathcal{E}_{\ell, x}$ of $\mathcal{E}$ and any sub-circuit $\mathcal{C}$, we denote $\mathcal{E}_{\ell, x} \in \mathcal{C}$ if $\mathcal{E}_{\ell, x}$ is in $\mathcal{C}$ and $\mathcal{E}_{\ell, x} \notin \mathcal{C}$ if $\mathcal{E}_{\ell, x}$ is not in $\mathcal{C}$.

    For any two sub-circuit $\mathcal{C}_1$ and $\mathcal{C}_2$, we denote $\mathcal{C}_1 \backslash \mathcal{C}_2$ be the remaining channel gates by removing all the gates that are simultaneously in $\mathcal{C}_1$ and $\mathcal{C}_2$, from $\mathcal{C}_1$: for layer decomposition $\mathcal{C}_1 =\mathcal{C}_{1, d} \circ \cdots \circ \mathcal{C}_{1, 1}$, we define
    \begin{equation}
        \mathcal{C}_1 \backslash \mathcal{C}_2 := \mathcal{Q}_d \circ \cdots \circ \mathcal{Q}_1, \text{\quad with } \mathcal{Q}_{\ell} = \prod_{\mathcal{E}_{\ell, x} \in \mathcal{C}_{1, \ell}, \mathcal{E}_{\ell, x} \notin \mathcal{C}_{2, \ell}} \mathcal{E}_{\ell, x} .
    \end{equation}
    We will also use $\mathcal{E}_{\leq \ell} :=\mathcal{E}_{\ell} \circ \cdots \circ \mathcal{E}_2 \circ \mathcal{E}_1$ to represent the first $\ell$ layers of $\mathcal{E}$.
    
    \item \textbf{Backward lightcones}. We denote $\mathcal{B}_S$ as the \textit{backward lightcone} of $S$, we will give a formal definition of $\mathcal{B}_S$ immediately in Section \ref{sec:backward_lightcone}.

    \item \textbf{Reset channels}. For convenience, we introduce the \textit{reset channel} $\mathcal{R}_S$ on subsystem $S$ such that for any operator $X$,
    \begin{equation}
        \mathcal{R}_S (X) = \mathrm{Tr}_S (X) \otimes \ket{0} \! \bra{0}_S .
    \end{equation}
    An useful property of reset channel $\mathcal{R}_S$ is that, for any channel $\mathcal{G}$ acting on some $S' \subseteq S$, we always have $\mathcal{R}_S \circ \mathcal{G}=\mathcal{R}_S$, because $\mathcal{R}_S \circ \mathcal{G} (X) = \mathrm{Tr}_S (\mathcal{G} (X)) \otimes \ket{0} \! \bra{0}_S = \mathrm{Tr}_S (X) \otimes \ket{0} \! \bra{0}_S =\mathcal{R}_S (X)$ for any operator $X$.

    \item \textbf{Distance measures for states}. Throughout the paper, we measure the distance between mixed states using the Schatten \textit{$1$-norm distance}. For density operators $\rho$ and $\sigma$ on the same Hilbert space, the $1$-norm distance is defined as
    \begin{equation}
        \| \rho - \sigma\|_1 = \mathrm{Tr} \! \left(\sqrt{(\rho - \sigma)^\dagger(\rho - \sigma)}\right).
    \end{equation}
    In quantum information theory, the trace distance is defined as $D_{\mathrm{tr}}(\rho, \sigma) = \frac{1}{2} \| \rho - \sigma\|_1$. 
    We also relate the 1-norm distance to the quantum fidelity, $F(\rho, \sigma) = \| \sqrt{\rho} \sqrt{\sigma} \|_1$.
    The Fuchs-van de Graaf inequalities imply $1-F(\rho, \sigma) \leq \frac{1}{2} \| \rho - \sigma\|_1 \leq \sqrt{1-F(\rho, \sigma)^2}$.

    Throughout the paper, we will repeatedly using the \textit{triangle inequality} $\| \rho - \sigma\|_1 \leq \| \rho - \tau\|_1 + \| \tau - \sigma\|_1$ and the \textit{contractivity} $\| \mathcal{C}(\rho) - \mathcal{C}(\sigma)\|_1 \leq \| \rho - \sigma\|_1$ for any state $\rho, \sigma, \tau$ and channel $\mathcal{C}$.

    \item \textbf{Errors}. There are multiple notions of errors (under 1-norm distance) in this paper. We list them here: $\varepsilon_{\mathrm{LR}}$ is the error of local reversibility for each channel gate,
    $\varepsilon_{\mathrm{LI}}$ is the error of the local inversion for each subsystem, 
    $\varepsilon_{\mathrm{LE}}$ is the error of local extension map,
    $\varepsilon_{\mathrm{LT}}$ is the error of local tomography for each subsystem, 
    and $\varepsilon_{\mathrm{SDP}}$ is the target error of running the semidefinite programming (SDP) algorithm stated in Section \ref{sec:SDP}.

    \item \textbf{Integer Set}. We use $[N]$ to represent the set $\{1, 2, \cdots, N\}$.
\end{itemize}

\subsection{Backward lightcone}
\label{sec:backward_lightcone}

\begin{definition}[Backward lightcone]
    \label{def:backward_lightcone}
    Given the circuit $\mathcal{E}$ and a region $S \subseteq \Lambda$. We say a sub-circuit is a \textnormal{backward lightcone} of $S$ for $\mathcal{E}$, denoted as $\mathcal{B}_S$, if: let $\mathcal{B}_S =\mathcal{B}_{S, d} \circ \cdots \circ \mathcal{B}_{S, 1}$,
    \begin{enumerate}%[label=(\arabic*)]
        \item For any $\mathcal{E}_{d, x} \in \mathcal{B}_{S, d}$, we have $\mathrm{Supp} (\mathcal{E}_{d, x}) \cap S \neq \varnothing$; and and for any $\mathcal{E}_{d, x} \notin \mathcal{C}_d$, we have $\mathrm{Supp} (\mathcal{E}_{d, x}) \cap S = \varnothing$ (\textit{i.e.}, $\mathcal{C}_d$ is the maximal gate set in layer $d$ such that no gate is entirely outside $S$);
    
        \item For any $\mathcal{E}_{\ell, x} \in \mathcal{B}_{S, \ell}$ with $\ell < d$, we have $\mathrm{Supp} (\mathcal{E}_{\ell, x}) \cap (\mathrm{Supp} (\mathcal{B}_{S, d} \circ \cdots \circ \mathcal{B}_{S, \ell + 1}) \cup S) \neq \varnothing$; and for any $\mathcal{E}_{\ell, x} \notin \mathcal{B}_{S,\ell}$, we have $\mathrm{Supp} (\mathcal{E}_{\ell, x}) \cap (\mathrm{Supp} (\mathcal{B}_{S, d} \circ \cdots \circ \mathcal{B}_{S, \ell + 1}) \cup S) = \varnothing$ (\textit{i.e.}, $\mathcal{B}_{S, \ell}$ is the maximal gate set in layer $\ell$ such that no gate is entirely outside $\mathrm{Supp} (\mathcal{B}_{S, d} \circ \cdots \circ \mathcal{B}_{S, \ell + 1}) \cup S$).
    \end{enumerate}
\end{definition}

The definition of backwark lightcone immediately implies $\mathrm{Supp} (\mathcal{B}_S) \subseteq S (s)$ with the \textit{lightcone extension width}, defined as
\begin{equation}
    s = c d. \label{eq:lightcone_radius}
\end{equation}
We have $s=O(1)$ when $c,d = O(1)$ and $s=\mathrm{polylog}(n)$ when $c,d = \mathrm{polylog}(n)$.
It means that for any shallow and local channel circuits, $\mathrm{Supp}(\mathcal{B}_S) \backslash S$ -- the region extended from $S$ for getting its backward lightcone -- is a narrow region whose thickness is at most $s$. 
For this reason, for any initial state $\sigma$ and region $S$, we call 
\begin{equation}
    \mathrm{Tr}_{S \cup \overline{\mathrm{Supp}(\mathcal{B}_S)}} (\mathcal{B}_S (\sigma)) \label{eq:edge_LC}
\end{equation}
the \textit{edge of the backward lightcone}.

We give two simple facts that will be useful in the following constructive proof of approximate Markovianity and local extendibility in Section \ref{sec:proof_AM} and Section \ref{sec:proof_LE}, respectively. The proofs are quite routine, so we leave them in Appendix \ref{sec:backward_lightcone_decomposition}. We give a visualization of these three facts in Figure \ref{fig:lightcone}.

\begin{fact}[Lightcone decomposition]
    \label{fac:LCD-1}
    Given any region $S = S_1 S_2$, then there is a channel $\mathcal{Q}=\mathcal{B}_{S_2} \backslash \mathcal{B}_{S_1}$ such that we can reorganize $\mathcal{B}_S$ into the following lightcone decomposition
    \begin{equation}
        \mathcal{B}_S =\mathcal{Q} \circ \mathcal{B}_{S_1} .
    \end{equation}
    Furthermore, $\mathrm{Supp} (\mathcal{Q}) \subseteq S_2 (s) \backslash S_1$.
\end{fact}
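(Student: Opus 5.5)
The plan is to prove the decomposition layer by layer, using the explicit characterization of backward lightcones in Definition \ref{def:backward_lightcone}. First I will establish a monotonicity property: if $S_1 \subseteq S$, then $\mathcal{B}_{S_1}$ is a sub-circuit of $\mathcal{B}_S$. This goes by downward induction on the layer index $\ell$. At $\ell = d$, any gate meeting $S_1$ also meets $S$. For $\ell < d$, the induction hypothesis gives $\mathrm{Supp}(\mathcal{B}_{S_1, d}\circ\cdots\circ\mathcal{B}_{S_1,\ell+1}) \cup S_1 \subseteq \mathrm{Supp}(\mathcal{B}_{S, d}\circ\cdots\circ\mathcal{B}_{S,\ell+1}) \cup S$. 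Hence every gate selected into $\mathcal{B}_{S_1,\ell}$ is also selected into $\mathcal{B}_{S,\ell}$.

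In the same way I will show that $\mathcal{B}_S$ is exactly the union, as gate sets, of $\mathcal{B}_{S_1}$ and $\mathcal{B}_{S_2}$. The selection criterion is a nonempty intersection with a set. The set for $S$ at layer $\ell$ is the union of the sets for $S_1$ and $S_2$, by induction. So a gate meets the $S$-set if and only if it meets the $S_1$-set or the $S_2$-set.

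Given this, I define $\mathcal{Q} = \mathcal{B}_{S_2}\backslash\mathcal{B}_{S_1}$, layer by layer as in the paper's notation. The gate set of $\mathcal{B}_S$ is then the disjoint union of the gates of $\mathcal{B}_{S_1}$ and the gates of $\mathcal{Q}$.

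The main step is reordering, so that all of $\mathcal{B}_{S_1}$ can be applied before all of $\mathcal{Q}$. Within each layer, gates are non-overlapping and therefore commute. The obstruction is a gate $g \in \mathcal{Q}$ at layer $\ell$ overlapping a gate $h \in \mathcal{B}_{S_1}$ at a later layer $\ell' > \ell$: $g$ must act before $h$. I will rule this out directly. If $h \in \mathcal{B}_{S_1,\ell'}$ with $\ell'>\ell$, then $\mathrm{Supp}(h) \subseteq \mathrm{Supp}(\mathcal{B}_{S_1,d}\circ\cdots\circ\mathcal{B}_{S_1,\ell+1})$. So $g$ intersecting $h$ would force $g \in \mathcal{B}_{S_1,\ell}$ by the maximality clause, which contradicts $g\in\mathcal{Q}$.

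Thus every gate of $\mathcal{Q}$ commutes with, or lies causally after, every gate of $\mathcal{B}_{S_1}$ from later layers. Commuting each layer's $\mathcal{Q}$-gates past the later $\mathcal{B}_{S_1}$-layers yields $\mathcal{B}_S = \mathcal{Q}\circ\mathcal{B}_{S_1}$. I expect this reordering step to be the main point requiring care; the rest is bookkeeping.

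Finally, for the support bound: $\mathrm{Supp}(\mathcal{Q}) \subseteq \mathrm{Supp}(\mathcal{B}_{S_2}) \subseteq S_2(s)$, by the lightcone width bound stated after Definition \ref{def:backward_lightcone}. It remains to show that $\mathcal{Q}$ avoids $S_1$. Any gate of $\mathcal{E}$ at layer $\ell$ intersecting $S_1$ meets the set $\mathrm{Supp}(\mathcal{B}_{S_1,d}\circ\cdots\circ\mathcal{B}_{S_1,\ell+1})\cup S_1$. It therefore belongs to $\mathcal{B}_{S_1}$, so no gate of $\mathcal{Q}$ touches $S_1$. This gives $\mathrm{Supp}(\mathcal{Q}) \subseteq S_2(s)\backslash S_1$.
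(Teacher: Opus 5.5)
Your proposal is correct and follows essentially the same route as the paper. It uses a downward induction to show the gates of $\mathcal{B}_S$ are exactly those of $\mathcal{B}_{S_1}$ together with those of $\mathcal{B}_{S_2}\backslash\mathcal{B}_{S_1}$; the maximality clause of Definition~\ref{def:backward_lightcone} to show each $\mathcal{Q}_\ell$ is disjoint from, and hence commutes with, $\mathcal{B}_{S_1,\ell'}$ for $\ell'\geq\ell$; and the same two-part support argument. Your induction step, that the set $\mathrm{Supp}(\mathcal{B}_{S,d}\circ\cdots\circ\mathcal{B}_{S,\ell+1})\cup S$ is the union of the corresponding sets for $S_1$ and $S_2$, is a slightly cleaner packaging of the paper's two-case argument.
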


\begin{fact}
    \label{fac:LCD-2}
    Given any region $S$, there is a channel $\mathcal{Q}=\mathcal{B}_S \backslash \mathcal{B}_{\bar{S}}$ such that we can reorganize $\mathcal{E}$ into the following lightcone decomposition
    \begin{equation}
        \mathcal{E}=\mathcal{Q} \circ \mathcal{B}_{\bar{S}} .
    \end{equation}
    Furthermore, $\mathrm{Supp} (\mathcal{Q}) \subseteq S$. 
\end{fact}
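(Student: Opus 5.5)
We take $\mathcal{Q} := \mathcal{B}_S \backslash \mathcal{B}_{\bar S}$, where $\mathcal{B}_S$ is now the backward lightcone of $S$, i.e.\ the gates of $\mathcal{E}$ lying in the backward lightcone of $S$ but not in that of $\bar S$. We show that every gate of $\mathcal{E}$ belongs to $\mathcal{B}_{\bar S}$ or to $\mathcal{Q}$ (and not both), and that the gates of $\mathcal{Q}$ can be pushed to the end of the circuit. The second point is the same commutation argument as in Fact~\ref{fac:LCD-1}.

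\textbf{Covering.} Take a gate $\mathcal{E}_{\ell,x} \notin \mathcal{B}_{\bar S}$. We claim its future lightcone avoids $\bar S$. By Definition~\ref{def:backward_lightcone}, $\mathrm{Supp}(\mathcal{E}_{\ell,x})$ is disjoint from $\mathrm{Supp}(\mathcal{B}_{\bar S,d}\circ\cdots\circ\mathcal{B}_{\bar S,\ell+1}) \cup \bar S$, so $\mathrm{Supp}(\mathcal{E}_{\ell,x}) \subseteq S$.

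Now argue by induction on the layer index. Every gate $\mathcal{E}_{\ell',y}$ with $\ell' > \ell$ that overlaps $\mathrm{Supp}(\mathcal{E}_{\ell,x})$ also lies outside $\mathcal{B}_{\bar S}$. Otherwise $\mathrm{Supp}(\mathcal{E}_{\ell',y})$ would be part of $\mathrm{Supp}(\mathcal{B}_{\bar S,d}\circ\cdots\circ\mathcal{B}_{\bar S,\ell+1})$, and maximality of $\mathcal{B}_{\bar S,\ell}$ would force $\mathcal{E}_{\ell,x} \in \mathcal{B}_{\bar S}$, a contradiction. Hence $\mathcal{E}_{\ell',y}$ is also supported in $S$. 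The forward chain of successor gates therefore stays inside $S$ up to layer $d$.

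At layer $d$, a gate supported in $S$ meets $S$, so it lies in $\mathcal{B}_{S,d}$. Going back down the chain with the maximality clause of $\mathcal{B}_S$, each gate of the chain lies in $\mathcal{B}_S$, including $\mathcal{E}_{\ell,x}$. If the chain is empty, $\mathcal{E}_{\ell,x}$ meets $S$ directly and lies in $\mathcal{B}_S$ anyway. Hence every gate outside $\mathcal{B}_{\bar S}$ lies in $\mathcal{Q}$, and every such gate has support in $S$. This gives $\mathrm{Supp}(\mathcal{Q}) \subseteq S$.

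\textbf{Reordering.} Write $\mathcal{E}$ layer by layer with each layer split as (gates in $\mathcal{B}_{\bar S,\ell}$)$\,\otimes\,$(gates in $\mathcal{Q}_\ell$). Gates within one layer commute trivially. We need every $\mathcal{Q}$-gate in layer $\ell$ to commute past every $\mathcal{B}_{\bar S}$-gate in a later layer $\ell'>\ell$. This holds because such gates have disjoint supports: if a later gate $\mathcal{E}_{\ell',y}\in\mathcal{B}_{\bar S}$ overlapped a $\mathcal{Q}$-gate $\mathcal{E}_{\ell,x}$, maximality would put $\mathcal{E}_{\ell,x}$ in $\mathcal{B}_{\bar S}$. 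Quantum channels on disjoint supports commute, so sliding all $\mathcal{Q}$-gates to the end, preserving their relative layer order, gives $\mathcal{E} = \mathcal{Q}\circ\mathcal{B}_{\bar S}$.

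\textbf{Main obstacle.} The only delicate step is the covering argument: showing that a gate outside $\mathcal{B}_{\bar S}$ really lies in $\mathcal{B}_S$. A gate supported entirely in $S$ whose successors never reach the last layer could in principle fall outside both lightcones. The plan handles this by following an upward chain of overlapping successor gates to layer $d$. If no successor exists, the gate's qubits are untouched afterwards, so the gate intersects $S$ at the output and belongs to $\mathcal{B}_S$ by the layer-$\ell$ clause of Definition~\ref{def:backward_lightcone}. That clause counts qubits of $S$ itself, not only the support of later layers.
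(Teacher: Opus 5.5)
Your proof is correct, but it takes a different route from the paper. The paper does not argue directly. It specializes Fact~\ref{fac:LCD-1} with $S\to\Lambda$, $S_1\to\bar S$, $S_2\to S$, uses $\mathcal{B}_\Lambda=\mathcal{E}$, and reads off $\mathrm{Supp}(\mathcal{Q})\subseteq S(s)\backslash\bar S=S$. Its covering step is therefore inherited from the layer-by-layer induction in the appendix proof of Fact~\ref{fac:LCD-1}: in that fact's notation, a gate of $\mathcal{B}_{S_1S_2}$ outside $\mathcal{B}_{S_1}$ lies in $\mathcal{B}_{S_2}$. Its reordering uses the same commutation argument you give.

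Your version is self-contained. More usefully, it shows that when the two regions are complementary, the covering step needs no induction at all. Once you know a gate outside $\mathcal{B}_{\bar S}$ is disjoint from $\bar S$, its nonempty support lies in $S$. So it meets $S$ and belongs to $\mathcal{B}_{S,\ell}$ immediately by the ``$\cup S$'' clause of Definition~\ref{def:backward_lightcone}, which is exactly the observation in your last paragraph.

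That makes your forward-chain argument redundant, and you can delete it. Doing so also removes its one loose end: the chain can stop below layer $d$ at a gate other than $\mathcal{E}_{\ell,x}$. You only address the case where the chain is empty, although the same clause handles the general case.

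The paper's route is more economical, since Fact~\ref{fac:LCD-1} is needed anyway for Theorem~\ref{thm:LE}. Yours gives a direct, induction-free proof of this special case.
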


\begin{fact}[Zero long-range correlation]
    \label{fac:LCD-3}
    Given any two regions $S_1$ and $S_2$ with $\mathrm{dist}(S_1, S_2) \geq 2s$, if the initial state is the product state $\rho_0 = \ket{0}\!\bra{0}^{\otimes n}$, then the reduced density matrix $\rho_{S_1 S_2}$ has the following product form
    \begin{equation}
        \rho_{S_1 S_2} = \rho_{S_1} \otimes \rho_{S_2} .
    \end{equation}
\end{fact}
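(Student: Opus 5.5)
We prove Fact~\ref{fac:LCD-3} by restricting to backward lightcones and checking that the lightcones of $S_1$ and $S_2$ are disjoint. Since $\rho_{S_1 S_2} = \mathrm{Tr}_{\overline{S_1 S_2}}(\mathcal{E}(\rho_0))$, the first step is the standard lightcone reduction. Every gate of $\mathcal{E}$ that is not in $\mathcal{B}_{S_1 S_2}$ acts, by maximality in Definition~\ref{def:backward_lightcone}, entirely outside the support of the later lightcone gates together with $S_1 S_2$. We peel these gates off layer by layer, starting from $\ell = d$. At each layer, a gate outside the lightcone commutes past all later lightcone gates, because their supports are disjoint. It is then traced out, and trace preservation gives $\mathrm{Tr}_{\mathrm{Supp}}\circ\mathcal{G} = \mathrm{Tr}_{\mathrm{Supp}}$. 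This yields
\begin{equation}
    \rho_{S_1 S_2} = \mathrm{Tr}_{\overline{S_1 S_2}}\big(\mathcal{B}_{S_1 S_2}(\rho_0)\big).
\end{equation}
Equivalently, Fact~\ref{fac:LCD-2} applied to $S = \overline{S_1 S_2}$ gives $\mathcal{E} = \mathcal{Q}\circ\mathcal{B}_{S_1S_2}$ with $\mathrm{Supp}(\mathcal{Q}) \subseteq \overline{S_1S_2}$, so $\mathcal{Q}$ disappears under the partial trace.

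The second step shows that $\mathcal{B}_{S_1 S_2}$ splits as $\mathcal{B}_{S_1}\circ\mathcal{B}_{S_2}$, with disjoint supports and with the two factors commuting. We use Fact~\ref{fac:LCD-1} to write $\mathcal{B}_{S_1S_2} = \mathcal{Q}'\circ\mathcal{B}_{S_1}$, where $\mathcal{Q}' = \mathcal{B}_{S_2}\setminus\mathcal{B}_{S_1}$. It then suffices to show that no gate lies in both $\mathcal{B}_{S_1}$ and $\mathcal{B}_{S_2}$, so that $\mathcal{Q}' = \mathcal{B}_{S_2}$, and that $\mathrm{Supp}(\mathcal{B}_{S_1}) \cap \mathrm{Supp}(\mathcal{B}_{S_2}) = \varnothing$.

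Both claims follow from the width bound $\mathrm{Supp}(\mathcal{B}_{S_i}) \subseteq S_i(s)$. A gate in both lightcones would have support of diameter below $c$ meeting both $S_1(s')$ and $S_2(s')$, for some growth $s' < s$ accumulated up to that layer. This would force $\mathrm{dist}(S_1,S_2) < 2s$. To make this rigorous, we track more carefully that after processing layers $d,\dots,\ell+1$ the lightcone support lies within $S_i((d-\ell)c)$. A gate of layer $\ell$ lies in $\mathcal{B}_{S_i}$ only if it touches this set, so it lies within $S_i((d-\ell+1)c)$. A gate in both lightcones would therefore give $\mathrm{dist}(S_1,S_2) \le 2(d-\ell)c + c < 2cd = 2s$. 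The same estimate shows the supports are disjoint, since $S_1(s)$ and $S_2(s)$ could only overlap if the distance were at most $2s$. The boundary case of distance exactly $2s$ needs the refined per-layer count above, which is the main place where care is required.

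Finally, since $\rho_0 = \ket{0}\!\bra{0}^{\otimes n}$ is a product state and the two lightcones act on disjoint sets of qubits, we get $\mathcal{B}_{S_1}\otimes\mathcal{B}_{S_2}(\rho_0) = \sigma_1\otimes\sigma_2\otimes\ket{0}\!\bra{0}_{\mathrm{rest}}$. Here $\sigma_i$ is supported on $\mathrm{Supp}(\mathcal{B}_{S_i})$, and these supports contain $S_i$. Tracing out everything except $S_1 S_2$ gives $\mathrm{Tr}(\cdot)\,$restricted to $S_1$ tensored with the same on $S_2$. Applying the first step to $S_1$ and $S_2$ separately identifies these factors as $\rho_{S_1}$ and $\rho_{S_2}$, which proves the claimed product form.
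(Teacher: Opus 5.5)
Your proposal is correct and follows essentially the same route as the paper: reduce to $\rho_{S_1S_2}=\mathrm{Tr}_{\overline{S_1S_2}}(\mathcal{B}_{S_1S_2}(\rho_0))$, show that $\mathcal{B}_{S_1}$ and $\mathcal{B}_{S_2}$ share no gate and have disjoint supports so that $\mathcal{B}_{S_1S_2}=\mathcal{B}_{S_1}\circ\mathcal{B}_{S_2}$, and then factor using the product input. Your per-layer count is more careful than the paper's one-line appeal to $\mathrm{Supp}(\mathcal{B}_{S_i})\subseteq S_i(s)$ at exact distance $2s$, and it closes that boundary case once you note that ``no shared gate'' already forces disjoint supports: gates in one layer never overlap, and a qubit shared by lightcone gates in different layers would place the earlier gate in both lightcones by maximality.
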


\begin{figure}
    \centering
    \includegraphics[width=0.95\linewidth]{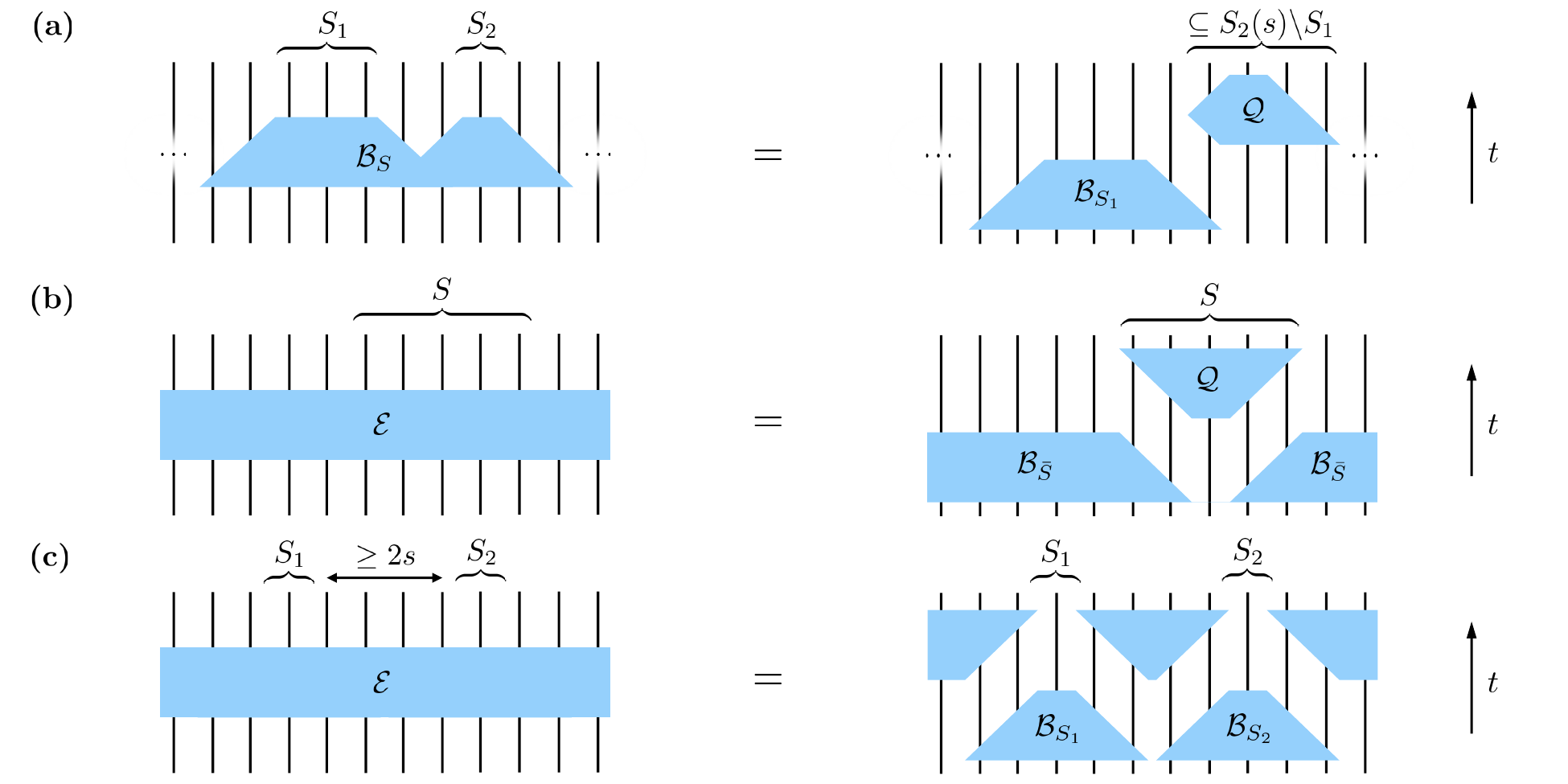}
    \caption{
    (a) Schematic for lightcone decomposition (Fact \ref{fac:LCD-1}). For any $S = S_1 S_2$, the backward lightcone has a decomposition $ \mathcal{B}_S = \mathcal{Q} \circ \mathcal{B}_{S_1} $ where $\mathrm{Supp}(\mathcal{Q}) \subseteq S_2(s) \backslash S_1$ with $s = c d$. We recall that $S(s)$ represents the region that extends $S$ by a distance $s$.
    (b) Schematic for a special case lightcone decomposition (Fact \ref{fac:LCD-2}). For any $S \subseteq \Lambda$, the overall circuit has a decomposition $ \mathcal{B}_S = \mathcal{Q} \circ \mathcal{B}_{\bar{S}} $ where $\mathrm{Supp}(\mathcal{Q}) \subseteq S$.
    (c) A trivial phase mixed state has no long-range correlation $\rho_{S_1 S_2} = \rho_{S_1} \otimes \rho_{S_2}$ if $\mathrm{dist}(S_1, S_2) \geq 2s$ (Fact \ref{fac:LCD-3}), because their backward lightcones $\mathcal{B}_{S_1}$ and $\mathcal{B}_{S_2}$ have no overlap. }
    \label{fig:lightcone}
\end{figure}

\subsection{Approximate Markovianity and local extendibility}
\label{sec:def_AM_LE}

In this subsection, we give the formal definition of the Approximate Markovianity and the local extendibility. 

\begin{definition}[Approximate Markovianity]
    \label{def:AM}
    Suppose a mixed state $\rho$ with a partition $\Lambda = A B C$, we say it is \textnormal{$\varepsilon_{\mathrm{AM}}$-Markovian}, if and only if there always exists a quantum channel $\Psi_{B \rightarrow B C}$ acting on $B \rightarrow B C$, called a \textnormal{local recovery map}, such that:
    \begin{equation}
        \| \rho_{A B C} - \Psi_{B \rightarrow B C} (\rho_{A B}) \|_1 \leq \varepsilon_{\mathrm{AM}} . \label{eq:BtoBC}
    \end{equation}
\end{definition}

For local extendibility, recall that it allows us to discard a small subsystem when we perform an extension map, which generalizes the traditional concept of a recovery map that does not discard any subsystems.

\begin{definition}[Local extendibility]
    \label{def:LE}
    Suppose a mixed state $\rho$ with a partition $\Lambda = A B C D E$, we say its local marginal state $\rho_{B C D E}$ is \textnormal{$\varepsilon_{\mathrm{LE}}$-locally extendible} with respect to $\rho$, if and only if there always exists a quantum channel $\Phi_{B E \rightarrow B C}$ acting on $B E \rightarrow B C$, called a \textnormal{local extension map}, such that:
    \begin{equation}
        \| \rho_{A B C} - \Phi_{B E \rightarrow B C} (\rho_{A B E}) \|_1 \leq \varepsilon_{\mathrm{LE}} . \label{eq:BEtoBC}
    \end{equation}
\end{definition}
We remark that Definition \ref{def:LE} is slightly different from Definition 1 of local extendibility presented in Ref.\,\cite{kim_2024_learning}. In Ref.\,\cite{kim_2024_learning}, the local extension map satisfies Eq\,.(\ref{eq:BEtoBC}) for any purification $\rho_{R B C D E}$ for $\rho_{B C D E}$, which is a stronger assumption than our Definition \ref{def:LE}.
% The other observation one may make is that $D$ does not appear in Eq.\,(\ref{eq:BEtoBC}) at all. It is natural to ask the question what the role of $D$ is in Definition \ref{def:LE}. We will comment on this question in Section \ref{sec:cover_scheme}.
% In this paper, we will show that the trivial phase mixed states satisfy the local extendibility in the sense of our Definition \ref{def:LE} (not the one defined in Ref.\,\cite{kim_2024_learning}). We will show in Section \ref{sec:learning_alg} that there also exists an efficient algorithm to learn the corresponding local extension $\Phi_{B E \rightarrow B C}$ for trivial phase mixed states.

\subsection{Mixed state phases via local reversibility}
\label{sec:MSP_via_LR}

We recall that in Section \ref{sec:problem_setting}, as an analog of the quantum pure state phase definition, we informally define two mixed states $\sigma$ and $\rho =\mathcal{E} (\sigma)$ to be in the same phase if and only if one can approximately recovery $\sigma$ by applying a sequence of channel gates $\{ \tilde{\mathcal{E}}_{\ell, x} \}$ that cancel the effect of channel gates $\{ \mathcal{E}_{\ell, x} \}$ gate-by-gate from $\ell = d$ to $\ell = 1$. Now, we formalize the idea of ``gate-by-gate reversibility'' by introducing the concept \textit{local reversibility}, initially introduced in Ref.\,\cite{sang_2025_statbility}. For technical simplicity, we adopt the version of local reversibility in Definition 3 of Ref.\,\cite{ma_2025_circuit}:

\begin{definition}[Local reversibility]
    \label{def:LR}
    Suppose any two mixed states $\sigma$ and $\rho$.
    We say $\sigma$ is \textnormal{locally reversible} from $\rho$, if and only if for any $\eta > 0$ and $\varepsilon_{\mathrm{LR}} > 0$, there exists $\mathcal{E} = \mathcal{E}_d \circ \cdots \circ \mathcal{E}_2 \circ \mathcal{E}_1$ where $\mathcal{E}_{\ell} = \prod_x \mathcal{E}_{\ell, x}$ and the support size of each local gate $\mathcal{E}_{\ell, x}$ is at most $c$, such that:
    \begin{enumerate}
        \item $\| \rho - \mathcal{E} (\sigma) \|_1 \leq \eta$,
        \item $\sigma$ is \textnormal{$\varepsilon_{\mathrm{LR}}$-locally reversible} from $\mathcal{E}(\sigma)$ along $\mathcal{E}$: for each $\mathcal{E}_{\ell, x}$, there exists another channel gate $\tilde{\mathcal{E}}_{\ell, x}$ with the same spatial support as $\mathcal{E}_{\ell, x}$ such that
        \begin{equation}
            \| \tilde{\mathcal{E}}_{\ell, x} \circ \mathcal{E}_{\ell, x} \circ \mathcal{E}_{\leq \ell - 1} (\sigma) - \mathcal{E}_{\leq \ell - 1} (\sigma) \|_1 \leq \varepsilon_{\mathrm{LR}},
        \end{equation}
        where $\mathcal{E}_{\leq \ell - 1} :=\mathcal{E}_{\ell - 1} \circ \cdots \circ \mathcal{E}_2 \circ \mathcal{E}_1$ represents the first $\ell - 1$ layers of $\mathcal{E}$.
    \end{enumerate}  
\end{definition}

We gave a visualization for the second condition of Definition \ref{def:LR} in Figure \ref{fig:LR}.
There are three comments that we can make on this definition.
First, the local reversibility in Definition \ref{def:LR} satisfies \textit{reflexivity}, \textit{symmetry}, and \textit{transitivity}. Namely, local reversibility is an equivalence relation of the quantum mixed states. The proofs of these three properties are routine, so we leave them in Appendix \ref{sec:equivalence_relation}.

Second, in condensed matter physics, the condition (2) in Definition \ref{def:LR} can be relaxed to the existence of a local channel gate $\tilde{\mathcal{E}}_{\ell, x}$ with an enlarged spatial support with respect to $\mathcal{E}_{\ell, x}$. However, using the reorganization techniques proposed in Ref.\,\cite{sang_2025_statbility}, one can conclude that whether the local spatial support of $\tilde{\mathcal{E}}_{\ell, x}$ is the same or larger than the support of  $\mathcal{E}_{\ell, x}$ does not change the definition of local reversibility. We give a more detailed explanation about this point in Appendix \ref{sec:cmi_decay} and Figure \ref{fig:reorganization}.

Third, we also emphasize here that the existence of such an $\tilde{\mathcal{E}}_{\ell, x}$ highly depends on the initial state $\sigma$ and the sub-circuit $\mathcal{E}_{\leq \ell - 1}$. 
In general, the channel gate $\mathcal{E}_{\ell, x}$ is not invertible. Thus $\tilde{\mathcal{E}}_{\ell, x} \circ \mathcal{E}_{\ell, x} \neq \mathcal{I}$, as a channel, is not the identity channel for any CPTP map $\tilde{\mathcal{E}}_{\ell, x}$. 
In Definition \ref{def:LR}, the channel $\tilde{\mathcal{E}}_{\ell, x} \circ \mathcal{E}_{\ell, x}$ is only an approximate stabilizer channel of the state $\mathcal{E}_{\leq \ell - 1} (\sigma)$.

Now, given this well-defined equivalence relation -- the local reversibility -- we can formally state its equivalence classes, namely the \textit{mixed state phases}.

\begin{definition}[Mixed state phases via local reversibility]
    \label{def:MSP_via_LR}
    For any two mixed states $\sigma$ and $\rho$, we say they are in the same mixed phase if and only if for any $\eta>0$ and $\varepsilon_{\mathrm{LR}} > 0$, there always exists a shallow circuit $\mathcal{E} = \mathcal{E}_d \circ \cdots \circ \mathcal{E}_2 \circ \mathcal{E}_1$ where $\mathcal{E}_{\ell} = \prod_x \mathcal{E}_{\ell, x}$ and each $\mathcal{E}_{\ell, x}$ has a local support, such that $\| \rho - \mathcal{E} (\sigma) \|_1 \leq \eta$ and $\sigma$ is $\varepsilon_{\mathrm{LR}}$-locally reversible from $\mathcal{E}(\sigma)$ along the circuit $\mathcal{E}$.
  
    Especially, we say $\rho$ is in the trivial phase if and only if $\rho$ and $\rho_0 = \ket{0}\!\bra{0}^{\otimes n}$ are in the same mixed state phase.
\end{definition}

\textbf{Remark}. Without loss of generality, we restrict our learning target by adopting the following convention in Definition 3 of Ref.\,\cite{ma_2025_circuit} to simplify our derivation: for the trivial phase mixed state $\rho$ to be learned, we can always assume it is perfectly prepared by some shallow circuit (namely we assume the preparation error $\eta = 0$)
\begin{equation}
    \rho = \mathcal{E}(\sigma),
\end{equation}
and $\sigma$ is $\varepsilon_{\mathrm{LR}}$-locally reversible from $\rho$.
This does not change the essence of the problem we are solving. In fact, for any given $\varepsilon>0$, we can take $\eta = \varepsilon/2$ such that $\| \rho - \mathcal{E} (\sigma) \|_1 \leq \varepsilon/2$ for the corresponding $\mathcal{E}$, and $\sigma$ is $\varepsilon_{\mathrm{LR}}$-locally reversible from $\rho$, then efficiently learning and generating a shallow circuit $\mathcal{W}$ with $\| \mathcal{E} (\sigma) - \mathcal{W} (\sigma) \|_1 \leq \varepsilon/2$ immediately implies $\| \rho - \mathcal{W} (\sigma) \|_1 \leq \varepsilon$ by the triangle inequality.
This means that setting $\rho = \mathcal{E}(\sigma)$ does not change the scaling of time and sample complexity.

Moreover, in quantum many-body physics, trivial phase mixed states are usually defined in a much stronger way: $\rho$ is in the trivial phase if and only if $\rho$ can be prepared from $\ket{0} \! \bra{0}^{\otimes n}$ using a local shallow channel circuits, and during the preparation the spatial CMI of the states for the partition in Figure \ref{fig:partition}a must always decay exponentially as $I (A : C | B) \sim e^{- r / \xi}$ for some constant characteristic length $\xi$.
There are many examples of trivial phase mixed states in quantum many-body physics, for example, high-temperature quantum Gibbs states \cite{chen_2025_quantum} and quantum error-correcting codes (like toric codes) influenced by noise over the noise threshold \cite{sang_2025_statbility}.
In Appendix \ref{sec:cmi_decay}, we will show that the mixed state with CMI decay always imply local reversibility.
As a result, these examples are also classified as trivial phase mixed states based on our Definition \ref{def:MSP_via_LR}, since our definition is broader than that in quantum many-body physics.

\subsection{Local inversion within a phase}
\label{sec:LI}

We are going to end this section with an important concept called \textit{local inversion}, arising from the local reversibility property. 
This local inversion technique will be quite useful when we prove approximate Markovianity by constructing a recovery map (Section \ref{sec:proof_AM}), and prove local extendibility by constructing a local extension map (Section \ref{sec:proof_LE}).

In Definition \ref{def:MSP_via_LR}, the condition $\tilde{\mathcal{E}}_{\ell, x} \circ \mathcal{E}_{\ell, x} \circ \mathcal{E}_{\leq \ell - 1} (\sigma) \approx \mathcal{E}_{\leq \ell - 1} (\sigma)$ can be further relaxed if there are more gates in $\{ \mathcal{E}_{\ell', x'} \}_{\ell' \geq \ell}$ being acted, as long as $\mathcal{E}_{\ell', x'}$ acts exactly outside $\mathrm{Supp} (\mathcal{E}_{\ell, x})$. We formalize this observation by the following useful lemma that we will use later:

\begin{figure}
    \centering
    \includegraphics[width=\linewidth]{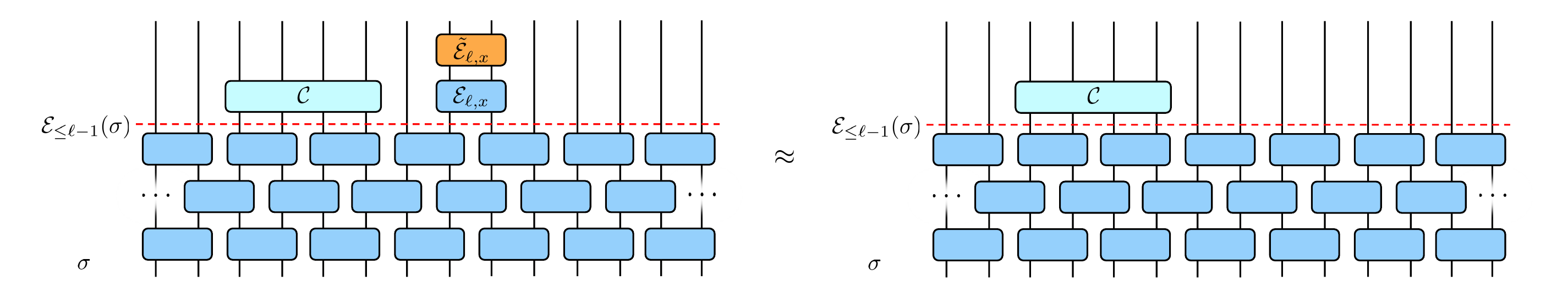}
    \caption{
    Schematic for Lemma \ref{lem:LR-modify}. Local reversibility still holds for any extra channel $\mathcal{C}$ acting on $\mathcal{E}_{\leq \ell-1} (\sigma)$, as long as with $\mathrm{Supp}(\mathcal{C}) \cap \mathrm{Supp}(\mathcal{E}_{\ell, x}) = \varnothing$. }
    \label{fig:LR-relaxed}
\end{figure}

\begin{lemma}[Relaxed local reversibility]
    \label{lem:LR-modify}
    Suppose $\| \tilde{\mathcal{E}}_{\ell, x} \circ \mathcal{E}_{\ell, x} \circ \mathcal{E}_{\leq \ell - 1} (\sigma) - \mathcal{E}_{\leq \ell - 1} (\sigma) \|_1 \leq \varepsilon_{\mathrm{LR}}$ for any $\varepsilon_{\mathrm{LR}} > 0$, where $\mathcal{E}_{\leq \ell - 1} := \mathcal{E}_{\ell - 1} \circ \cdots \circ \mathcal{E}_2 \circ \mathcal{E}_1$ represents the first $\ell - 1$ layers of $\mathcal{E}$. Consider any channel $\mathcal{C}$ acting on $\Lambda \backslash \mathrm{Supp} (\mathcal{E}_{\ell, x})$ % = \Lambda \backslash \mathrm{Supp} (\tilde{\mathcal{E}}_{\ell, x})$
    \begin{equation}
        \| \tilde{\mathcal{E}}_{\ell, x} \circ \mathcal{E}_{\ell, x} \circ \mathcal{E}' (\sigma) -\mathcal{E}' (\sigma) \|_1 \leq \varepsilon_{\mathrm{LR}},
    \end{equation}
    where $\mathcal{E}' = \mathcal{C} \circ \mathcal{E}_{\leq \ell - 1}$ (see Figure \ref{fig:LR-relaxed}).
\end{lemma}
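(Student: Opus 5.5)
The plan is to exploit that $\mathcal{C}$ acts on a region disjoint from $S := \mathrm{Supp}(\mathcal{E}_{\ell, x})$. By Definition \ref{def:LR}, $\tilde{\mathcal{E}}_{\ell, x}$ has the same support $S$. Writing $\mathcal{C} = \mathcal{C}_{\bar S} \otimes \mathcal{I}_S$ and $\tilde{\mathcal{E}}_{\ell, x}\circ\mathcal{E}_{\ell, x} = \mathcal{I}_{\bar S}\otimes \mathcal{G}_S$, these two channels commute as maps on the full operator space:
\begin{equation}
    \tilde{\mathcal{E}}_{\ell, x} \circ \mathcal{E}_{\ell, x} \circ \mathcal{C} = \mathcal{C} \circ \tilde{\mathcal{E}}_{\ell, x} \circ \mathcal{E}_{\ell, x}.
\end{equation}
This commutation is the only structural fact needed. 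I would justify it on product operators $X_{\bar S}\otimes Y_S$, where both orders give $\mathcal{C}_{\bar S}(X_{\bar S}) \otimes \mathcal{G}_S(Y_S)$, and then extend by linearity.

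With this in hand, the main chain is
\begin{equation}
    \| \tilde{\mathcal{E}}_{\ell, x} \circ \mathcal{E}_{\ell, x} \circ \mathcal{C}\circ\mathcal{E}_{\leq \ell-1} (\sigma) -\mathcal{C}\circ\mathcal{E}_{\leq \ell-1} (\sigma) \|_1 = \| \mathcal{C}\bigl(\tilde{\mathcal{E}}_{\ell, x} \circ \mathcal{E}_{\ell, x} \circ \mathcal{E}_{\leq \ell-1} (\sigma) - \mathcal{E}_{\leq \ell-1}(\sigma)\bigr) \|_1 \leq \varepsilon_{\mathrm{LR}}.
\end{equation}
The equality uses the commutation together with linearity of $\mathcal{C}$. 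The inequality uses contractivity of the $1$-norm under the CPTP map $\mathcal{C}$, which holds for Hermitian traceless differences and not only for states, followed by the hypothesis.

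I do not expect a real obstacle. The only point needing care is that $\mathcal{C}$ must act trivially on $S$ in the tensor-product sense, not merely have support in $\Lambda\backslash S$ in a loose sense. The lemma's phrasing, and the convention $\mathcal{C}_S \equiv \mathcal{I}_{\Lambda\backslash S}\otimes\mathcal{C}_S$ stated in Section \ref{sec:notions}, guarantee exactly this. The same argument covers the case where $\mathcal{C}$ includes later-layer gates of $\mathcal{E}$ disjoint from $S$, which is how the lemma will be used.
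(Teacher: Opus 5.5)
Your proposal is correct and follows the paper's proof essentially step for step. The paper likewise commutes $\tilde{\mathcal{E}}_{\ell, x} \circ \mathcal{E}_{\ell, x}$ past $\mathcal{C}$ using their disjoint supports and then applies contractivity of $\mathcal{C}$ to the difference to recover the hypothesis bound.
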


\begin{proof}
    We prove directly
    \begin{eqnarray}
        \| \tilde{\mathcal{E}}_{\ell, x} \circ \mathcal{E}_{\ell, x} \circ \mathcal{E}' (\sigma) -\mathcal{E}' (\sigma) \|_1 & = & \| \tilde{\mathcal{E}}_{\ell, x} \circ \mathcal{E}_{\ell, x} \circ \mathcal{C} \circ \mathcal{E}_{\leq \ell - 1} (\sigma) -\mathcal{C} \circ \mathcal{E}_{\leq \ell - 1} (\sigma) \|_1 \nonumber\\
        & \overset{\text{(i)}}{=} & \| \mathcal{C} (\tilde{\mathcal{E}}_{\ell, x} \circ \mathcal{E}_{\ell, x} \circ \mathcal{E}_{\leq \ell - 1} (\sigma)) -\mathcal{C} (\mathcal{E}_{\leq \ell - 1} (\sigma)) \|_1 \nonumber\\
        & \overset{\text{(ii)}}{\leq} & \| \tilde{\mathcal{E}}_{\ell, x} \circ \mathcal{E}_{\ell, x} \circ \mathcal{E}_{\leq \ell - 1} (\sigma) -\mathcal{E}_{\leq \ell - 1} (\sigma) \|_1 \leq \varepsilon_{\mathrm{LR}}, 
    \end{eqnarray}
    where equality (i) follows from the fact that $\tilde{\mathcal{E}}_{\ell, x} \circ \mathcal{E}_{\ell, x}$ and $\mathcal{C}$ commute, and inequality (ii) follows from the contractivity of the CPTP map $\mathcal{C}$.
\end{proof}

Now, we define and construct the local inversion: for any region $S$, there always exists a channel acting on $S$ which approximately cancels part of the gates such that the overall channel circuit is effectively the backward lightcone of $\bar{S} = \Lambda \backslash S$ (see a visualization in Figure \ref{fig:LI}). From now on, we will always denote
\begin{equation}
    \varepsilon_{\mathrm{LI}} = n d \cdot \varepsilon_{\mathrm{LR}}.
\end{equation}

\begin{lemma}[Local inversion]
    \label{lem:LI}
    Suppose mixed state $\sigma$ and $\rho =\mathcal{E} (\sigma)$, and $\sigma$ is $\varepsilon_{\mathrm{LR}}$-locally reversible from $\rho$ along the shallow channel circuit $\mathcal{E}$. Given a local region $S$. Then, there exists a quantum channel $\mathcal{P}_S$ acting on $S$ such that
    \begin{equation}
        \| \mathcal{P}_S (\rho) -\mathcal{B}_{\bar{S}} (\sigma) \|_1 = \| \mathcal{P}_S \circ \mathcal{E} (\sigma) -\mathcal{B}_{\bar{S}} (\sigma) \|_1 \leq \varepsilon_{\mathrm{LI}},
    \end{equation}
    where $\mathcal{B}_{\bar{S}}$ is the backward lightcone of $\bar{S} = \Lambda \backslash S$.

    Especially for $S = \Lambda$, there is a channel $\mathcal{P}$ acting on the whole $\Lambda$, such that $\| \mathcal{P} (\rho) - \sigma \|_1 \leq \varepsilon_{\mathrm{LI}}$.
\end{lemma}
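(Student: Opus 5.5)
We build $\mathcal{P}_S$ as a product of reversal gates $\tilde{\mathcal{E}}_{\ell,x}$, one for each gate of the set difference $\mathcal{Q}=\mathcal{E}\backslash\mathcal{B}_{\bar S}$. By Fact \ref{fac:LCD-2}, $\mathcal{E}=\mathcal{Q}\circ\mathcal{B}_{\bar S}$ with $\mathrm{Supp}(\mathcal{Q})\subseteq S$, so the natural candidate is
\begin{equation}
\mathcal{P}_S=\prod_{\ell=1}^{d}\ \prod_{\mathcal{E}_{\ell,x}\in\mathcal{Q}}\tilde{\mathcal{E}}_{\ell,x},
\end{equation}
with layer $d$ applied first and layer $1$ last. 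Each $\tilde{\mathcal{E}}_{\ell,x}$ has the same support as $\mathcal{E}_{\ell,x}\subseteq S$, so $\mathcal{P}_S$ acts only on $S$. The goal is to show that the reversal gates peel off the gates of $\mathcal{Q}$ one at a time, each at cost $\varepsilon_{\mathrm{LR}}$. Since $\mathcal{Q}$ has at most $nd$ gates, this gives total error $nd\,\varepsilon_{\mathrm{LR}}=\varepsilon_{\mathrm{LI}}$.

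\textbf{Ordering.} Fix an ordering $g_1,\dots,g_m$ of the gates of $\mathcal{Q}$ that is consistent with reverse layer order: all layer-$d$ gates first, then layer $d-1$, and so on. Define intermediate channels $\mathcal{E}^{(j)}$ as the circuit obtained from $\mathcal{E}$ by deleting $g_1,\dots,g_j$. Then $\mathcal{E}^{(0)}=\mathcal{E}$ and $\mathcal{E}^{(m)}=\mathcal{B}_{\bar S}$, the latter using the decomposition of Fact \ref{fac:LCD-2}. The key claim is
\begin{equation}
\|\tilde g_j\circ\mathcal{E}^{(j-1)}(\sigma)-\mathcal{E}^{(j)}(\sigma)\|_1\le\varepsilon_{\mathrm{LR}}.
\end{equation}
Granting the claim, telescope the differences:
\begin{equation}
\|\tilde g_m\cdots\tilde g_1\,\mathcal{E}(\sigma)-\mathcal{B}_{\bar S}(\sigma)\|_1\le\sum_{j}\|\tilde g_m\cdots\tilde g_{j}\,\mathcal{E}^{(j-1)}(\sigma)-\tilde g_m\cdots\tilde g_{j+1}\,\mathcal{E}^{(j)}(\sigma)\|_1.
\end{equation}
Contractivity of $\tilde g_m\cdots\tilde g_{j+1}$ reduces each summand to the claim, so the sum is at most $m\,\varepsilon_{\mathrm{LR}}\le\varepsilon_{\mathrm{LI}}$. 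Finally, $\tilde g_m\cdots\tilde g_1$ equals $\mathcal{P}_S$, because gates within one layer are disjoint and commute.

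\textbf{Proving the claim (main obstacle).} Let $g_j=\mathcal{E}_{\ell,x}$. We must write $\mathcal{E}^{(j-1)}=\mathcal{C}\circ\mathcal{E}_{\ell,x}\circ\mathcal{E}_{\le\ell-1}$, with $\mathcal{C}$ supported off $\mathrm{Supp}(\mathcal{E}_{\ell,x})$, and $\mathcal{E}^{(j)}=\mathcal{C}\circ\mathcal{E}_{\le\ell-1}$. Lemma \ref{lem:LR-modify} then gives the bound immediately.

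The remaining gates in $\mathcal{E}^{(j-1)}$ at layers $\ge\ell$ fall into two groups:
\begin{enumerate}
\item The other layer-$\ell$ gates, which are disjoint from $\mathcal{E}_{\ell,x}$ and commute with it.
\item Gates of layers $>\ell$ that were not removed. These all lie in $\mathcal{B}_{\bar S}$, since every $\mathcal{Q}$-gate of a higher layer was already deleted.
\end{enumerate}
So we need: no gate of $\mathcal{B}_{\bar S}$ at a layer $\ell'>\ell$ overlaps $\mathcal{E}_{\ell,x}$. This follows from the maximality in Definition \ref{def:backward_lightcone}. If such an overlap existed, then $\mathrm{Supp}(\mathcal{E}_{\ell,x})$ would intersect $\mathrm{Supp}(\mathcal{B}_{\bar S,d}\circ\cdots\circ\mathcal{B}_{\bar S,\ell+1})$, forcing $\mathcal{E}_{\ell,x}\in\mathcal{B}_{\bar S}$. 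That contradicts $\mathcal{E}_{\ell,x}\in\mathcal{Q}$.

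Given this, the group-2 gates commute past $\mathcal{E}_{\ell,x}$ and can be collected into $\mathcal{C}$. The group-1 gates are also placed in $\mathcal{C}$, and all gates below layer $\ell$ are intact, so $\mathcal{E}_{\le\ell-1}$ is indeed the untouched prefix. This commutation bookkeeping is the only delicate point. Note that $\mathcal{E}_{\le\ell-1}$ must be the full first $\ell-1$ layers, because the local reversibility hypothesis refers to exactly that prefix; this is why the peeling must proceed from the top layer downward.

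\textbf{Special case $S=\Lambda$.} Here $\bar S=\varnothing$, so $\mathcal{B}_{\bar S}$ is empty and $\mathcal{Q}=\mathcal{E}$. The construction then yields a channel $\mathcal{P}$ with $\|\mathcal{P}(\rho)-\sigma\|_1\le\varepsilon_{\mathrm{LI}}$.
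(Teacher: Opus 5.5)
Your proposal is correct and follows essentially the same route as the paper. You use the same $\mathcal{Q}=\mathcal{E}\backslash\mathcal{B}_{\bar S}$ from Fact~\ref{fac:LCD-2}, the same $\mathcal{P}_S$ undoing gates from the top layer down, and the same telescoping with contractivity plus Lemma~\ref{lem:LR-modify}; your intermediate circuits $\mathcal{E}^{(j)}$ coincide with the paper's $\mathcal{G}_{N-j}\circ\cdots\circ\mathcal{G}_1\circ\mathcal{B}_{\bar S}$, and you additionally justify via the maximality in Definition~\ref{def:backward_lightcone} the decomposition $\mathcal{C}\circ\mathcal{E}_{\ell,x}\circ\mathcal{E}_{\le\ell-1}$ that the paper only asserts.
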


\begin{figure}
    \centering
    \includegraphics[width=\linewidth]{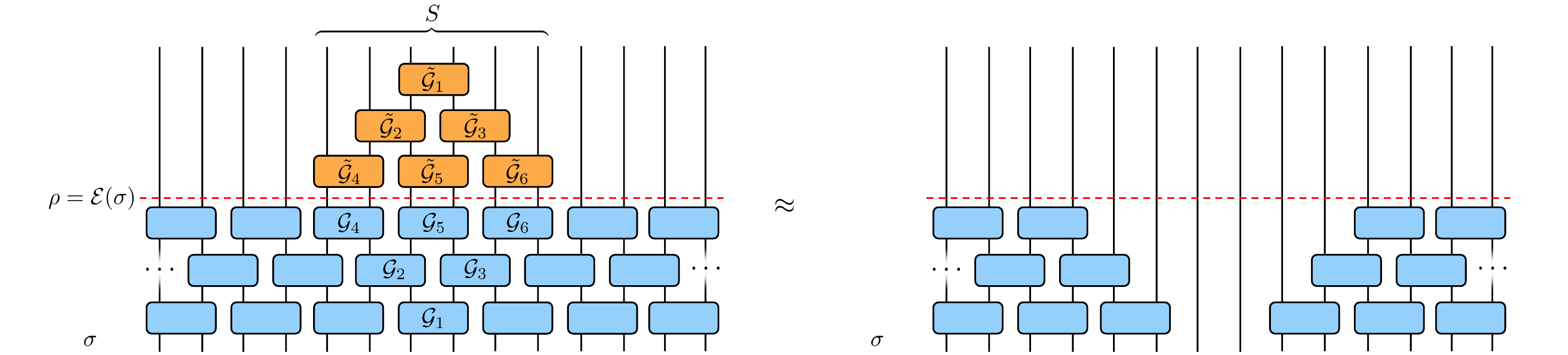}
    \caption{
    Schematic for local inversion (Lemma \ref{lem:LI}). For any $S \subseteq \lambda$, there exists a channel $\mathcal{P}_S$ acting on $S$, such that $\mathcal{P}_S (\rho) \approx \mathcal{B}_{\bar{S}} (\sigma)$.
    To construct $\mathcal{P}_S$, we first assign an ordering of indices to gates in $\mathcal{B}_S \backslash \mathcal{B}_{\bar{S}}$ -- denoted as $\{\tilde{\mathcal{G}}_m\}_{m \in [N]}$ (where $N=6$ in this schematic) -- based on the ascending order of the layer indices $\ell$, and then we act the local recovery gates $\{\tilde{\mathcal{G}}_m\}_{m \in [N]}$ (the orange channel gates) based on the descending order of the layer indices $\ell$.
    }
    \label{fig:LI}
\end{figure}

\begin{proof}
    According to Fact \ref{fac:LCD-2}, let the backward lightcone $\mathcal{B}_{\bar{S}} =\mathcal{B}_{\bar{S}, d} \circ \cdots \circ \mathcal{B}_{\bar{S}, 1}$. Then we can reorganize $\mathcal{E}$ into the following lightcone decomposition $\mathcal{E}=\mathcal{Q} \circ \mathcal{B}_{\bar{S}}$, where $\mathcal{Q} = \mathcal{B}_S \backslash \mathcal{B}_{\bar{S}}= \mathcal{Q}_d \circ \cdots \circ \mathcal{Q}_1$ with $\mathcal{Q}_{\ell} = \prod_{\mathcal{E}_{\ell, x} \notin \mathcal{B}_{\bar{S}, \ell}} \mathcal{E}_{\ell, x}$ and $\mathrm{Supp} (\mathcal{Q}) \subseteq S$.
  
    Let $N = | \mathcal{Q} |$. We can upper bound it by $N \leq | \mathcal{E} | \leq (n/c) \cdot d \leq n d$. Now, we assign an ordering for the channel gates inside $\mathcal{G}_1, \mathcal{G}_2, \cdots, \mathcal{G}_N \in \mathcal{Q}$, such that: for any $\mathcal{G}_{m_1} =\mathcal{E}_{\ell_1, x_1}$ and $\mathcal{G}_{m_2} =\mathcal{E}_{\ell_2, x_2}$, we always have $m_1 \leq m_2$ if and only if $\ell_1 \leq \ell_2$. In other words, the circuit $\mathcal{Q}$ is re-organized into $\mathcal{Q}=\mathcal{G}_N \circ \cdots \circ \mathcal{G}_2 \circ \mathcal{G}_1$. Furthermore, we can rewrite $\mathcal{E}$ into $\mathcal{E}=\mathcal{G}_N \circ \cdots \circ \mathcal{G}_2 \circ \mathcal{G}_1 \circ \mathcal{B}_{\bar{S}}$ by the lightcone decomposition.
  
    According to the local reversibility property, there exists a set of channel gates $\{ \tilde{\mathcal{G}}_m \}_{m \in [N]}$ such that for any $\mathcal{G}_m =\mathcal{E}_{\ell, x}$, we have $\mathrm{Supp} (\mathcal{G}_m) = \mathrm{Supp} (\tilde{\mathcal{G}}_m)$ and
    \begin{equation}
        \| \tilde{\mathcal{G}}_m \circ \mathcal{G}_m \circ \mathcal{E}_{\leq \ell - 1} (\sigma) -\mathcal{E}_{\leq \ell - 1} (\sigma) \|_1 \leq \varepsilon_{\mathrm{LR}},
    \end{equation}
    where $\mathcal{E}_{\leq \ell - 1} =\mathcal{E}_{\ell - 1} \circ \cdots \circ \mathcal{E}_2 \circ \mathcal{E}_1$ represents the first $\ell - 1$ layers of $\mathcal{E}$ and $\ell$ is the layer index of $\mathcal{G}_m =\mathcal{E}_{\ell, x}$.
  
    We let $\mathcal{P}_S = \tilde{\mathcal{G}}_1 \circ \tilde{\mathcal{G}}_2 \circ \cdots \circ \tilde{\mathcal{G}}_N$ with $\mathrm{Supp} (\mathcal{P}_S) = \cup_{m = 1}^N \mathrm{Supp} (\mathcal{G}_m) = \mathrm{Supp} (\mathcal{Q}) \subseteq S$. Then, by substituting $\rho =\mathcal{E} (\sigma) =\mathcal{G}_N \circ \cdots \circ \mathcal{G}_2 \circ \mathcal{G}_1 \circ \mathcal{B}_{\bar{S}} (\sigma)$, we can bound
    \begin{eqnarray}
        & &\| \mathcal{P}_S (\rho) -\mathcal{B}_{\bar{S}} (\sigma) \|_1 \nonumber\\
        & = & \| \tilde{\mathcal{G}}_1 \circ \cdots \circ \tilde{\mathcal{G}}_N \circ \mathcal{G}_N \circ \cdots \circ \mathcal{G}_1 \circ \mathcal{B}_{\bar{S}} (\sigma) -\mathcal{B}_{\bar{S}} (\sigma) \|_1 \nonumber\\
        & \overset{\text{(i)}}{\leq} & \sum_{m = 1}^N \| \tilde{\mathcal{G}}_1 \circ \cdots \circ \tilde{\mathcal{G}}_m \circ \mathcal{G}_m \circ \cdots \circ \mathcal{G}_1 \circ \mathcal{B}_{\bar{S}} (\sigma) - \tilde{\mathcal{G}}_1 \circ \cdots \circ \tilde{\mathcal{G}}_{m - 1} \circ \mathcal{G}_{m - 1} \circ \cdots \circ \mathcal{G}_1 \circ \mathcal{B}_{\bar{S}} (\sigma) \|_1 \nonumber\\
        & \overset{\text{(ii)}}{\leq} & \sum_{m = 1}^N \| \tilde{\mathcal{G}}_m \circ \mathcal{G}_m (\mathcal{G}_{m - 1} \circ \cdots \circ \mathcal{G}_1 \circ \mathcal{B}_{\bar{S}} (\sigma)) -\mathcal{G}_{m - 1} \circ \cdots \circ \mathcal{G}_1 \circ \mathcal{B}_{\bar{S}} (\sigma) \|_1 \nonumber\\
        & \overset{\text{(iii)}}{\leq} & \sum_{m \in [N], \mathcal{G}_m =\mathcal{E}_{\ell, x}} \| \tilde{\mathcal{G}}_m \circ \mathcal{G}_m (\mathcal{E}_{\leq \ell - 1} (\rho_0)) -\mathcal{E}_{\leq \ell - 1} (\rho_0) \|_1 \leq N \cdot \varepsilon_{\mathrm{LR}} \leq n d \cdot \varepsilon_{\mathrm{LR}} = \varepsilon_{\mathrm{LI}} . 
    \end{eqnarray}
    The first inequality (i) follows from the triangle inequality of trace norm, and the second inequality (ii) follows from the contractivity of the CPTP map $\tilde{\mathcal{G}}_1 \circ \cdots \circ \tilde{\mathcal{G}}_{m - 1}$. In the third inequality (iii), $\ell$ is the layer index of $\mathcal{G}_m =\mathcal{E}_{\ell, x}$, and we can always decompose $\mathcal{G}_{m - 1} \circ \cdots \circ \mathcal{G}_1 \circ \mathcal{B}_{\bar{S}} (\sigma) =\mathcal{C} \circ \mathcal{E}_{\leq \ell - 1}$ for some $\mathcal{C}$ acting outside $\mathrm{Supp} (\mathcal{G}_m)$. Then the third inequality (iii) is from Lemma \ref{lem:LR-modify} immediately. 
\end{proof}

% We refer to Figure \ref{fig:LI} as a schematic for the construction of $\mathcal{P}_S = \tilde{\mathcal{G}}_1 \circ \tilde{\mathcal{G}}_2 \circ \cdots \circ \tilde{\mathcal{G}}_N$ in the proof.

\section{Existence of Local Recovery Maps}
\label{sec:proof_AM}

\begin{theorem}[Approximate Markovianity]
    \label{thm:AM}
    Suppose a trivial phase mixed state $\rho =\mathcal{E} (\rho_0)$, and $\rho_0$ is $\varepsilon_{\mathrm{LR}}$-locally reversible from $\rho$ along the shallow channel circuit $\mathcal{E}$. Now, consider a partition $\Lambda = A B C$ with $\mathrm{dist} (A, C) \geq 2 s$.
    Then $\rho$ is \textnormal{$\varepsilon_{\mathrm{LI}}$-approximately Markovian}, namely there exists a recovery map $\Psi$ acting on $B C$ such that $\left\| \rho - \Psi \left( \rho_{A B} \otimes \ket{0} \! \bra{0}_C \right) \right\|_1 \leq \varepsilon_{\mathrm{LI}}$. This naturally induces a CPTP map $\Psi_{B \rightarrow B C} (X_B) = \Psi \left( X_B \otimes \ket{0} \! \bra{0}_C \right)$ such that
    \begin{equation}
        \| \rho - \Psi_{B \rightarrow B C} (\rho_{A B}) \|_1 \leq \varepsilon_{\mathrm{LI}} .
    \end{equation}
    Symmetrically, there also exists a recovery map $\Psi_{B \rightarrow B A}$ such that $\| \rho - \Psi_{B \rightarrow B A} (\rho_{B C}) \|_1 \leq \varepsilon_{\mathrm{LI}}$.
\end{theorem}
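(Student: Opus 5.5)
The plan is to build $\Psi$ from the local inversion (Lemma \ref{lem:LI}) applied on a region around $C$, and then re-run the relevant part of the circuit on $BC$. Set $S = C(s)\setminus A$; more simply, take $S = BC$, or the part of $B$ within distance $s$ of $C$ together with $C$. Since $\mathrm{dist}(A,C)\ge 2s$, we have $S\cap A=\varnothing$ and $S\subseteq BC$. Lemma \ref{lem:LI} gives a channel $\mathcal{P}_S$ supported on $S$ with $\|\mathcal{P}_S(\rho)-\mathcal{B}_{\bar S}(\rho_0)\|_1\le\varepsilon_{\mathrm{LI}}$. By Fact \ref{fac:LCD-2}, $\rho=\mathcal{Q}\circ\mathcal{B}_{\bar S}(\rho_0)$ with $\mathcal{Q}=\mathcal{B}_S\setminus\mathcal{B}_{\bar S}$ supported inside $S\subseteq BC$.

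The first key point is that $\mathcal{B}_{\bar S}(\rho_0)$ restricted to the region untouched by $\mathcal{B}_{\bar S}$ is still $\ket{0}\!\bra{0}$. More precisely, I would choose $S$ large enough, namely $S=BC$, so that $C\cap \mathrm{Supp}(\mathcal{B}_{\bar S})=\varnothing$. This holds because $\bar S = A$ and $\mathrm{Supp}(\mathcal{B}_A)\subseteq A(s)$, which is disjoint from $C$ since $\mathrm{dist}(A,C)\ge 2s>s$. Hence $\mathcal{B}_{A}(\rho_0) = \tau_{AB}\otimes\ket{0}\!\bra{0}_C$ for some state $\tau_{AB}$, and therefore $\mathcal{B}_A(\rho_0)=\mathcal{R}_C(\mathcal{B}_A(\rho_0))$.

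Now define $\Psi := \mathcal{Q}\circ\mathcal{P}_{BC}$, which acts on $BC$. We have to compare $\Psi(\rho_{AB}\otimes\ket{0}\!\bra{0}_C)=\Psi\circ\mathcal{R}_C(\rho)$ with $\rho$. Since $\mathcal{P}_{BC}$ acts on $C$ as well, the plan is to insert a reset on $C$ at the right place. Write $\mathcal{B}_A(\rho_0)\approx \mathcal{P}_{BC}(\rho)$ and apply $\mathcal{R}_C$ to both sides. The left side is invariant, so $\|\mathcal{R}_C\circ\mathcal{P}_{BC}(\rho)-\mathcal{B}_A(\rho_0)\|_1\le\varepsilon_{\mathrm{LI}}$. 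The main obstacle is that $\mathcal{R}_C\circ\mathcal{P}_{BC}\neq\mathcal{P}_{BC}\circ\mathcal{R}_C$ in general, so the recovery map should instead be built to first reset $C$. I would resolve this by redefining $\Psi:=\mathcal{Q}\circ\mathcal{R}_C\circ\mathcal{P}_{BC}$, which is still a channel on $BC$. Its input, however, is $\rho_{AB}\otimes\ket{0}\!\bra{0}_C$ rather than $\rho$, so we still need $\mathcal{R}_C\circ\mathcal{P}_{BC}\circ\mathcal{R}_C(\rho)\approx\mathcal{R}_C\circ\mathcal{P}_{BC}(\rho)$. To get this, I would instead apply Lemma \ref{lem:LI} with $S=B$-side only, choosing $S = BC\setminus C$ thickened appropriately. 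Concretely, take $S=B\cap C(s)^{c}\cup\ldots$; this geometry is the delicate step.

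Cleaner alternative, which I expect to work: choose $S=B$. Then $\bar S = AC$, and since $A$ and $C$ are $2s$ apart, the lightcones do not overlap, so $\mathcal{B}_{AC}=\mathcal{B}_A\otimes\mathcal{B}_C$ by Fact \ref{fac:LCD-1} and the proof of Fact \ref{fac:LCD-3}. Lemma \ref{lem:LI} gives $\mathcal{P}_B$ on $B$ with $\mathcal{P}_B(\rho)\approx \mathcal{B}_A\mathcal{B}_C(\rho_0)$. Tracing out $C$ is compatible with $\mathcal{P}_B$ (disjoint supports), so $\|\mathcal{P}_B(\rho_{AB})-\mathrm{Tr}_C\mathcal{B}_A\mathcal{B}_C(\rho_0)\|_1\le\varepsilon_{\mathrm{LI}}$. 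The target $\mathcal{B}_A\mathcal{B}_C(\rho_0)$ is obtained from its $C$-trace by a channel on $BC$: $\mathcal{B}_C$ acts on $C(s)\subseteq BC$ and $\mathcal{B}_A$ does not touch $C$, so $\mathcal{B}_A\mathcal{B}_C(\rho_0)=\mathcal{B}_C\circ\mathcal{R}_{C(s)\setminus A(s)}(\mathcal{B}_A(\rho_0))$, with the reset/lightcone regions contained in $BC$. Finally, Fact \ref{fac:LCD-2} reapplies the remaining gates $\mathcal{Q}=\mathcal{B}_B\setminus\mathcal{B}_{AC}$, supported in $B$. Altogether $\Psi=\mathcal{Q}\circ\mathcal{B}_C\circ\mathcal{R}\circ\mathcal{P}_B$ acts on $BC$, and contractivity gives error $\le\varepsilon_{\mathrm{LI}}$. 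The symmetric statement follows by swapping $A$ and $C$.
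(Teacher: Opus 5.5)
Your final (``cleaner alternative'') argument is correct and is essentially the paper's proof. The paper also takes the local inversion $\mathcal{P}_B$ with $\mathcal{P}_B(\rho)\approx\mathcal{B}_{AC}(\rho_0)$, resets a collar around $C$, and re-applies $\mathcal{Q}_1=\mathcal{B}_{BC}\backslash\mathcal{B}_A$, which coincides with your $\mathcal{Q}\circ\mathcal{B}_C$; your reset region $C(s)\backslash A(s)$ plays the role of the paper's $C(s)\backslash C$ combined with the input reset of $C$. The abandoned $S=BC$ attempts should simply be dropped, and the identity you actually need is $\mathcal{R}_{C(s)\backslash A(s)}\circ\mathcal{R}_C\circ\mathcal{B}_{AC}(\rho_0)=\mathcal{B}_A(\rho_0)$, which follows from $\mathrm{Supp}(\mathcal{B}_C)\subseteq C(s)\backslash A(s)$ and $\mathrm{Supp}(\mathcal{B}_A)\subseteq A(s)$.
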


\begin{figure}
    \centering
    \includegraphics[width=\linewidth]{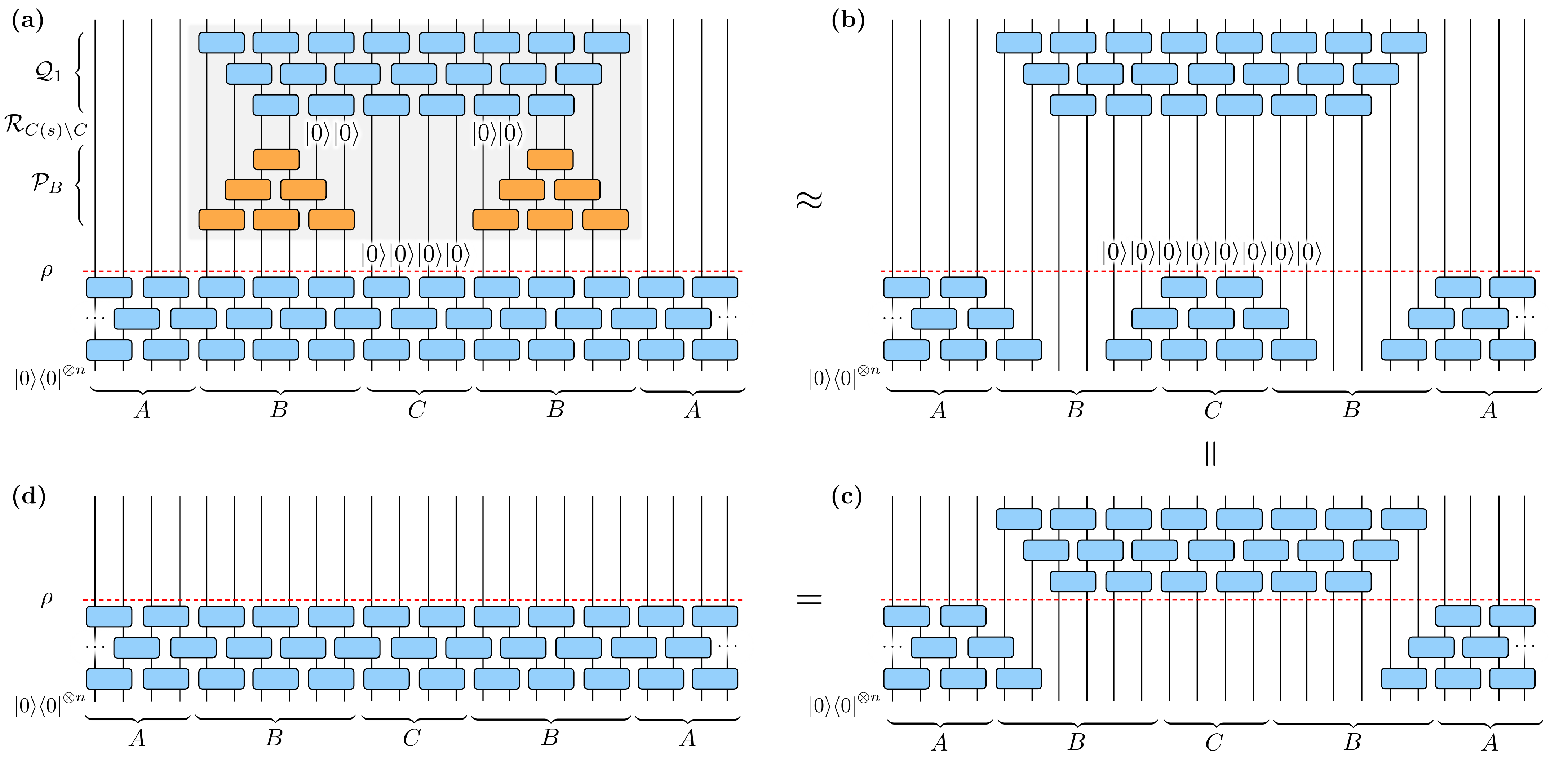}
    \caption{
    Schematic for the proof of the \textit{existence} of local recovery maps (Theorem \ref{thm:AM}).
    We show the approximate Markovianity by the explicit construction of the local recovery $\Psi =\mathcal{Q}_1 \circ \mathcal{R}_{C (s) \backslash C} \circ \mathcal{P}_B$ (light gray region).
    The step from (a) to (b) is from the local inversion $\mathcal{P}_B (\rho) \approx \mathcal{B}_{A C} (\rho_0)$.
    The step from (b) to (c) is from the reset channel property $\mathcal{R}_{C (s)} \circ \mathcal{B}_C = \mathcal{R}_{C (s)}$ for $\mathrm{Supp}(\mathcal{B}_{C}) \subseteq C(s)$.
    The step from (c) to (d) is from the backward lightcone decomposition $ \mathcal{E} = \mathcal{Q}_1 \circ \mathcal{B}_{A} $.
    }
    \label{fig:AM}
\end{figure}

\begin{proof}
    Consider the backward lightcones $\mathcal{B}_A$ and $\mathcal{B}_C$. Because $\mathrm{Supp} (\mathcal{B}_A) \subseteq A (s)$, $\mathrm{Supp} (\mathcal{B}_C) \subseteq C (s)$ and $\mathrm{dist} (A, C) \geq 2 s$, it means $\mathrm{Supp} (\mathcal{B}_A) \cap \mathrm{Supp} (\mathcal{B}_C) = \varnothing$ and $\mathcal{B}_{A C} =\mathcal{B}_A \circ \mathcal{B}_C =\mathcal{B}_C \circ \mathcal{B}_A$.
  
    By Fact \ref{fac:LCD-2}, we have lightcone decompositions $\mathcal{E}=\mathcal{Q}_1 \circ \mathcal{B}_A$ with $\mathrm{Supp} (\mathcal{Q}_1) \subseteq B C$ and $\mathcal{E}=\mathcal{Q}_2 \circ \mathcal{B}_{A C}$ with $\mathrm{Supp} (\mathcal{Q}_2) \subseteq B$. According to Lemma \ref{lem:LI}, let $\mathcal{P}_B$ acting on $B$ such that
    \begin{equation}
        \| \mathcal{P}_B (\rho) -\mathcal{B}_{A C} (\rho_0) \|_1 \leq \varepsilon_{\mathrm{LI}} . \label{eq:P_B}
    \end{equation}
    $\mathrm{Supp} (\mathcal{R}_{C (s) \backslash C}) \subseteq C (s)$ and $\mathrm{Supp} (\mathcal{B}_A) \subseteq A (s)$ implies that $\mathrm{Supp} (\mathcal{R}_{C (s) \backslash C}) \cap \mathrm{Supp} (\mathcal{B}_A) = \varnothing$, due to the fact $\mathrm{dist} (A, C) \geq 2 s$.
  
    We construct (see Figure \ref{fig:AM}):
    \begin{equation}
        \Psi =\mathcal{Q}_1 \circ \mathcal{R}_{C (s) \backslash C} \circ \mathcal{P}_B , 
    \end{equation}
    where $\Psi$ is a channel circuit consisting of $(2d + 1)$ layers of $c$-local gates.
    
    We also notice that $\rho_{A B} \otimes \ket{0} \! \bra{0}_C =\mathcal{R}_C (\rho)$. Then, we can plug $\Psi =\mathcal{Q}_1 \circ \mathcal{R}_{C (s) \backslash C} \circ \mathcal{P}_B$ and bound the recovery error
    \begin{eqnarray}
        &  & \left\| \rho - \Psi \left( \rho_{A B} \otimes \ket{0} \! \bra{0}_C \right) \right\|_1 \nonumber\\
        & = & \| \rho -\mathcal{Q}_1 \circ \mathcal{R}_{C (s) \backslash C} \circ \mathcal{P}_B \circ \mathcal{R}_C (\rho) \|_1 \overset{\text{(i)}}{=} \| \rho -\mathcal{Q}_1 \circ \mathcal{R}_{C (s)} \circ \mathcal{P}_B (\rho) \|_1 \nonumber\\
        & \overset{\text{(ii)}}{\leq} & \| \rho -\mathcal{Q}_1 \circ \mathcal{R}_{C (s)} \circ \mathcal{B}_{A C} (\rho_0) \|_1 + \| \mathcal{Q}_1 \circ \mathcal{R}_{C (s)} \circ \mathcal{B}_{A C} (\rho_0) -\mathcal{Q}_1 \circ \mathcal{R}_{C (s)} \circ \mathcal{P}_B (\rho) \|_1 \nonumber\\
        & \overset{\text{(iii)}}{\leq} & \| \rho -\mathcal{Q}_1 \circ \mathcal{R}_{C (s)} \circ \mathcal{B}_A \circ \mathcal{B}_C (\rho_0) \|_1 + \| \mathcal{B}_{A C} (\rho_0) -\mathcal{P}_B (\rho) \|_1 \nonumber\\
        & \overset{\text{(iv)}}{\leq} & \| \rho -\mathcal{Q}_1 \circ \mathcal{B}_A \circ \mathcal{R}_{C (s)} \circ \mathcal{B}_C (\rho_0) \|_1 + \varepsilon_{\mathrm{LI}} \nonumber\\
        & \overset{\text{(v)}}{=} & \| \rho -\mathcal{E} (\rho_0) \|_1 + \varepsilon_{\mathrm{LI}} = \varepsilon_{\mathrm{LI}} . 
    \end{eqnarray}
    Here, equality (i) is from the commutability between $\mathcal{P}_B$ and $\mathcal{R}_C$, and the reset channel property $\mathcal{R}_{C (s) \backslash C} \circ \mathcal{R}_C =\mathcal{R}_{C (s)}$; inequality (ii) is from the triangle inequality; inequality (iii) is from the fact that $\mathcal{B}_{A C} =\mathcal{B}_A \circ \mathcal{B}_C$ and the contractivity of CPTP map $\mathcal{Q}_1 \circ \mathcal{R}_{C (s)}$; inequality (iv) is from the commutability between $\mathcal{R}_{C (s)}$ and $\mathcal{B}_A$, and the Eq.\,(\ref{eq:P_B}) $\| \mathcal{P}_B (\rho) -\mathcal{B}_{A C} (\rho_0) \|_1 \leq \varepsilon_{\mathrm{LI}}$; equality (v) is from the lightcone decomposition $\mathcal{E}=\mathcal{Q}_1 \circ \mathcal{B}_A$, the relation $\mathcal{R}_{C (s)} \circ \mathcal{B}_C =\mathcal{R}_{C (s)}$ for $C (s) \supseteq C$ and the fact $\mathcal{R}_{C (s)} (\rho_0) = \rho_0$ for all-zero state. This completes the proof. 
\end{proof}

\begin{corollary}
    \label{cor:AM}
    Suppose a trivial phase mixed state $\rho =\mathcal{E} (\rho_0)$, and $\rho_0$ is $\varepsilon_{\mathrm{LR}}$-locally reversible from $\rho$ along the shallow channel circuit $\mathcal{E}$. Now, consider a partition $\Lambda = A B C$ with $\mathrm{dist} (A, C) \geq 2 s$. For any partition $A = A_1 A_2$, there always exists a channel acting on $B \rightarrow B C$ such that
    \begin{equation}
        \| \rho_{A_1 B C} - \Psi_{B \rightarrow B C} (\rho_{A_1 B}) \|_1 \leq \varepsilon_{\mathrm{LI}} .
    \end{equation}
\end{corollary}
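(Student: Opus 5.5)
The plan is to take the recovery map from Theorem \ref{thm:AM} itself, $\Psi_{B \rightarrow B C}(X_B) = \Psi(X_B \otimes \ket{0}\!\bra{0}_C)$, whose support is $BC$. We then obtain the statement for the marginal on $A_1 B C$ by tracing out $A_2$ from both sides of the bound in Theorem \ref{thm:AM}. No new construction is needed: the whole argument is that partial trace over a region disjoint from the support of the channel commutes with that channel and does not increase the trace norm.

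\textbf{Step 1: apply the theorem.} Theorem \ref{thm:AM} gives a channel $\Psi_{B \rightarrow BC}$, acting as the identity on $A = A_1 A_2$, such that
\begin{equation}
\| \rho_{ABC} - \Psi_{B \rightarrow BC}(\rho_{AB}) \|_1 \leq \varepsilon_{\mathrm{LI}} .
\end{equation}

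\textbf{Step 2: trace out $A_2$.} We apply $\mathrm{Tr}_{A_2}$ inside the norm. Since the partial trace is CPTP, contractivity gives
\begin{equation}
\| \mathrm{Tr}_{A_2}\rho_{ABC} - \mathrm{Tr}_{A_2}\Psi_{B \rightarrow BC}(\rho_{AB}) \|_1 \leq \varepsilon_{\mathrm{LI}} .
\end{equation}
The first term is simply $\rho_{A_1 BC}$.

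\textbf{Step 3: commute the partial trace with the channel.} Because $\Psi_{B \rightarrow BC}$ acts only on $B \rightarrow BC$ and $A_2$ is disjoint from $BC$, we have $\mathrm{Tr}_{A_2} \circ (\mathcal{I}_A \otimes \Psi_{B \rightarrow BC}) = (\mathcal{I}_{A_1} \otimes \Psi_{B \rightarrow BC}) \circ \mathrm{Tr}_{A_2}$. This holds because partial trace over one tensor factor commutes with any linear map acting on the other factors. It follows that
\begin{equation}
\mathrm{Tr}_{A_2}\Psi_{B \rightarrow BC}(\rho_{AB}) = \Psi_{B \rightarrow BC}(\rho_{A_1 B}),
\end{equation}
which yields the claimed bound.

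There is no substantial obstacle. The only point to check is Step 3, namely that the recovery map built in Theorem \ref{thm:AM}, $\Psi = \mathcal{Q}_1 \circ \mathcal{R}_{C(s)\backslash C} \circ \mathcal{P}_B$, really acts trivially on $A$. This holds because $\mathrm{Supp}(\mathcal{Q}_1) \subseteq BC$, $\mathrm{Supp}(\mathcal{P}_B) \subseteq B$, and $C(s)\backslash C \subseteq B$ since $\mathrm{dist}(A,C) \geq 2s$. So $\Psi$ is supported on $BC$, and the commutation with $\mathrm{Tr}_{A_2}$ is legitimate.
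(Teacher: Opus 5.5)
Your proposal is correct and matches the paper's own proof, which likewise applies $\mathrm{Tr}_{A_2}$ to the bound of Theorem~\ref{thm:AM} and uses contractivity of the partial trace. Your check that $\Psi$ is supported on $BC$, so that $\mathrm{Tr}_{A_2}$ commutes with $\Psi_{B \rightarrow BC}$, spells out a step the paper leaves implicit.
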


\begin{proof}
    We take the partial trace $\mathrm{Tr}_{A_2}$, and the corollary is implied by the contractivity of $\mathrm{Tr}_{A_2}$.
\end{proof}

\section{Existence of Local Extension Maps}
\label{sec:proof_LE}

\begin{theorem}[Local extendibility]
    \label{thm:LE}
    Suppose a trivial phase mixed state $\rho =\mathcal{E} (\rho_0)$, and $\rho_0$ is $\varepsilon_{\mathrm{LR}}$-locally reversible from $\rho$ along the shallow channel circuit $\mathcal{E}$. Now, consider a partition $\Lambda = A B C D E$ with $\mathrm{dist} (A, C) \geq 2 s$ and $\mathrm{dist} (B, D) \geq 2 s$. Then $\rho$ is $\varepsilon_{\mathrm{LI}}$-locally extendible, namely there exists a recovery map $\Phi$ acting on $B C D E$, such that $\left\| \rho_{A B C} - \mathrm{Tr}_{D E}(\Phi ( \rho_{A B E} \otimes \ket{0} \! \bra{0}_{C D}) ) \right\|_1 \leq \varepsilon_{\mathrm{LI}}$. This naturally induce a CPTP map $\Phi_{B E \rightarrow B C} (X_{B E}) = \mathrm{Tr}_{D E}( \Phi ( X_{B E} \otimes \ket{0} \! \bra{0}_{C D} ))$ such that
    \begin{equation}
        \| \rho_{A B C} - \Phi_{B E \rightarrow B C} (\rho_{A B E}) \|_1 \leq \varepsilon_{\mathrm{LI}} . \label{eq:def_LE}
    \end{equation}
\end{theorem}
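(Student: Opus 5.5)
The plan is to mirror the proof of Theorem \ref{thm:AM}. We build $\Phi$ in three stages: a local inversion on the discardable region, then a reset of everything near $C$, then a replay of the part of $\mathcal{E}$ that lives near $C$. The only inputs are Lemma \ref{lem:LI}, Facts \ref{fac:LCD-1}–\ref{fac:LCD-2}, and the reset identity $\mathcal{R}_T\circ\mathcal{G}=\mathcal{R}_T$ for channels $\mathcal{G}$ supported in $T$. We also use throughout the elementary identity $\mathrm{Tr}_{\overline{S}}\,\mathcal{B}_S(\rho_0)=\rho_S$, i.e. the backward lightcone of $S$ already reproduces the exact marginal on $S$.

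\textbf{Step 1 (invert on $DE$).} The input $\rho_{ABE}\otimes\ket{0}\!\bra{0}_{CD}$ equals $\mathcal{R}_{CD}(\rho)$. First reset $C$ and $D$, then apply the inversion $\mathcal{P}_{DE}$ of Lemma \ref{lem:LI}, using $\mathcal{P}_{DE}$ to remove all gates of $\mathcal{E}$ outside the lightcone of $ABC$. The subtlety is that $\mathcal{P}_{DE}$ acts on $D$, which has already been reset, so it does not commute with $\mathcal{R}_{CD}$.

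To handle this, I would instead use the inversion $\mathcal{P}_E$ on $E$ alone. This channel commutes with $\mathcal{R}_{CD}$ and gives
\begin{equation}
\mathcal{P}_E\circ\mathcal{R}_{CD}(\rho)=\mathcal{R}_{CD}\circ\mathcal{P}_E(\rho)\approx_{\varepsilon_{\mathrm{LI}}}\mathcal{R}_{CD}\circ\mathcal{B}_{ABCD}(\rho_0),
\end{equation}
with all later steps contracting this error.

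\textbf{Step 2 (reset around $C$).} Using $\mathrm{dist}(B,D)\ge 2s$ and $\mathrm{dist}(A,C)\ge 2s$, I would apply Fact \ref{fac:LCD-1} to write
\begin{equation}
\mathcal{B}_{ABCD}=\mathcal{Q}_C\circ\mathcal{B}_{ABD},\qquad \mathrm{Supp}(\mathcal{Q}_C)\subseteq C(s)\setminus ABD\subseteq CE .
\end{equation}
Next I would apply the reset $\mathcal{R}_{T}$ with $T=(C(s)\cup D)\setminus AB\subseteq CDE$. This choice of $T$ is analogous to $\mathcal{R}_{C(s)\setminus C}$ in Theorem \ref{thm:AM}. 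Since $\mathcal{R}_T$ absorbs both $\mathcal{R}_{CD}$ and $\mathcal{Q}_C$, the state becomes $\mathcal{R}_T\circ\mathcal{B}_{ABD}(\rho_0)$.

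Now split $\mathcal{B}_{ABD}=\mathcal{B}_{AB}\circ\mathcal{B}_D$ up to gates near $D$. Here the condition $\mathrm{dist}(B,D)\ge 2s$ gives disjoint lightcones, and $\mathrm{Supp}(\mathcal{B}_D)\subseteq D(s)$ avoids $AB$ up to the region $T\cup E$. Resetting then yields $\mathcal{R}_{T'}\circ\mathcal{B}_{AB}(\rho_0)$ for some $T'\subseteq CDE$. By the lightcone property its $AB$ marginal is exactly $\rho_{AB}$, and the $A$-side lightcone is untouched because $\mathrm{dist}(A,C)\ge 2s$.

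\textbf{Step 3 (replay near $C$ and trace out).} Use Fact \ref{fac:LCD-2} to write $\mathcal{B}_{ABC}=\mathcal{Q}'\circ\mathcal{B}_{AB}$, with $\mathrm{Supp}(\mathcal{Q}')\subseteq C(s)\setminus AB\subseteq CDE$. Then apply $\mathcal{Q}'$ and trace out $DE$. The candidate map is
\begin{equation}
\Phi=\mathcal{Q}'\circ\mathcal{R}_{T'}\circ\mathcal{P}_E ,
\end{equation}
which is supported on $BCDE$, indeed on $CDE$. Because $\mathcal{Q}'$ only touches sites that are reset and were never correlated with $AB$ except through $\mathcal{B}_{AB}$, we get
\begin{equation}
\mathcal{Q}'\circ\mathcal{R}_{T'}\circ\mathcal{B}_{AB}(\rho_0)=\mathcal{B}_{ABC}(\rho_0).
\end{equation}
Since $\mathrm{Tr}_{DE}\,\mathcal{B}_{ABC}(\rho_0)=\rho_{ABC}$, the triangle inequality and contractivity then bound the total error by $\varepsilon_{\mathrm{LI}}$, exactly as in Theorem \ref{thm:AM}.

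\textbf{Main obstacle.} The hard step is the identity
\begin{equation}
\mathcal{R}_{T'}\circ\mathcal{B}_{AB}(\rho_0)=\mathcal{B}_{AB}'(\rho_0)\otimes\ket{0}\!\bra{0}_{T'}
\end{equation}
for some channel $\mathcal{B}_{AB}'$ that leaves the $AB$ marginal and its correlations intact, so that replaying $\mathcal{Q}'$ is legitimate. In Theorem \ref{thm:AM} this works because the reset region is disjoint from $\mathrm{Supp}(\mathcal{B}_A)$. Here, however, $\mathrm{Supp}(\mathcal{B}_{AB})$ may overlap $E$ and $C(s)$.

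My first attempt would choose $T'$ as the set of sites of $CDE$ outside $\mathrm{Supp}(\mathcal{B}_{AB})$, and define $\mathcal{Q}'$ through the lightcone decomposition relative to $\mathcal{B}_{AB}$, so that every gate of $\mathcal{Q}'$ acts only on reset sites or on outputs of $\mathcal{B}_{AB}$ lying in $E\cup C$. If this fails, the fallback is to apply $\mathcal{P}_{E}$ after a partial recovery of $D$, using Theorem \ref{thm:AM}. This is justified because $\mathrm{dist}(B,D)\ge 2s$ makes $D$ recoverable from $E$, and the cost is a constant-factor increase in the error.
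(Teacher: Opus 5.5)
\textbf{There is a genuine gap, and it cannot be patched within your approach.} Your $\Phi=\mathcal{Q}'\circ\mathcal{R}_{T'}\circ\mathcal{P}_E$ is supported on $CDE$ and never acts on $B$. No such map can work in general.

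The input $\rho_{ABE}\otimes\ket{0}\!\bra{0}_{CD}$ has already discarded $C$. So the $B$--$C$ correlations created by gates straddling the $B$--$C$ boundary are gone. Those gates belong to $\mathcal{B}_{AB}$, hence not to $\mathcal{Q}'=\mathcal{B}_C\setminus\mathcal{B}_{AB}$. A channel on $CDE$ cannot recreate these correlations unless $B$ happens to be suitably correlated with $E$.

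Here is a concrete counterexample. Let $\mathcal{E}$ be a single two-qubit unitary preparing a Bell pair on adjacent sites $b\in B$, $c\in C$, with all other qubits in $\ket{0}$. This is allowed, since only $\mathrm{dist}(A,C)$ and $\mathrm{dist}(B,D)$ are constrained.
\begin{itemize}
\item The circuit is locally reversible with $\varepsilon_{\mathrm{LR}}=0$, so the theorem demands exact recovery.
\item In the input, $b$ is maximally mixed and in product with everything else. Hence any channel on $CDE$ outputs a product state on $bc$, at trace distance at least $1/2$ from the Bell pair.
\item In your construction specifically, $\mathcal{Q}'$ is empty and $c$ stays in $\ket{0}$.
\end{itemize}

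This is exactly your ``main obstacle'', and neither fix addresses it.
\begin{itemize}
\item Re-choosing $T'$ still gives a map on $CDE$. Moreover, the outputs of $\mathcal{B}_{AB}$ on $C$ that you want to keep were already erased by the input reset $\mathcal{R}_{CD}$.
\item The fallback via Theorem~\ref{thm:AM} on $D$ also leaves $B$ untouched. It is also unavailable as stated: $D$ may border $A$, so a buffer separating $D$ from the rest would have to lie partly in $A$.
\end{itemize}

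A secondary error is in Step 2. In general $\mathcal{B}_{ABD}\neq\mathcal{B}_{AB}\circ\mathcal{B}_D$, because $\mathrm{dist}(A,D)$ is unconstrained. The correct statement is Fact~\ref{fac:LCD-1}: $\mathcal{B}_{ABD}=\mathcal{Q}_D\circ\mathcal{B}_{AB}$ with $\mathrm{Supp}(\mathcal{Q}_D)\subseteq D(s)\setminus AB$.

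\textbf{The missing idea is to follow Theorem~\ref{thm:AM} more literally.} There the inversion acts on the separator $B$, and the recovery map acts on $BC$. Here you should invert on $BE$, not just on $E$.
\begin{enumerate}
\item \emph{Invert.} Lemma~\ref{lem:LI} gives $\|\mathcal{P}_{BE}(\rho)-\mathcal{B}_{ACD}(\rho_0)\|_1\le\varepsilon_{\mathrm{LI}}$. Since $\mathcal{P}_{BE}$ commutes with $\mathcal{R}_{CD}$, the conflict you noticed for $\mathcal{P}_{DE}$ never arises.
\item \emph{Reset.} Reset $F=CD(s)\setminus A$. Writing $\mathcal{B}_{ACD}=\mathcal{Q}_2\circ\mathcal{B}_A$ with $\mathrm{Supp}(\mathcal{Q}_2)\subseteq F$, the state becomes $\mathcal{R}_F\circ\mathcal{B}_A(\rho_0)$.
\item \emph{Split the reset.} Write $F=F_1F_2$ with $F_2=C(s)$ and $F_1=D(s)\setminus(A\cup C(s))$. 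The reset $\mathcal{R}_{F_2}$ acts trivially, because $C(s)$ misses $\mathrm{Supp}(\mathcal{B}_A)\subseteq A(s)$ by $\mathrm{dist}(A,C)\ge 2s$. The set $F_1$ lies in $DE$ and is disjoint from the support of $\mathcal{Q}_1=\mathcal{B}_{BC}\setminus\mathcal{B}_A$; this is where $\mathrm{dist}(B,D)\ge 2s$ is actually used. So $\mathcal{R}_{F_1}$ commutes past $\mathcal{Q}_1$ and is erased by $\mathrm{Tr}_{DE}$.
\item \emph{Replay.} Applying $\mathcal{Q}_1$ rebuilds $B$ and $C$ jointly, straddling gates included, and yields $\mathrm{Tr}_{DE}\,\mathcal{B}_{ABC}(\rho_0)=\rho_{ABC}$.
\end{enumerate}
With $\Phi=\mathcal{Q}_1\circ\mathcal{R}_{F\setminus CD}\circ\mathcal{P}_{BE}$, the error is $\varepsilon_{\mathrm{LI}}$ by contractivity. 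The essential point is that $B$ must be acted on: rewound by $\mathcal{P}_{BE}$, reset near $C$, and regenerated together with $C$. It cannot be held fixed while $C$ is appended.
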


\begin{figure}
    \centering
    \includegraphics[width=0.95\linewidth]{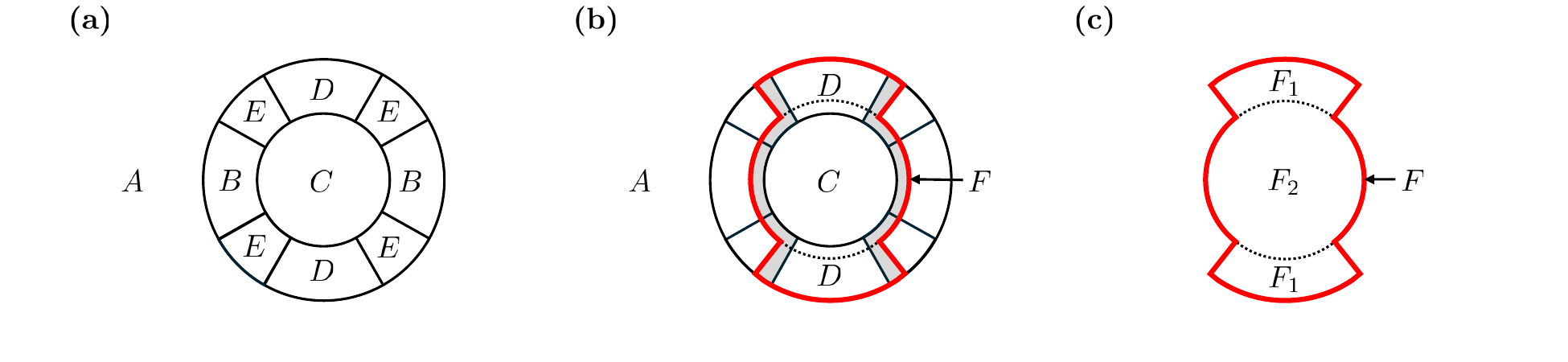}
    \caption{
    Schematic of the partition used in the proof of the existence of the local extension map (Theorem \ref{thm:LE}).
    (a) The pentapartition $\Lambda = A B C D E$ with $\mathrm{dist} (A, C) \geq 2 s$ and $\mathrm{dist} (B, D) \geq 2 s$.
    (b) The region $F$ is defined as $F = C D (s) \backslash A$ (surrounded by the red boundary). It can be obtained by extending $CD$ outwards by $CD(s) \backslash AC$ (gray region).
    (c) The partition $F = F_1 F_2$ where $F_1 = D (s) \backslash (A \cup C (s))$ and $F_2 = C (s)$. The boundary between $F_1$ and $F_2$ is depicted by dashed lines.
    }
    \label{fig:F}
\end{figure}
    
\begin{proof}
    Set $S = A B C$, $S_1 = A$ and $S_2 = B C$ in Fact \ref{fac:LCD-1}, we have
    \begin{equation}
        \mathcal{B}_{A B C} =\mathcal{Q}_1 \circ \mathcal{B}_A .
    \end{equation}
    where $\mathcal{Q}_1 =\mathcal{B}_{B C} \backslash \mathcal{B}_A$ with $\mathrm{Supp} (\mathcal{Q}_1) \subseteq B C (s) \backslash A$. Set $S = A C D$, \ $S_1 = A$ and $S_2 = C D$ in Fact \ref{fac:LCD-1}. We have
    \begin{equation}
        \mathcal{B}_{A C D} =\mathcal{Q}_2 \circ \mathcal{B}_A,
    \end{equation}
    where $\mathcal{Q}_2 =\mathcal{B}_{C D} \backslash \mathcal{B}_A$ with $\mathrm{Supp} (\mathcal{Q}_2) \subseteq C D (s) \backslash A$.
  
    According to Lemma \ref{lem:LI}, let $\mathcal{P}_{B E}$ acting on $B E$ such that
    \begin{equation}
        \| \mathcal{P}_{B E} (\rho) -\mathcal{B}_{A C D} (\rho_0) \|_1 \leq \varepsilon_{\mathrm{LI}} .
    \end{equation}
    
    Define $F = C D (s) \backslash A$. We also make partition $F = F_1 F_2$ where $F_1 = D (s) \backslash (A \cup C (s))$ and $F_2 = C (s)$. Since $\mathrm{dist} (A, C) \geq 2 s$ and $\mathrm{dist} (B, D) \geq 2 s$, we have the following set relations (see schematics in Figure \ref{fig:F})
    \begin{eqnarray}
        \mathrm{Supp} (\mathcal{Q}_2) \subseteq F, & \Rightarrow & \mathcal{R}_F \circ \mathcal{Q}_2 =\mathcal{R}_F, \\
        \mathrm{Supp} (\mathcal{Q}_1) \cap F_1 = \varnothing, & \Rightarrow & \mathcal{R}_{F_1} \circ \mathcal{Q}_1 =\mathcal{Q}_1 \circ \mathcal{R}_{F_1}, \\
        \mathrm{Supp} (\mathcal{B}_A) \cap F_2 = \varnothing, & \Rightarrow & \mathcal{R}_{F_2} \circ \mathcal{B}_A =\mathcal{B}_A \circ \mathcal{R}_{F_2}, \\
        F_1 \subseteq D E & \Rightarrow & \mathrm{Tr}_{D E} \circ \mathcal{R}_{F_1} = \mathrm{Tr}_{D E} . 
    \end{eqnarray}

    We construct (see Figure \ref{fig:LE}):
    \begin{equation}
        \Phi = \mathcal{Q}_1 \circ \mathcal{R}_{F \backslash C D} \circ \mathcal{P}_{B E} .
    \end{equation}
    where $\Phi$ is a channel circuit consisting of $(2d + 1)$ layers of $c$-local gates.

    \begin{figure}
        \centering
        \includegraphics[width=\linewidth]{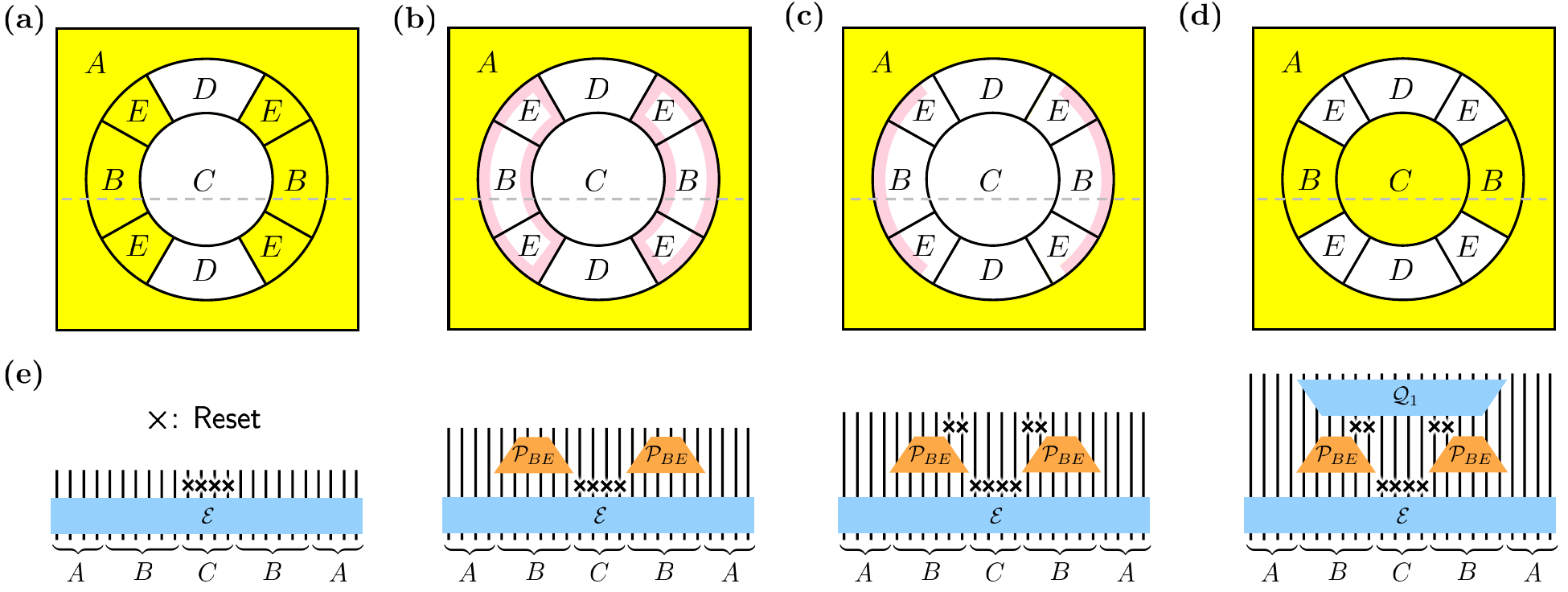}
        \caption{
        Schematic for the proof of the \textit{existence} of local extension maps (Theorem \ref{thm:LE}).
        We show the local extendibility by the explicit construction of the local extension $\mathrm{Tr}_{D E} \circ \Phi$ where $\Phi = \mathcal{Q}_1 \circ \mathcal{R}_{F \backslash C D} \circ \mathcal{P}_{B E}$.
        Yellow regions represent the regions consistent with $\rho$; white regions represent the regions being in $\ket{0}\!\bra{0}$; pink regions represent the edge of backward lightcone for $ACD$ (see Eq.\,(\ref{eq:edge_LC})).
        The step from (a) to (b) is from the local inversion $\mathcal{P}_{B E}$.
        The step from (b) to (c) is from the reset channel $\mathcal{R}_{F \backslash C D}$ acting on $F \backslash CD$ (\textit{i.e.}, the gray region in Figure \ref{fig:F}b).
        The step from (c) to (d) is from the map $\mathrm{Tr}_{D E} \circ \mathcal{Q}_1$ where $\mathcal{Q}_1 =\mathcal{B}_{B C} \backslash \mathcal{B}_A$ acting on $B C (s) \backslash A$. 
        (e) The evolution for the cross-section of $A B C$ (gray dashed lines in a-d), where the circuits are similar to Figure \ref{fig:AM}.
        }
        \label{fig:LE}
    \end{figure}
    
    We also notice that $\rho_{A B E} \otimes \ket{0} \! \bra{0}_{C D} =\mathcal{R}_{C D} (\rho)$. Then, we can plug $\Phi = \mathrm{Tr}_{D E} \circ \mathcal{Q}_1 \circ \mathcal{R}_{F \backslash C D} \circ \mathcal{P}_{B E}$ and bound the recovery error
    \begin{eqnarray}
        & & \left\| \rho_{A B C} - \mathrm{Tr}_{D E} \circ \Phi \left( \rho_{A B E} \otimes \ket{0} \bra{0}_{C D} \right) \right\|_1 \nonumber\\
        & = & \| \rho_{A B C} - \mathrm{Tr}_{D E} \circ \mathcal{Q}_1 \circ \mathcal{R}_{F \backslash C D} \circ \mathcal{P}_{B E} \circ \mathcal{R}_{C D} (\rho) \|_1 \nonumber\\
        & \overset{\text{(i)}}{=} & \| \rho_{A B C} - \mathrm{Tr}_{D E} \circ \mathcal{Q}_1 \circ \mathcal{R}_F \circ \mathcal{P}_{B E} (\rho) \|_1 \nonumber\\
        & \overset{\text{(ii)}}{\leq} & \| \rho_{A B C} - \mathrm{Tr}_{D E} \circ \mathcal{Q}_1 \circ \mathcal{R}_F \circ \mathcal{B}_{A C D} (\rho_0) \|_1 \nonumber\\
        & & + \| \mathrm{Tr}_{D E} \circ \mathcal{Q}_1 \circ \mathcal{R}_F \circ \mathcal{B}_{A C D} (\rho_0) - \mathrm{Tr}_{D E} \circ \mathcal{Q}_1 \circ \mathcal{R}_F \circ \mathcal{P}_{B E} (\rho) \|_1 . \label{eq:tri}
    \end{eqnarray}
    Here, equality (i) is from the commutability between $\mathcal{P}_{B E}$ and $\mathcal{R}_{C D}$, and reset channel property $\mathcal{R}_{F \backslash C D} \circ \mathcal{R}_{C D} =\mathcal{R}_F$; inequality (ii) is from the triangle inequality. The first term in Eq.\,(\ref{eq:tri}) is
    \begin{eqnarray}
        \| \rho_{A B C} - \mathrm{Tr}_{D E} \circ \mathcal{Q}_1 \circ \mathcal{R}_F \circ \mathcal{B}_{A C D} (\rho_0) \|_1 & \overset{\text{(iii)}}{=} & \| \rho_{A B C} - \mathrm{Tr}_{D E} \circ \mathcal{Q}_1 \circ \mathcal{R}_F \circ \mathcal{Q}_2 \circ \mathcal{B}_A (\rho_0) \|_1 \nonumber\\
        & \overset{\text{(iv)}}{=} & \| \rho_{A B C} - \mathrm{Tr}_{D E} \circ \mathcal{Q}_1 \circ \mathcal{R}_F \circ \mathcal{B}_A (\rho_0) \|_1 \nonumber\\
        & \overset{\text{(v)}}{=} & \| \rho_{A B C} - \mathrm{Tr}_{D E} \circ  \mathcal{R}_{F_1} \circ \mathcal{Q}_1 \circ \mathcal{B}_A \circ \mathcal{R}_{F_2} (\rho_0) \|_1 \nonumber\\
        & \overset{\text{(vi)}}{=} & \| \rho_{A B C} - \mathrm{Tr}_{D E} \circ \mathcal{B}_{A B C} (\rho_0) \|_1 = 0. 
    \end{eqnarray}
    Here, equality (iii) is from the decomposition $\mathcal{B}_{A C D} =\mathcal{Q}_2 \circ \mathcal{B}_A$; equality (iv) is from the fact $\mathcal{R}_F \circ \mathcal{Q}_2 =\mathcal{R}_F$; for equality (v) we first use the decomposition $\mathcal{R}_F =\mathcal{R}_{F_1} \circ \mathcal{R}_{F_2}$, then we use the fact that $\mathcal{R}_{F_1}$ commutes with $\mathcal{Q}_1$, and $\mathcal{R}_{F_2}$ commutes with $\mathcal{B}_A$; equality (vi) is from the fact $\mathrm{Tr}_{D E} \circ \mathcal{R}_{F_1} = \mathrm{Tr}_{D E}$, the decomposition $\mathcal{B}_{A B C} =\mathcal{Q}_1 \circ \mathcal{B}_A$ and the fact $\mathcal{R}_{F_2} (\rho_0) = \rho_0$ for all-zero state.
  
    The second term in Eq.\,(\ref{eq:tri}) is bounded using the contractivity of the CPTP map $\mathrm{Tr}_{D E} \circ \mathcal{Q}_1 \circ \mathcal{R}_F$
    \begin{equation}
        \| \mathrm{Tr}_{D E} \circ \mathcal{Q}_1 \circ \mathcal{R}_F \circ \mathcal{B}_{A C D} (\rho_0) - \mathrm{Tr}_{D E} \circ \mathcal{Q}_1 \circ \mathcal{R}_F \circ \mathcal{P}_{B E} (\rho) \|_1 \leq \| \mathcal{B}_{A C D} (\rho_0) -\mathcal{P}_{B E} (\rho) \|_1 \leq \varepsilon_{\mathrm{LI}} .
    \end{equation}
    Put the results together:
    \begin{equation}
        \left\| \rho_{A B C} - \mathrm{Tr}_{D E}\!\left( \Phi\!\left( \rho_{A B E} \otimes \ket{0} \! \bra{0}_{C D} \right) \right) \right\|_1 \leq \varepsilon_{\mathrm{LI}},
    \end{equation}
    which completes the proof. 
\end{proof}

The proof of Theorem \ref{thm:LE} explains why area $E$ is needed. Intuitively, after applying $\mathcal{Q}_1 \circ \mathcal{R}_F \circ \mathcal{P}_{B E}$, the area $E$ contains wrong information of $\rho_E$, and some incorrect entanglement between $E$ and $ABC$ accumulates. Refreshing $E$ into $\ket{0}\!\bra{0}_E$ eliminates these errors and prevents them from propagating within the state generation.

\begin{corollary}
    \label{cor:LE}
    Suppose a trivial phase mixed state $\rho =\mathcal{E} (\rho_0)$, and $\rho_0$ is $\varepsilon_{\mathrm{LR}}$-locally reversible from $\rho$ along the shallow channel circuit $\mathcal{E}$. Now, consider a partition $\Lambda = A B C D E$ with $\mathrm{dist} (A, C) \geq 2 s$ and $\mathrm{dist} (B, D) \geq 2 s$. For any partition $A = A_1 A_2$, there always exists a channel acting on $B E \rightarrow B C$ such that
    \begin{equation}
        \| \rho_{A_1 B C} - \Phi_{B E \rightarrow B C} (\rho_{A_1 B E}) \|_1 \leq \varepsilon_{\mathrm{LI}} .
    \end{equation}
\end{corollary}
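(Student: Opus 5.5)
The plan is to copy the proof of Corollary \ref{cor:AM}: apply Theorem \ref{thm:LE} once, then discard the subsystem $A_2$ using contractivity of the partial trace. The hypotheses of Theorem \ref{thm:LE} are exactly those of the present statement: the pentapartition $\Lambda = ABCDE$ with $\mathrm{dist}(A,C)\geq 2s$ and $\mathrm{dist}(B,D)\geq 2s$, and $\rho = \mathcal{E}(\rho_0)$ locally reversible along $\mathcal{E}$. So there is a channel $\Phi_{BE\to BC}(X_{BE}) = \mathrm{Tr}_{DE}(\Phi(X_{BE}\otimes \ket{0}\!\bra{0}_{CD}))$ with
\begin{equation}
\| \rho_{ABC} - \Phi_{BE\to BC}(\rho_{ABE}) \|_1 \leq \varepsilon_{\mathrm{LI}} .
\end{equation}
We keep this same channel, which is defined independently of how $A$ is split.

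Next, write $A = A_1A_2$ and apply the CPTP map $\mathrm{Tr}_{A_2}$ to both operators inside the norm. Since $\Phi_{BE\to BC}$ acts only on $BE\to BC$, it acts trivially on $A_2$. Hence $\mathrm{Tr}_{A_2}$ commutes with it:
\begin{equation}
\mathrm{Tr}_{A_2}\bigl(\Phi_{BE\to BC}(\rho_{ABE})\bigr) = \Phi_{BE\to BC}(\rho_{A_1BE}),
\end{equation}
and also $\mathrm{Tr}_{A_2}\rho_{ABC} = \rho_{A_1BC}$. Contractivity of the trace norm under $\mathrm{Tr}_{A_2}$ then gives $\| \rho_{A_1BC} - \Phi_{BE\to BC}(\rho_{A_1BE}) \|_1 \leq \varepsilon_{\mathrm{LI}}$, which is the claim.

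No step here is a real obstacle. The only point needing care is the commutation of $\mathrm{Tr}_{A_2}$ with $\Phi_{BE\to BC}$. This holds because the dilation $\Phi$ in Theorem \ref{thm:LE} is supported on $BCDE$, which is disjoint from $A$. In particular, it does not touch $A_2$ even though the inversion $\mathcal{P}_{BE}$ appears inside $\Phi$.
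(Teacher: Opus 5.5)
Your proposal is correct and matches the paper's own proof: it applies Theorem \ref{thm:LE} to the full pentapartition, then applies $\mathrm{Tr}_{A_2}$ and uses contractivity of the trace norm. The paper leaves implicit that the dilation $\Phi=\mathcal{Q}_1\circ\mathcal{R}_{F\backslash CD}\circ\mathcal{P}_{BE}$ is supported in $BCDE$, so it commutes with $\mathrm{Tr}_{A_2}$; you checked this explicitly, and the check is right.
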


\begin{proof}
    We take the partial trace $\mathrm{Tr}_{A_2}$, and the corollary is implied by the contractivity of $\mathrm{Tr}_{A_2}$.
\end{proof}

\section{Learning Algorithm}
\label{sec:learning_alg}

To implement the learning and generation procedure, we first introduce a suitable covering scheme of the lattice in Section \ref{sec:cover_scheme}. Based on this covering, we construct a sequence of local extension steps that progressively reconstruct the target state $\rho$ patch by patch. At a high level, each step applies a learned local extension map so that, after all patches are processed, the resulting state approximates $\rho$.

However, realizing this strategy efficiently is nontrivial. Several technical obstacles arise in learning and certifying the required local extension maps. In particular, (i) requirement of global information for certification; (ii) approximation error in the optimization learning procedures; and (iii) local tomography error. We address these three issues in turn in Section \ref{sec:learning_locally}, Section \ref{sec:SDP}, and Section \ref{sec:imperfect_local_tomography}, and summarize the overall learning error in Section \ref{sec:overall_error_single}.

\subsection{Covering scheme}
\label{sec:cover_scheme}

\begin{figure}[t]
    \centering
    \includegraphics[width=0.9\linewidth]{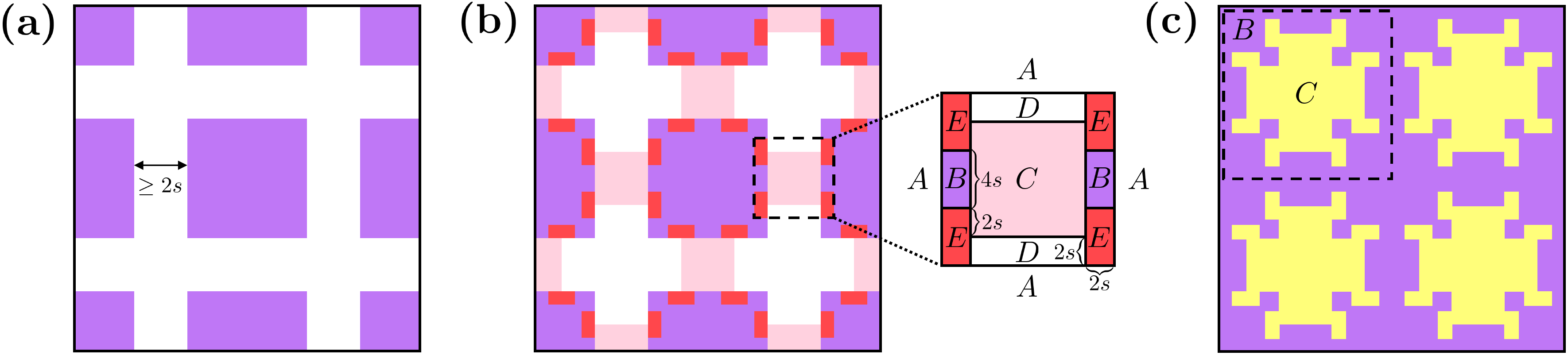}
    \caption{
    Details of the covering scheme for the case $k=2$, simplified from Figure \ref{fig:covering}.
    (a) Layer 1: learn and generate all local states in purple, and all patches are separated by a distance at least $2s$.
    (b) Layer 2: local extension maps, which extend to new regions $C$ (pink) from $BE$ (purple and red) and discard $E$ (red).
    (c) Layer 3: local recovery maps, which extend to regions $C$ (yellow) from $B$ (purple).
    }
    \label{fig:covering-dim}
\end{figure}

In this section, we present a covering scheme that enables us to generate $\rho$ from $\ket{0}\!\bra{0}^{\otimes n}$ step by step. This scheme was first proposed in Ref.\,\cite{kim_2024_learning} to learn any trivial phase pure state.

\begin{theorem}[Covering scheme, Section 4.1 of \cite{kim_2024_learning}]
    \label{thm:cover_scheme}
    There exists a sequence of regions $S_0, S_1, \cdots, S_K$ and pentapartitions $\{ \Lambda = A_i B_i C_i D_i E_i \}_{i \in [K]}$ where $K \leq n$, satisfying the following five conditions:
    \begin{enumerate}
        \item $S_0 = \varnothing$ and $S_K = \Lambda$.
        
        \item For any $i \in [K]$, we have $B_i \subseteq S_{i - 1} \cap S_i$, $C_i = S_i \backslash S_{i - 1}$, $D_i \subseteq \overline{S_i \cup S_{i - 1}}$, $E_i = S_{i - 1} \backslash S_i$, and $\mathrm{dist} (A_i, C_i) \geq 2 s$, $\mathrm{dist} (B_i, D_i) \geq 2 s$.
        
        \item $| B_i C_i D_i E_i | = O (c d)^k$, and both $C_i$ and $B_i C_i D_i E_i$ are simply-connected.
        
        \item The partitions can be arranged into $k + 1$ layers and each layer contains $K_i$ partitions, such that the support $\mathrm{Supp} (B_i C_i D_i E_i) \cap \mathrm{Supp} (B_j C_j D_j E_j) = \varnothing$ if partitions $i$ and $j$ are in the same layer. We adopt the convention that the partition index $i$ increases with the layer index.
        
        \item For the partition $i$ in layer $1$, $B_i D_i E_i = \varnothing$ and all $C_i$ in layer 1 are separated by a distance at least $2s$; for the partition $i$ in layer $k + 1$, $D_i E_i = \varnothing$. 
    \end{enumerate}
\end{theorem}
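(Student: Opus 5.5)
We plan to give an explicit geometric construction, following the one in Section 4.1 of Ref.\,\cite{kim_2024_learning}, organised by induction on the dimension of the cells being filled. Tile the $k$-dimensional lattice $\Lambda$ by hypercubes of side length $\ell_0 = \Theta(s) = \Theta(cd)$, with a sufficiently large constant factor (say $\ell_0 \geq 10 s$). This tiling induces a cell decomposition into $j$-dimensional faces for $j = 0, 1, \dots, k$. The faces are thickened into slabs: a $j$-face becomes the set of sites within distance $w_j$ of it, with widths $w_0 > w_1 > \cdots > w_{k-1}$, all of order $s$. Consecutive widths must differ by at least $4s$ so that the buffer conditions hold. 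Layer $j+1$, for $j = 0,\dots,k$, fills the thickened $j$-faces, or equivalently the $k$-cells for the last layer. Before the layer, $S_{i-1}$ is the union of everything learned so far.

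\textbf{Layer 1.} We take each $C_i$ to be a thickened vertex, shrunk by a margin of $s$, so that distinct $C_i$ lie at mutual distance at least $2s$. We set $B_i D_i E_i = \varnothing$, which gives condition (5) for layer 1.

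\textbf{Layers $2$ through $k$.} For a thickened $j$-face $f$, we take $C_i$ to be the part of $f$ not yet learned. The region $B_i$ is the previously learned material (the thickened lower-dimensional faces) adjacent to $f$, within distance of order $s$ of $C_i$. The region $E_i$ is the learned material just beyond $B_i$, which will be discarded to prevent wrong correlations from accumulating. The region $D_i$ is an unlearned buffer of width at least $2s$ around $C_i \cup B_i$, transverse to $f$. The complement is $A_i$. Because $E_i$ sits between $B_i$ and $A_i$ with width at least $2s$, we obtain $\mathrm{dist}(A_i, C_i) \geq 2s$. Because $D_i$ is wrapped by $E_i$ on the learned side and $D_i$ has width $2s$, we obtain $\mathrm{dist}(B_i, D_i) \geq 2s$; the faces are arranged precisely so that both distances can hold at once. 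Then $S_i = (S_{i-1} \setminus E_i) \cup C_i$, which gives condition (2). Distinct $j$-faces, together with their supports $B_iC_iD_iE_i$, are pairwise disjoint provided each support is confined to a neighbourhood of $f$ that does not reach the cores of other $j$-faces; this uses $\ell_0 \gg s$ and gives condition (4). Each such support lies inside a ball of radius $O(s)$, so $|B_iC_iD_iE_i| = O(s)^k = O(cd)^k$, which is condition (3). Simple connectivity is immediate since $C_i$ is a thickened convex face segment and the support is a box.

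\textbf{Layer $k+1$.} What remains unlearned are the interiors of the $k$-cells, each a hole surrounded by learned material. We take $C_i$ to be the hole, $B_i$ a shell of width $2s$ around it, and $D_i E_i = \varnothing$ as required by condition (5). At the end $S_K = \Lambda$, and the number of steps is at most the number of faces, so $K \leq n$.

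\textbf{Main obstacle.} The hard step is the bookkeeping in layers $2$ through $k$. The discarded regions $E_i$ at layer $j$ must not remove material that later layers need as $B$. In addition, $S_{i-1}$ must be consistent across parallel partitions in the same layer, which matters because within a layer the steps act simultaneously. We handle this by choosing the widths decreasing, $w_j = w_{j+1} + 4s$, so that each layer discards only an outer collar of the previously learned material. That collar lies away from the next, lower-dimensional-in-complement faces, and the surviving learned set is always a union of thickened faces of smaller width. We verify the distance conditions locally in a single cube, using symmetry, and the rest follows from translation invariance of the tiling.
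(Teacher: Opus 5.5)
The paper gives no proof of its own for this statement. It imports the covering of Section~4.1 of \cite{kim_2024_learning} and only pictures the case $k=2$. Your global architecture matches it: cubic cells, layer $j+1$ filling thickened $j$-faces of decreasing widths, and a final layer of recovery maps on the holes. However, the local pentapartition you describe for layers $2,\dots,k$ is wrong, and that is the only place where condition (2) has content.

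You take $B_i$ to be all learned material within $O(s)$ of $C_i$. You put $E_i$ ``just beyond $B_i$'', between $B_i$ and $A_i$. You let $D_i$ be an unlearned buffer around $C_i\cup B_i$. This fails for the following reason.

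Since $w_{j-1}>w_j$, the lower-dimensional slabs protrude transversally past $C_i$. Learned points on that protrusion at distance $O(1)$ from $C_i$ are adjacent to unlearned points that are also at distance $O(1)$ from $C_i$. The unlearned points must lie in $D_i$, since otherwise $\mathrm{dist}(A_i,C_i)<2s$. The learned points can then lie neither in $A_i$ (too close to $C_i$) nor in $B_i$ (adjacent to $D_i$), so they must be discarded.

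So with your placement, $\mathrm{dist}(B_i,D_i)\ge 2s$ fails. $E_i$ has to sit between $B_i$ and $D_i$, at the rim where $C_i$, the old learned set and unlearned space meet. It does not belong between $B_i$ and $A_i$; $B_i$ is allowed to touch $A_i$. Also, the width of $D_i$ has no bearing on $\mathrm{dist}(B_i,D_i)$.

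The clean choice is:
\begin{itemize}
\item $D_i$ = the unlearned points within $2s$ of $C_i$;
\item $E_i$ = the learned points within $2s$ of both $C_i$ and $D_i$;
\item $B_i$ = the remaining learned points within $2s$ of $C_i$.
\end{itemize}
With this choice both distance conditions hold automatically, and the support is exactly the $2s$-neighbourhood of $C_i$.

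All the real work is then global, and your bookkeeping claims do not survive.
\begin{itemize}
\item \textbf{Surviving learned set.} It is not a union of thinner thickened faces. Collars at transverse distance roughly $(w_j-2s,w_j+2s)$ from the face are carved out. For $k\ge 3$ these collars cut across later slabs; for example, an edge's collar crosses the slabs of the faces containing that edge.
\item \textbf{Absorption and disjointness.} Your recipe ``$C_i$ = unlearned part of $f$'' does absorb these collars, and it must, or the final holes on the two sides of a slab would merge and the last layer would stop being local. But absorption pushes each $C_i$ about $2s$ into the lower-dimensional slab. Same-layer disjointness requires same-layer $C$'s to be $\gtrsim 4s$ apart. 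That separation comes from the width decrement near shared lower faces, not from $\ell_0\gg s$. So the decrement must cover $4s$ plus this penetration. Concretely, $w_{j-1}-w_j=4s$ already fails at layer~3 for $k=3$ under the Euclidean or sup metric.
\item \textbf{Final holes.} They are cell interiors plus leftover collars, which requires $w_{k-1}\gtrsim 4s$.
\item \textbf{Cell size.} With $k$ decrements of at least $4s$, you need $\ell_0=\Omega(ks)$, not $10s$.
\item \textbf{Simple connectivity.} $C_i$ is not convex and the support is not a box, so simple connectivity needs an actual argument.
\end{itemize}

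Finally, the statement itself is inconsistent in layer~1. With $B_iD_iE_i=\varnothing$, $A_i=\overline{C_i}$ touches $C_i$. So the distance requirements in (2) can only be meant for layers $\ge 2$, with layer~1 handled by Fact~\ref{fac:LCD-3}, as the paper does after the theorem. Your layer-1 paragraph does not notice this.
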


According to the condition 3 and 4 in Theorem \ref{thm:cover_scheme}, we know that there exists a sequence of quantum channels $\{ \mathcal{W}_i \}_{i \in [K]}$ such that
\begin{equation}
    \| \mathcal{W}_i (\rho_{S_{i - 1}}) - \rho_{S_i} \|_1 \leq \varepsilon_{\mathrm{LI}}. \label{eq:1eLI}
\end{equation}
These $K$ channels $\{ \mathcal{W}_i\}_{i \in [K]}$ can be arranged into $k+1$ layers.
In each layer, the channels $\mathcal{W}_i$ in this layer are non-overlapping local channels so that they can act in parallel. The condition 5 in Theorem \ref{thm:cover_scheme} says that: in layer 1, $\mathcal{W}_i$ is a trivial state generation channel for the local state $\rho_{C_i}$, and $\rho_{S_{K_1}} = \rho_{C_1} \otimes \cdots \otimes \rho_{C_{K_1}}$ according to Fact \ref{fac:LCD-3}; in layer $k + 1$, $\mathcal{W}_i = \Psi_{B_i \rightarrow B_i C_i}$ is a local recovery map; in the rest layers, $\mathcal{W}_i = \Phi_{B_i E_i \rightarrow B_i C_i}$ is a local extension map.
Finally, condition 1 in Theorem \ref{thm:cover_scheme} roughly says that $\rho_{S_0} = \rho_0$ and $\rho_{S_K} = \rho$, namely $\mathcal{W}_K \circ \cdots \circ \mathcal{W}_1$ approximately generates $\rho$ from $\rho_0$ (see quantitative error analysis in Section \ref{sec:analysis}).
We provided a visualization of the $k = 2$ case in Figure \ref{fig:covering-dim}, which generates a 2D trivial phase mixed state using 3 layers of local channels.

Theorem \ref{thm:cover_scheme} also answers why $D$ appears in the pentapartion $\Lambda = ABCDE$ in Definition \ref{def:LE}, but $D$ does not appear explicitly in Eq.\,(\ref{eq:def_LE}). 
If we absorb $D$ into $A$, then the proof of Theorem \ref{thm:LE} does not work, because $A$ contacts $C$ and a map $\Phi_{BE \to BC}(\rho_{A B E}) \approx \rho_{A B C}$ may not exist.
On the other hand, if we absorb $D$ into $E$, the proof of Theorem \ref{thm:LE} still works. However, it leads to the region $B E$ forming a connected region, which does not always occur in the covering scheme described in Theorem \ref{thm:cover_scheme}.

So far, we have already shown that there \textit{exists} a sequence of local extension maps that can approximately generate $\rho$ from $\rho_0$. 
However, in our constructive proofs in Section \ref{sec:proof_AM} and Section \ref{sec:proof_LE}, we require the knowledge of each $\mathcal{E}_{\ell, x}$ and $\tilde{\mathcal{E}}_{\ell, x}$. This information is not accessible because we only have access to the unknown state, not the channel circuit that prepares it. 
Therefore, it is not clear how to operably learn these $\{ \mathcal{W}_i \}_{i \in [K]}$ in an efficient way.
There are three difficulties for learning these local extension maps efficiently and accurately:
\begin{enumerate}
    \item Although a local extension map $\Phi_{B E \rightarrow B C}$ acts only on the local region $B E$ (and produces output on $B C$), its correctness is defined through a \textit{global} partial recovery condition:
    $\rho_{A B C} \approx \Phi_{B E \rightarrow B C} (\rho_{A B E})$ which involves the full region $A$ (see Eq.\,(\ref{eq:def_LE})). 
    Thus, certifying or learning such a channel appears to require access to global information, even though the map itself is local.
    \item The learning of each local extension map $\mathcal{W}_i$ is implemented via an optimization procedure (e.g., a semidefinite programming). However, any numerical or algorithmic optimization inevitably introduces approximation error.
    \item Learning $\Phi_{B E \rightarrow B C}$ requires accurate knowledge of the local reduced state $\rho_{BCDE}$, which must itself be obtained via quantum tomography. In practice, the number of copies and measurements is finite, leading to statistical estimation error. Such tomography error directly affects the quality of the learned extension map and must be carefully incorporated into the overall sample and time complexity analysis.
\end{enumerate}
We will tackle these three problems in the remaining part of this section. 

\subsection{Learning with spatial cutoff}
\label{sec:learning_locally}

We first deal with the first difficulty mentioned in Section \ref{sec:cover_scheme}. The solution to avoiding certifying the global partial recovery condition $\rho_{A B C} \approx \Phi_{B E \rightarrow B C} (\rho_{A B E})$ is manipulating the learning procedure only on a much smaller local system. In general, learning a local extension map for a smaller system does not ensure that the learned local extension map also works for the global system.
Thanks to the approximate Markovianity of the trivial phase mixed states, we now show that the learning of $\Phi_{B E \rightarrow B C}$ can be done only using a local system without causing an unbounded error.

To make our argument be more precise, we take the partition $A = A_{\mathrm{in}} A_{\mathrm{out}}$, where $A_{\mathrm{in}}$ is the buffer between $A_{\mathrm{out}}$ and $B C D E$ such that $\mathrm{dist} (B C D E, A_{\mathrm{out}}) \geq 2 s$ (see Figure \ref{fig:Ain}).

\begin{lemma}[Learning extension maps locally]
    \label{lem:learning_locally}
    Suppose a trivial phase mixed state $\rho =\mathcal{E} (\rho_0)$, and $\rho_0$ is $\varepsilon_{\mathrm{LR}}$-locally reversible from $\rho$ along the shallow channel circuit $\mathcal{E}$. Suppose a channel $\Phi_{B E \rightarrow B C}$ satisfying
    \begin{equation}
        \| \rho_{A_{\mathrm{in}} B C} - \Phi_{B E \rightarrow B C} (\rho_{A_{\mathrm{in}} B E}) \|_1 \leq \varepsilon_{\mathrm{LE}} .
    \end{equation}
    Then, we have
    \begin{equation}
    \| \rho_{A B C} - \Phi_{B E \rightarrow B C} (\rho_{A B E}) \|_1 \leq \varepsilon_{\mathrm{LE}} + 2 \varepsilon_{\mathrm{LI}} .
    \end{equation}
\end{lemma}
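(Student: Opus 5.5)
The plan is to use approximate Markovianity (Theorem \ref{thm:AM}) twice: once to rebuild $\rho_{ABE}$ from $\rho_{A_{\mathrm{in}} B E}$, and once to rebuild $\rho_{ABC}$ from $\rho_{A_{\mathrm{in}} B C}$. Both reconstructions will use one and the same recovery channel, acting on $A_{\mathrm{in}}$ and producing $A_{\mathrm{out}}$. The local extension map $\Phi_{BE\to BC}$ acts only on $BE$, and this recovery channel acts only on $A_{\mathrm{in}}\to A_{\mathrm{in}}A_{\mathrm{out}}$, so the two commute. This lets us move the local guarantee to the global one at the cost of two Markov errors.

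\textbf{Step 1 (one recovery map serving both states).} Take the tripartition $\Lambda = A_{\mathrm{out}} \cdot A_{\mathrm{in}} \cdot BCDE$, with $A_{\mathrm{in}}$ as separator. By the buffer assumption, $\mathrm{dist}(BCDE, A_{\mathrm{out}}) \geq 2s$. Theorem \ref{thm:AM}, in its symmetric form $\Psi_{B\to BA}$, then yields a channel $\Psi := \Psi_{A_{\mathrm{in}} \to A_{\mathrm{in}} A_{\mathrm{out}}}$ with
\begin{equation}
\| \rho - \Psi(\rho_{A_{\mathrm{in}} BCDE}) \|_1 \leq \varepsilon_{\mathrm{LI}}.
\end{equation}
Tracing out $CD$ gives $\|\rho_{ABE} - \Psi(\rho_{A_{\mathrm{in}}BE})\|_1 \leq \varepsilon_{\mathrm{LI}}$, and tracing out $DE$ gives $\|\rho_{ABC} - \Psi(\rho_{A_{\mathrm{in}}BC})\|_1 \leq \varepsilon_{\mathrm{LI}}$. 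Both follow by contractivity of the partial trace, as in Corollary \ref{cor:AM}. The essential point is that a single $\Psi$ works for both marginals, because it is derived from the full state $\rho$.

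\textbf{Step 2 (triangle inequality).} Write
\begin{align}
\|\rho_{ABC} - \Phi(\rho_{ABE})\|_1 \leq{}& \|\rho_{ABC} - \Psi(\rho_{A_{\mathrm{in}}BC})\|_1 + \|\Psi(\rho_{A_{\mathrm{in}}BC}) - \Psi\circ\Phi(\rho_{A_{\mathrm{in}}BE})\|_1 \nonumber\\
&+ \|\Phi\circ\Psi(\rho_{A_{\mathrm{in}}BE}) - \Phi(\rho_{ABE})\|_1 .
\end{align}
In the third term we used $\Psi\circ\Phi = \Phi\circ\Psi$, which holds because the two channels have disjoint supports.
\begin{itemize}
\item The first term is at most $\varepsilon_{\mathrm{LI}}$ by Step 1.
\item The second term is at most $\|\rho_{A_{\mathrm{in}}BC} - \Phi(\rho_{A_{\mathrm{in}}BE})\|_1 \leq \varepsilon_{\mathrm{LE}}$, by contractivity of $\Psi$ and the hypothesis.
\item The third term is at most $\|\Psi(\rho_{A_{\mathrm{in}}BE}) - \rho_{ABE}\|_1 \leq \varepsilon_{\mathrm{LI}}$, by contractivity of $\Phi$ and Step 1.
\end{itemize}
Summing gives $\varepsilon_{\mathrm{LE}} + 2\varepsilon_{\mathrm{LI}}$.

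\textbf{Main obstacle.} The delicate point is justifying Step 1: Theorem \ref{thm:AM} is stated for a tripartition with the distance condition between the two outer regions. Here those outer regions are $A_{\mathrm{out}}$ and $BCDE$, and the buffer hypothesis gives exactly $\mathrm{dist}(A_{\mathrm{out}}, BCDE) \geq 2s$, so the theorem applies directly. Once that is in place, the rest is bookkeeping: $\Psi$ depends only on $\rho$ and not on which of $C$ or $E$ is kept, and the commutation of the two channels is exact.
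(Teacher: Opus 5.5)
Your proposal is correct and follows the paper's proof essentially step for step. It uses a single recovery map $\Psi_{A_{\mathrm{in}} \to A}$ from Theorem \ref{thm:AM} (with $A_{\mathrm{in}}$ separating $A_{\mathrm{out}}$ from $BCDE$), contractivity to obtain both marginal bounds, commutation with $\Phi_{BE \to BC}$, and the same three-term triangle inequality. You are somewhat more explicit than the paper about which tripartition and which (symmetric) form of Theorem \ref{thm:AM} is invoked.
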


\begin{figure}
    \centering
    \includegraphics[width=0.6\linewidth]{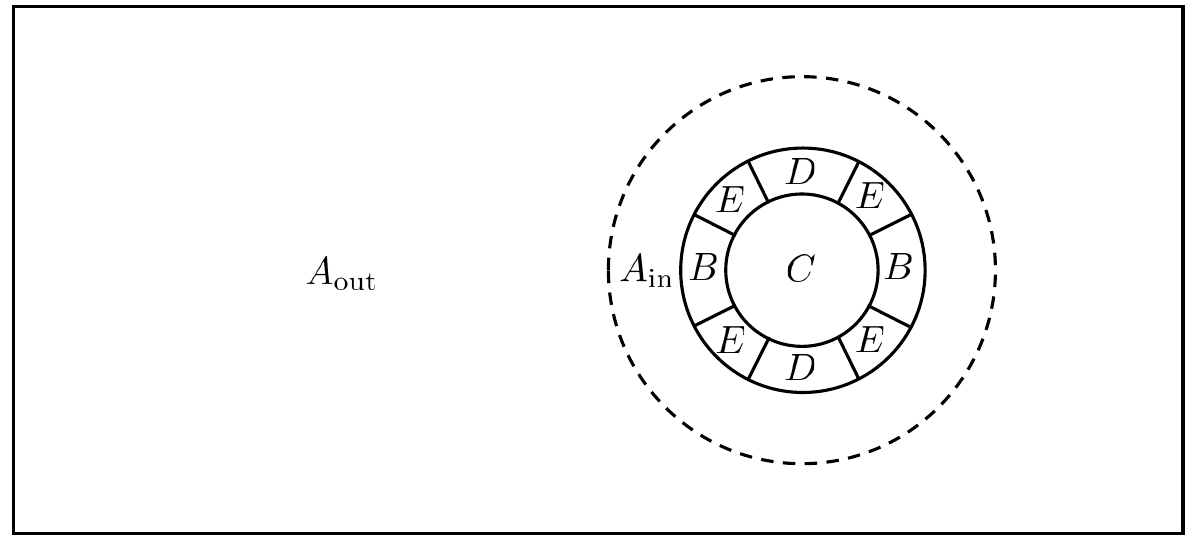}
    \caption{
    For a trivial phase mixed state $\rho$, if there is a local extension map $\Phi_{B E \to B C}$ learned on the local state $\rho_{A_{\mathrm{in}} B C D E}$ with high accuracy, then $\Phi_{B E \to B C}$ is also a local extension map for global state $\rho_{A B C D E}$ with high accuracy.
    }
    \label{fig:Ain}
\end{figure}

\begin{proof}
    Because $\rho$ trivial phase mixed state, then according to Theorem \ref{thm:AM}, there exists a local recovery map $\Psi_{A_{\mathrm{in}} \rightarrow A}$ such that
    \begin{equation}
        \| \Psi_{A_{\mathrm{in}} \rightarrow A} (\rho_{A_{\mathrm{in}} B C D E}) - \rho_{A B C D E} \| \leq \varepsilon_{\mathrm{LI}} .
    \end{equation}
    We remark here that $\Phi_{B E \rightarrow B C}$ and $\Psi_{A_{\mathrm{in}} \rightarrow A}$ commute. According to the contractivity of the CPTP map, this also guarantees
    \begin{align}
        \| \Psi_{A_{\mathrm{in}} \rightarrow A} (\rho_{A_{\mathrm{in}} B E}) - \rho_{A B E} \| & \leq \varepsilon_{\mathrm{LI}}, \\
        \| \Psi_{A_{\mathrm{in}} \rightarrow A} (\rho_{A_{\mathrm{in}} B C}) - \rho_{A B C} \| & \leq \varepsilon_{\mathrm{LI}} . 
    \end{align}
    Now, using the triangle inequality
    \begin{align}
        & \| \rho_{A B C} -\mathcal{I}_A \otimes \Phi_{B E \rightarrow B C} (\rho_{A B E}) \|_1 \nonumber\\
        \leq~& \| \rho_{A B C} - \Psi_{A_{\mathrm{in}} \rightarrow A} (\rho_{A_{\mathrm{in}} B C}) \|_1 + \| \Psi_{A_{\mathrm{in}} \rightarrow A} (\rho_{A_{\mathrm{in}} B C}) - \Psi_{A_{\mathrm{in}} \rightarrow A} (\Phi_{B E \rightarrow B C} (\rho_{A_{\mathrm{in}} B E})) \|_1 \nonumber\\
        & + \| \Psi_{A_{\mathrm{in}} \rightarrow A} (\Phi_{B E \rightarrow B C} (\rho_{A_{\mathrm{in}} B E})) - \Phi_{B E \rightarrow B C} (\rho_{A B E}) \|_1 . \label{eq:learning_locally_triangle}
    \end{align}
    The first term in Eq.\,(\ref{eq:learning_locally_triangle}) is bounded by $\| \rho_{A B C} - \Psi_{A_{\mathrm{in}} \rightarrow A} (\rho_{A_{\mathrm{in}} B C}) \|_1 \leq \varepsilon_{\mathrm{LI}}$. The second term in Eq.\,(\ref{eq:learning_locally_triangle}) is bounded by the contractivity of the CPTP map $\Psi_{A_{\mathrm{in}} \rightarrow A}$
    \begin{equation}
        \| \Psi_{A_{\mathrm{in}} \rightarrow A} (\rho_{A_{\mathrm{in}} B C}) - \Psi_{A_{\mathrm{in}} \rightarrow A} (\Phi_{B E \rightarrow B C} (\rho_{A_{\mathrm{in}} B E})) \|_1 \leq \| \rho_{A_{\mathrm{in}} B C} - \Phi_{B E \rightarrow B C} (\rho_{A_{\mathrm{in}} B E}) \| \leq \varepsilon_{\mathrm{LE}} .
    \end{equation}
    The third term in Eq.\,(\ref{eq:learning_locally_triangle}) is bounded by
    \begin{eqnarray}
        &  & \| \Psi_{A_{\mathrm{in}} \rightarrow A} (\Phi_{B E \rightarrow B C} (\rho_{A_{\mathrm{in}} B E})) - \Phi_{B E \rightarrow B C} (\rho_{A B E}) \|_1 \nonumber\\
        & \overset{\text{(i)}}{=} & \| \Phi_{B E \rightarrow B C} (\Psi_{A_{\mathrm{in}} \rightarrow A} (\rho_{A_{\mathrm{in}} B E})) - \Phi_{B E \rightarrow B C} (\rho_{A B E}) \|_1 \nonumber\\
        & \overset{\text{(ii)}}{\leq} & \| \Psi_{A_{\mathrm{in}} \rightarrow A} (\rho_{A_{\mathrm{in}} B E}) - \rho_{A B E} \|_1 \leq \varepsilon_{\mathrm{LI}} . 
    \end{eqnarray}
    Here, equality (i) is from the commutability between $\Psi_{A_{\mathrm{in}} \rightarrow A}$ and $\Phi_{B E \rightarrow B C}$; inequality (ii) is from the contractivity of the CPTP map $\Phi_{B E \rightarrow B C}$. Combined the results together, we have
    \begin{equation}
        \| \rho_{A B C} - \Phi_{B E \rightarrow B C} (\rho_{A B E}) \|_1 \leq \varepsilon_{\mathrm{LE}} + 2 \varepsilon_{\mathrm{LI}} ,
    \end{equation}
    which completes the proof.
\end{proof}

We remark that Ref.\,\cite{sang_2026} studies the closeness between two distances $\| \rho_{A B C} - \Phi_{B E \rightarrow B C} (\rho_{A B E}) \|_1$ and $\| \rho_{A_{\mathrm{in}} B C} - \Phi_{B E \rightarrow B C} (\rho_{A_{\mathrm{in}} B E}) \|_1$, even when both of them are not close to zero for a $\rho$ in a nontrivial topological phase.
The scenario of our Lemma \ref{lem:learning_locally} is a special case of this generic setting in Ref.\,\cite{sang_2026}.

\subsection{SDP algorithm for learning local extension maps}
\label{sec:SDP}

Now, we have already reduced the learning problem of the local extension map into an optimization problem $\min_{\Phi_{BE \to BC}} \| \rho_{A_{\mathrm{in}}B C} - \Phi_{B E \rightarrow B C} (\rho_{A_{\mathrm{in}} B E}) \|_1$. This minimization problem can be solved by the related optimization problem proposed in Ref.\,\cite{berta_2016_fidelity}.
Alternatively, by Fuchs-van de Graaf inequalities, we can focus on the maximization problem of the recovery fidelity
\begin{equation}
    \max_{\Phi_{BE \to BC}} F( \rho_{A_{\mathrm{in}} B C} , \Phi_{B E \rightarrow B C} (\rho_{A_{\mathrm{in}} B E}) ), \label{eq:optF}
\end{equation}
where the tomography of $\rho_{A_{\mathrm{in}} B C}$ and $\rho_{A_{\mathrm{in}} B E}$ can be computed classically from the classical shadow snapshots. We leave the discussion about the error of tomography in Section \ref{sec:imperfect_local_tomography}.

We now state how we can leverage the algorithm given in Ref.\,\cite{berta_2016_fidelity} to find such an optimal local extension map $\Phi_{BE \to BC}$.

This subroutine is given as follows.
Let $\rho_{A_{\mathrm{in}} B E R}$ be any purification of $\rho_{A_{\mathrm{in}} B E}$, and $\Gamma_{A_{\mathrm{in}} R : B E} = \ket{\Gamma}\!\bra{\Gamma}_{A_{\mathrm{in}} R : B E}$, where $\ket{\Gamma}_{A_{\mathrm{in}} R : B E}=\sum_{i}\ket{i}_{A_{\mathrm{in}} R}\ket{i}_{B E}$ is the unnormalized maximally entangled state between $A_{\mathrm{in}} R$ and $B E$. 
The optimal recovery fidelity Eq.\,(\ref{eq:optF}) is obtained by solving the following SDP algorithm:
\begin{eqnarray}
    & \min & \frac{1}{2} [Z_{A_{\mathrm{in}} B C} + Z_{A_{\mathrm{in}} B C}^{\dagger}] \label{eq:sdp_Z}\\
    & \text{subject to} & Z_{A_{\mathrm{in}} B C} \in \mathcal{L} (A_{\mathrm{in}} B C), \tau_{A_{\mathrm{in}} B C R} \succeq 0, \mathrm{Tr}_{B C} [\tau_{A_{\mathrm{in}} B C R}] =\mathbb{I}_{A_{\mathrm{in}} R}, \nonumber\\
    &  & \left(\begin{array}{cc} 
        \rho_{A_{\mathrm{in}} B C} & Z_{A_{\mathrm{in}} B C}\\ 
        Z_{A_{\mathrm{in}} B C}^{\dagger} & \mathrm{Tr}_R \left( \sqrt{\rho_{A_{\mathrm{in}} R}} \tau_{A_{\mathrm{in}} B C R} \sqrt{\rho_{A_{\mathrm{in}} R}} \right)
    \end{array}\right) 
    \succeq 0. \nonumber
\end{eqnarray}
Here, $\mathcal{L}$ denotes the set of linear operators. 
By solving the SDP algorithm Eq.\,(\ref{eq:sdp_Z}), we will get a mixed state $\tau_{A_{\mathrm{in}} B C R}$.
This is state is exactly the unnormalized \textit{Choi-Jamiolkowski state} of the optimal local extension map $\Phi_{B E \rightarrow B C}$. Therefore, $\Phi_{B E \rightarrow B C}$ can be obtained by reshuffling:
\begin{equation}
    \tau_{A_{\mathrm{in}} B C R} = \Phi_{B E \rightarrow B C} (\Gamma_{A_{\mathrm{in}} R : B E}), \quad \mathrm{Tr}_{B C} [\tau_{A_{\mathrm{in}} B C R}] =\mathbb{I}_{A_{\mathrm{in}} R} .
\end{equation}
Given any recovery 1-norm distance error $\varepsilon_{\mathrm{SDP}}$ for $\max_{\Phi_{BE \to BC}} F( \rho_{A_{\mathrm{in}} B C} , \Phi_{B E \rightarrow B C} (\rho_{A_{\mathrm{in}} B E}) )$, the running time of this SDP problem is:
\begin{equation}
    2^{O (c d)^k} \log \! \left( \frac{1}{\varepsilon_{\mathrm{SDP}}} \right). \label{eq:e_sdp}
\end{equation}
We remark that the Fuchs-van de Graaf inequalities also ensures the scaling of solving the minimization problem $\min_{\Phi_{BE \to BC}} \| \rho_{A B C} - \Phi_{B E \rightarrow B C} (\rho_{A B E}) \|_1$ is the same as Eq.\,(\ref{eq:e_sdp}).

\subsection{Imperfect local tomography}
\label{sec:imperfect_local_tomography}

We turn to the third difficulty mentioned in Section \ref{sec:cover_scheme} -- the imperfect local tomography.

For a perfect locally tomographed state $\rho_{A_{\mathrm{in}} B C D E}$, Corollary \ref{cor:AM} and Corollary \ref{cor:LE} guarantee that $\rho_{A_{\mathrm{in}} B C D E}$ is $\varepsilon_{\mathrm{LI}}$-locally extendible, \textit{i.e.}, there always exists a channel $\Phi_{B E \rightarrow B C}$ satisfying the condition $\| \rho_{A_{\mathrm{in}} B C} - \Phi_{B E \rightarrow B C} (\rho_{A_{\mathrm{in}} B E}) \|_1 \leq \varepsilon_{\mathrm{LI}}$. 
Now suppose we have the local tomography $\rho'_{A_{\mathrm{in}} B C D E}$, which is imperfect due to measurement fluctuation, which causes a $1$-norm tomography error up to some $\varepsilon_{\mathrm{LT}} > 0$. 
Our following lemma indicates that the imperfect state $\rho'_{A_{\mathrm{in}} B C D E}$ is still a $(\varepsilon_{\mathrm{LI}} + 2 \varepsilon_{\mathrm{LT}})$-locally extendible state.
% In Section XXX, we will give a bound on the sample and time complexity for achieving such a local tomography accuracy with a bounded failure probability.

\begin{lemma}[Robustness to imperfect local tomography]
    \label{lem:robustness_LT}
    Suppose a trivial phase mixed state $\rho =\mathcal{E} (\rho_0)$, and $\rho_0$ is $\varepsilon_{\mathrm{LR}}$-locally reversible from $\rho$ along the shallow channel circuit $\mathcal{E}$. Let a channel $\Phi_{B E \rightarrow B C}$ satisfying $\| \rho_{A_{\mathrm{in}} B C} - \Phi_{B E \rightarrow B C} (\rho_{A_{\mathrm{in}} B E}) \|_1 \leq \varepsilon_{\mathrm{LI}}$. Now, consider an imperfect local tomography $\rho'_{A_{\mathrm{in}} B C D E}$ such that
    \begin{equation}
        \| \rho_{A_{\mathrm{in}} B C D E} - \rho'_{A_{\mathrm{in}} B C D E} \|_1 \leq \varepsilon_{\mathrm{LT}}
    \end{equation}
    Then we have
    \begin{equation}
        \| \rho'_{A_{\mathrm{in}} B C} - \Phi_{B E \rightarrow B C} (\rho'_{A_{\mathrm{in}} B E}) \|_1 \leq \varepsilon_{\mathrm{LI}} + 2 \varepsilon_{\mathrm{LT}} .
    \end{equation}
\end{lemma}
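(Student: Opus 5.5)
The plan is a three-term triangle inequality that passes from the perturbed marginals to the exact ones, applies the hypothesis on $\Phi_{B E \rightarrow B C}$, and then returns to the perturbed marginals. Concretely, I would write
\begin{align}
    \| \rho'_{A_{\mathrm{in}} B C} - \Phi_{B E \rightarrow B C} (\rho'_{A_{\mathrm{in}} B E}) \|_1
    \leq~& \| \rho'_{A_{\mathrm{in}} B C} - \rho_{A_{\mathrm{in}} B C} \|_1 + \| \rho_{A_{\mathrm{in}} B C} - \Phi_{B E \rightarrow B C} (\rho_{A_{\mathrm{in}} B E}) \|_1 \nonumber\\
    & + \| \Phi_{B E \rightarrow B C} (\rho_{A_{\mathrm{in}} B E}) - \Phi_{B E \rightarrow B C} (\rho'_{A_{\mathrm{in}} B E}) \|_1 .
\end{align}
The middle term is at most $\varepsilon_{\mathrm{LI}}$ directly by hypothesis. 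Its existence is already guaranteed by Corollaries \ref{cor:AM} and \ref{cor:LE}, with $A_1 = A_{\mathrm{in}}$.

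For the first term, I would note that $\rho_{A_{\mathrm{in}} B C} = \mathrm{Tr}_{D E}(\rho_{A_{\mathrm{in}} B C D E})$, and similarly for $\rho'$. Since the partial trace $\mathrm{Tr}_{D E}$ is a CPTP map, contractivity bounds this term by $\| \rho_{A_{\mathrm{in}} B C D E} - \rho'_{A_{\mathrm{in}} B C D E} \|_1 \leq \varepsilon_{\mathrm{LT}}$.

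For the third term, I would first use contractivity of the channel $\mathcal{I}_{A_{\mathrm{in}}} \otimes \Phi_{B E \rightarrow B C}$ to bound it by $\| \rho_{A_{\mathrm{in}} B E} - \rho'_{A_{\mathrm{in}} B E} \|_1$. Contractivity of $\mathrm{Tr}_{C D}$ then bounds this in turn by $\varepsilon_{\mathrm{LT}}$.

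Summing the three bounds gives $\varepsilon_{\mathrm{LI}} + 2 \varepsilon_{\mathrm{LT}}$. There is no real obstacle here. The only point needing care is that $\rho'$ should be a genuine density operator, so that its reduced states and their images under $\Phi$ make sense. If the tomographic estimate is not positive, one first projects it onto the set of states, which at most changes constants. In any case, contractivity of the trace norm under CPTP maps holds for arbitrary Hermitian inputs, so the argument goes through even without this projection.
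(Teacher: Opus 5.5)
Your proposal is correct and follows the paper's proof essentially step for step: the same three-term triangle inequality, with the middle term bounded by hypothesis and the outer two by contractivity of the partial traces and of $\Phi_{B E \rightarrow B C}$. Your closing remark is a harmless addition the paper does not make: trace-norm contractivity under $\mathcal{I}_{A_{\mathrm{in}}} \otimes \Phi_{B E \rightarrow B C}$ also holds for a non-positive Hermitian estimate $\rho'$.
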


\begin{proof}
    According to the contractivity of the CPTP map, we have
    \begin{align}
        \| \rho_{A_{\mathrm{in}} B C} - \rho'_{A_{\mathrm{in}} B C} \| & \leq \varepsilon_{\mathrm{LT}}, \\
        \| \rho_{A_{\mathrm{in}} B E} - \rho'_{A_{\mathrm{in}} B E} \| & \leq \varepsilon_{\mathrm{LT}} . 
    \end{align}
    Now, using the triangle inequality
    \begin{align}
        \| \rho'_{A_{\mathrm{in}} B C} - \Phi_{B E \rightarrow B C} (\rho'_{A_{\mathrm{in}} B E}) \|_1 \leq~& \| \rho'_{A_{\mathrm{in}} B C} - \rho_{A_{\mathrm{in}} B C} \|_1 + \| \rho_{A_{\mathrm{in}} B C} - \Phi_{B E \rightarrow B C} (\rho_{A_{\mathrm{in}} B E}) \|_1 \nonumber\\ 
        & + \| \Phi_{B E \rightarrow B C} (\rho_{A_{\mathrm{in}} B E}) - \Phi_{B E \rightarrow B C} (\rho'_{A_{\mathrm{in}} B E}) \|_1 . \label{eq:ILT_triangle}
    \end{align}
    The first term in Eq.\,(\ref{eq:ILT_triangle}) is bounded by $\| \rho'_{A_{\mathrm{in}} B C} - \rho_{A_{\mathrm{in}} B C} \|_1 \leq \varepsilon_{\mathrm{LT}}$. 
    The second term in Eq.\,(\ref{eq:ILT_triangle}) is bounded by the fact that $\| \rho_{A_{\mathrm{in}} B C} - \Phi_{B E \rightarrow B C} (\rho_{A_{\mathrm{in}} B E}) \|_1 \leq \varepsilon_{\mathrm{LI}}$. 
    The third term in Eq.\,(\ref{eq:ILT_triangle}) is bounded by the contractivity of the CPTP map $\Phi_{B E \rightarrow B C}$
    \begin{equation}
        \| \Phi_{B E \rightarrow B C} (\rho_{A_{\mathrm{in}} B E}) - \Phi_{B E \rightarrow B C} (\rho'_{A_{\mathrm{in}} B E}) \|_1 \leq \| \rho_{A_{\mathrm{in}} B E} - \rho'_{A_{\mathrm{in}} B E} \| \leq \varepsilon_{\mathrm{LT}} .
    \end{equation}
    Combined the results together, we have
    \begin{equation}
        \| \rho'_{A_{\mathrm{in}} B C} - \Phi_{B E \rightarrow B C} (\rho'_{A_{\mathrm{in}} B E}) \|_1 \leq \varepsilon_{\mathrm{LI}} + 2 \varepsilon_{\mathrm{LT}} ,
    \end{equation}
    which completes the proof.
\end{proof}

\subsection{Overall error for learning a local extension map}
\label{sec:overall_error_single}

Suppose we have an imperfect locally tomographed state $\rho'_{A_{\mathrm{in}} B C D E}$ with $\| \rho_{A_{\mathrm{in}} B C D E} - \rho'_{A_{\mathrm{in}} B C D E} \|_1 \leq \varepsilon_{\mathrm{LT}}$.
According to Lemma \ref{lem:robustness_LT}, this imperfect state $\rho'_{A_{\mathrm{in}} B C D E}$ is still $(\varepsilon_{\mathrm{LI}} + 2 \varepsilon_{\mathrm{LT}})$-locally extendible.
Now we show that if we run the SDP algorithm Eq.\,(\ref{eq:sdp_Z}) for $\rho'_{A_{\mathrm{in}} B C D E}$ instead of for $\rho_{A_{\mathrm{in}} B C D E}$, the channel we obtain is also a local extension map for the global state $\rho_{A B C D E}$, even in the presence of positive $\varepsilon_{\mathrm{LT}}$ and $\varepsilon_{\mathrm{SDP}}$.

\begin{corollary}[Overall error for the learned $\Phi_{BE \to BC}$]
    \label{cor:overall_error_single}
    Suppose a trivial phase mixed state $\rho =\mathcal{E} (\rho_0)$, and $\rho_0$ is $\varepsilon_{\mathrm{LR}}$-locally reversible from $\rho$ along the shallow channel circuit $\mathcal{E}$. Consider an imperfect local tomography $\rho'_{A_{\mathrm{in}} B C D E}$ such that $\| \rho_{A_{\mathrm{in}} B C D E} - \rho'_{A_{\mathrm{in}} B C D E} \|_1 \leq \varepsilon_{\mathrm{LT}}$. 
    If we run the SDP algorithm for $\rho'_{A_{\mathrm{in}} B C D E}$ and output a channel $\Phi'_{B E \rightarrow B C}$ with at most $\varepsilon_{\mathrm{SDP}}$ error, i.e.,
    \begin{equation}
        \| \rho'_{A_{\mathrm{in}} B C} - \Phi'_{B E \rightarrow B C} (\rho'_{A_{\mathrm{in}} B E}) \|_1 \leq \varepsilon_{\mathrm{LI}} + 2 \varepsilon_{\mathrm{LT}} + \varepsilon_{\mathrm{SDP}} .
    \end{equation}
    Then, we have
    \begin{equation}
        \| \rho_{A B C} - \Phi'_{B E \rightarrow B C} (\rho_{A B E}) \|_1 \leq 3 \varepsilon_{\mathrm{LI}} + 4 \varepsilon_{\mathrm{LT}} + \varepsilon_{\mathrm{SDP}}. 
    \end{equation}
\end{corollary}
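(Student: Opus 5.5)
The plan is to pass from the tomographed local state $\rho'$ back to the true local state $\rho_{A_{\mathrm{in}} B C D E}$, and then lift from $A_{\mathrm{in}}$ to the whole of $A$ with Lemma~\ref{lem:learning_locally}.

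\emph{Step 1 (replace $\rho'$ by $\rho$ on the local patch).} I would reuse the three-term triangle inequality from the proof of Lemma~\ref{lem:robustness_LT}, with the roles of $\rho$ and $\rho'$ swapped and $\Phi'_{B E \rightarrow B C}$ in place of $\Phi_{B E \rightarrow B C}$. Contractivity of the partial trace gives $\| \rho_{A_{\mathrm{in}} B C} - \rho'_{A_{\mathrm{in}} B C} \|_1 \leq \varepsilon_{\mathrm{LT}}$ and $\| \rho_{A_{\mathrm{in}} B E} - \rho'_{A_{\mathrm{in}} B E} \|_1 \leq \varepsilon_{\mathrm{LT}}$. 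Contractivity of $\Phi'_{B E \rightarrow B C}$ then controls the term $\| \Phi'_{B E \rightarrow B C}(\rho_{A_{\mathrm{in}} B E}) - \Phi'_{B E \rightarrow B C}(\rho'_{A_{\mathrm{in}} B E}) \|_1$. Together with the assumed SDP guarantee, this yields
\begin{equation}
\| \rho_{A_{\mathrm{in}} B C} - \Phi'_{B E \rightarrow B C} (\rho_{A_{\mathrm{in}} B E}) \|_1 \leq \varepsilon_{\mathrm{LT}} + (\varepsilon_{\mathrm{LI}} + 2\varepsilon_{\mathrm{LT}} + \varepsilon_{\mathrm{SDP}}) + \varepsilon_{\mathrm{LT}} = \varepsilon_{\mathrm{LI}} + 4\varepsilon_{\mathrm{LT}} + \varepsilon_{\mathrm{SDP}}.
\end{equation}

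\emph{Step 2 (lift to the global region).} I would apply Lemma~\ref{lem:learning_locally} with $\varepsilon_{\mathrm{LE}} := \varepsilon_{\mathrm{LI}} + 4\varepsilon_{\mathrm{LT}} + \varepsilon_{\mathrm{SDP}}$. Its hypothesis is exactly the bound from Step 1, and the buffer geometry $\mathrm{dist}(B C D E, A_{\mathrm{out}}) \geq 2s$ is already built into that lemma. This adds $2\varepsilon_{\mathrm{LI}}$, giving the claimed total $3\varepsilon_{\mathrm{LI}} + 4\varepsilon_{\mathrm{LT}} + \varepsilon_{\mathrm{SDP}}$.

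The only point needing care is the interpretation of the SDP hypothesis. Its stated bound is justified by Lemma~\ref{lem:robustness_LT}, applied to the existing extension map supplied by Corollary~\ref{cor:LE} restricted to $A_{\mathrm{in}}$. However, the argument above uses it only as a given inequality, so no additional obstacle arises.
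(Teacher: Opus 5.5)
Your proposal is correct and matches the paper's proof essentially step for step. The paper uses the same three-term triangle inequality, with the two $\varepsilon_{\mathrm{LT}}$ terms from contractivity and the middle term from the SDP hypothesis, to get $\varepsilon_{\mathrm{LI}} + 4\varepsilon_{\mathrm{LT}} + \varepsilon_{\mathrm{SDP}}$ on $A_{\mathrm{in}} B C$, and then applies Lemma~\ref{lem:learning_locally} with that quantity as $\varepsilon_{\mathrm{LE}}$ to pick up the extra $2\varepsilon_{\mathrm{LI}}$.
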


\begin{proof}
    We first show that $\| \rho_{A_{\mathrm{in}} B C} - \Phi'_{B E \rightarrow B C} (\rho_{A_{\mathrm{in}} B E}) \|_1 \leq \varepsilon_{\mathrm{LI}} + 4 \varepsilon_{\mathrm{LT}} + \varepsilon_{\mathrm{SDP}}$. The proof is similar to the proof of Lemma \ref{lem:robustness_LT}. 
    In fact, using the triangle inequality
    \begin{align}
        \| \rho_{A_{\mathrm{in}} B C} - \Phi'_{B E \rightarrow B C} (\rho_{A_{\mathrm{in}} B E}) \|_1 \leq~& \| \rho_{A_{\mathrm{in}} B C} - \rho'_{A_{\mathrm{in}} B C} \|_1 + \| \rho'_{A_{\mathrm{in}} B C} - \Phi'_{B E \rightarrow B C} (\rho'_{A_{\mathrm{in}} B E}) \|_1 \nonumber\\
        & + \| \Phi'_{B E \rightarrow B C} (\rho'_{A_{\mathrm{in}} B E}) - \Phi'_{B E \rightarrow B C} (\rho_{A_{\mathrm{in}} B E}) \|_1 . \label{eq:OES_triangle}
    \end{align}
    The first term in Eq.\,(\ref{eq:OES_triangle}) is bounded by $\| \rho_{A_{\mathrm{in}} B C} -  \rho'_{A_{\mathrm{in}} B C} \|_1 \leq \varepsilon_{\mathrm{LT}}$.
    The second term in Eq.\,(\ref{eq:OES_triangle}) is bounded by the fact that $\| \rho'_{A_{\mathrm{in}} B C} - \Phi'_{B E \rightarrow B C} (\rho'_{A_{\mathrm{in}} B E}) \|_1 \leq \varepsilon_{\mathrm{LI}} + 2 \varepsilon_{\mathrm{LT}} + \varepsilon_{\mathrm{SDP}}$.
    The third term in Eq.\,(\ref{eq:OES_triangle}) is bounded by the contractivity of the CPTP map $\Phi'_{B E \rightarrow B C}$
    \begin{equation}
        \| \Phi'_{B E \rightarrow B C} (\rho'_{A_{\mathrm{in}} B E}) - \Phi'_{B E \rightarrow B C} (\rho_{A_{\mathrm{in}} B E}) \|_1 \leq \| \rho'_{A_{\mathrm{in}} B E} - \rho_{A_{\mathrm{in}} B E} \| \leq \varepsilon_{\mathrm{LT}} .
    \end{equation}
    Combining the result together, we have $\| \rho_{A_{\mathrm{in}} B C} - \Phi'_{B E \rightarrow B C} (\rho_{A_{\mathrm{in}} B E}) \|_1 \leq \varepsilon_{\mathrm{LI}} + 4 \varepsilon_{\mathrm{LT}} + \varepsilon_{\mathrm{SDP}}$. 
    Now, we set $\varepsilon_{\mathrm{LE}} = \varepsilon_{\mathrm{LI}} + 4 \varepsilon_{\mathrm{LT}} + \varepsilon_{\mathrm{SDP}}$ in Lemma \ref{lem:learning_locally}. We immediately obtain
    \begin{equation}
        \| \rho_{A B C} - \Phi'_{B E \rightarrow B C} (\rho_{A B E}) \|_1 \leq \varepsilon_{\mathrm{LE}} + 2
    \varepsilon_{\mathrm{LI}} \leq 3 \varepsilon_{\mathrm{LI}} + 4 \varepsilon_{\mathrm{LT}} + \varepsilon_{\mathrm{SDP}}  ,
  \end{equation}
  which completes the proof.
\end{proof}

From now on, we will always simply take the local tomography error as $\varepsilon_{\mathrm{LT}} = \varepsilon_{\mathrm{LI}}$ and the SDP algorithm error as $\varepsilon_{\mathrm{SDP}} = \varepsilon_{\mathrm{LI}}$. This implies
\begin{equation}
    \| \rho_{A B C} - \Phi'_{B E \rightarrow B C} (\rho_{A B E}) \|_1 \leq 8 \varepsilon_{\mathrm{LI}} . \label{eq:8eLI}
\end{equation}
To summarize, by Theorem \ref{thm:LE} we know that the trivial phase mixed state $\rho$ is $\varepsilon_{\mathrm{LI}}$-locally extendible for some channel $\Phi_{B E \rightarrow B C}$ with $\| \rho_{A B C} - \Phi_{B E \rightarrow B C} (\rho_{A B E}) \|_1 \leq \varepsilon_{\mathrm{LI}}$. When we take the learning efficiency, SDP error, and tomography error into account, we can still learn a sub-optimal local extension $\Phi'_{B E \rightarrow B C}$ with $\| \rho_{A B C} - \Phi'_{B E \rightarrow B C} (\rho_{A B E}) \|_1 \leq 8 \varepsilon_{\mathrm{LI}}$.
We give the pseudo-algorithm for learning $\rho$ in Algorithm \ref{alg:learning}.

\RestyleAlgo{ruled}
\SetKwComment{Comment}{/* }{ */}
\SetKwInOut{input}{Input}
\SetKwInOut{output}{Output}
\SetKwFor{For}{For}{:}{EndFor}
% \SetKwIf{If}{If}{then:}{EndIf}
\SetKwIF{If}{ElseIf}{Else}{If}{then:}{Else if}{Else}{}
\SetKw{Continue}{Continue}
\SetKw{Break}{Break}

\begin{algorithm}[t]
    \caption{Learning Trivial Phase Mixed State $\rho$}\label{alg:learning}
    \input{$M$ copies of $n$-qubit $\rho$ in a $k$-dimensional lattice $\Lambda$, lightcone radius $s$, error $\varepsilon$ and failure success $\delta$}
    \output{Circuit $\mathcal{W}$, and state $\rho_{\mathrm{gen}}=\mathcal{W}(\ket{0}\!\bra{0}^{\otimes n})$, \textit{s.t.} $\|\rho_{\mathrm{gen}} - \rho\|_1 \leq \varepsilon$ with probability $1-\delta$}
    Compute $\varepsilon_{\mathrm{LI}} = \varepsilon/(8 n)$ by Eq.\,(\ref{eq:eLE_e8n})\;
    Perform classical shadows for $M$ copies of $\rho$\;
    % Set $\mathtt{Flag} \gets \mathtt{Fail}$\;
    Perform covering scheme $\langle S_i, \Lambda = A_i B_i C_i D_i E_i \rangle$ for $i \in [K]$ described in Theorem \ref{thm:cover_scheme}\;
    Organize $\{S_i\}$ into $k+1$ layers\Comment*[r]{$S_0 = \varnothing$ and $S_{K} = \Lambda$}
    Initialize $\rho_{S_0} = \ket{0}\!\bra{0}^{\otimes n} $\;
    \For{$i \gets 1$ to $K$}{
        \If{partition $i$ is in layer $1$}{
            Generate local state $\rho_{C_i} \gets \mathcal{W}'_i(\ket{0}\!\bra{0}^{\otimes n}) $ for some $\mathcal{W}'_i$\Comment*[r]{$B_i D_i E_i = \varnothing$}
        }
        \If{partition $i$ is in layer $2$ to $k$}{
            Partite $A_i = A_{\mathrm{in},i} A_{\mathrm{out},i}$ such that $\mathrm{dist}(B_i C_i D_i E_i, A_{\mathrm{out},i}) \geq 2s$\;
            Simulate $\rho_{A_{\mathrm{in},i} B_i C_i D_i E_i}$ using classical shadow snapshots\;
            Run SDP algorithm Eq.\,(\ref{eq:sdp_Z}) to get local extension map $\Phi'_{B_i E_i \to B_i C_i}$\;
            Set $\mathcal{W}'_i \gets \Phi'_{B_i E_i \to B_i C_i}$ and generate $\rho_{S_{i}} \gets \Phi'_{B_i E_i \to B_i C_i} (\rho_{S_{i - 1}})$\;
        }
        \If{partition $i$ is in layer $k+1$}{
            Partite $A_i = A_{\mathrm{in},i} A_{\mathrm{out},i}$ such that $\mathrm{dist}(B_i C_i, A_{\mathrm{out},i}) \geq 2s$\Comment*[r]{$D_i E_i = \varnothing$}
            Simulate $\rho_{A_{\mathrm{in},i} B_i C_i}$ using classical shadow snapshots\;
            Run SDP algorithm Eq.\,(\ref{eq:sdp_Z}) to get local recovery map $\Psi'_{B_i \to B_i C_i}$\;
            Set $\mathcal{W}'_i \gets \Psi'_{B_i \to B_i C_i}$ and generate $\rho_{S_{i}} \gets \Psi'_{B_i \to B_i C_i} (\rho_{S_{i - 1}})$\;
        }
    }
    \textbf{Return} $\mathcal{W} \gets \mathcal{W}'_K \circ \mathcal{W}'_{K - 1} \circ \cdots \circ \mathcal{W}'_1$ and $\rho_{\mathrm{gen}} \gets \rho_{S_{K}}$ %\Comment*[r]{All maps of $\Phi'_{B_i E_i \to B_i C_i}$ and $\Psi'_{B_i \to B_i C_i}$ exist}
\end{algorithm}

\section{Complexity Analysis}
\label{sec:analysis}

\begin{theorem}[Main Theorem]
    \label{thm:main}
    Suppose $\rho$ is an $n$-qubit trivial phase mixed state in a $k$-dimensional lattice. 
    That is, for any $\varepsilon_{\mathrm{LR}} > 0$, there always exists a shallow channel circuit $\mathcal{E}=\mathcal{E}_d \circ \cdots \circ \mathcal{E}_2 \circ \mathcal{E}_1$ with $\mathcal{E}_{\ell} = \prod_x \mathcal{E}_{\ell, x}$ where each gate $\mathcal{E}_{\ell, x}$ has a local support size bounded by $c \leq \mathrm{polylog} (n)$ and the overall circuit depth $d \leq \mathrm{polylog} (n)$, such that $\rho =\mathcal{E} ( \ket{0}\!\bra{0}^{\otimes n} )$ and $\ket{0}\!\bra{0}^{\otimes n}$ is $\varepsilon_{\mathrm{LR}}$-locally reversible from $\rho$ along the shallow channel circuit $\mathcal{E}$. 
    Then, given any $\varepsilon>0$ and $\delta > 0$, there exists an algorithm that runs time $T$, learns from $M$ many copies of $\rho$, and generates $\rho$ using a $(k + 1)$-layer local channel circuit $\mathcal{W}$, where
    \begin{equation}
        T = n \cdot 2^{O (c d)^k} \log \left( \frac{n}{\varepsilon} \right) + O(M), \quad M = \frac{n^2 \cdot 2^{O (c d)^k}}{\varepsilon^2} \log \left( \frac{n}{\delta} \right),
    \end{equation}
    such that $\| \mathcal{W}(\ket{0}\!\bra{0}^{\otimes n}) - \rho \|_1 \leq \varepsilon$ with success probability at least $1 - \delta$.
\end{theorem}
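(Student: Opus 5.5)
We prove Theorem \ref{thm:main} by analyzing Algorithm \ref{alg:learning} in three parts: global error propagation, local tomography with a union bound, and runtime.

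\textbf{Error propagation.} Let $\sigma_i := \mathcal{W}'_i \circ \cdots \circ \mathcal{W}'_1(\rho_0)$, viewed as a state on $S_i$ with $\bar{S_i}$ in $\ket{0}\!\bra{0}$. We show by induction on $i$ that $\|\sigma_i - \rho_{S_i}\|_1 \leq 8 i\, \varepsilon_{\mathrm{LI}}$.

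The inductive step uses the triangle inequality
\begin{equation}
    \|\mathcal{W}'_i(\sigma_{i-1}) - \rho_{S_i}\|_1 \leq \|\mathcal{W}'_i(\sigma_{i-1}) - \mathcal{W}'_i(\rho_{S_{i-1}})\|_1 + \|\mathcal{W}'_i(\rho_{S_{i-1}}) - \rho_{S_i}\|_1 .
\end{equation}
The first term is at most $\|\sigma_{i-1} - \rho_{S_{i-1}}\|_1$ by contractivity. For the second term, condition 2 of Theorem \ref{thm:cover_scheme} identifies $S_{i-1} = A' B_i E_i$ and $S_i = A' B_i C_i$, where $A' = S_{i-1} \cap A_i$. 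Corollary \ref{cor:LE} and Lemma \ref{lem:learning_locally}, in the combined form of Corollary \ref{cor:overall_error_single} applied with $A_1 = A'$, bound this term by $8\varepsilon_{\mathrm{LI}}$ under the choice $\varepsilon_{\mathrm{LT}} = \varepsilon_{\mathrm{SDP}} = \varepsilon_{\mathrm{LI}}$ (Eq.\,(\ref{eq:8eLI})). Two edge cases need separate treatment:
\begin{itemize}
    \item In layer $k+1$, use Corollary \ref{cor:AM} with the analogous argument.
    \item In layer $1$, Fact \ref{fac:LCD-3} makes the patches independent, so each $\rho_{C_i}$ is prepared directly from its tomographic estimate with error at most $2\varepsilon_{\mathrm{LI}}$.
\end{itemize}
One should check that Corollary \ref{cor:overall_error_single} really allows restricting to $A_1 \subseteq A$. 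It does, since partial trace commutes with $\Phi'$ and is contractive. Because $K \leq n$, the induction gives a final error of at most $8 n \varepsilon_{\mathrm{LI}}$. Setting $\varepsilon_{\mathrm{LI}} = \varepsilon/(8n)$ makes this at most $\varepsilon$.

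This requires $\varepsilon_{\mathrm{LR}} = \varepsilon/(8 n^2 d)$, which is permitted because local reversibility holds for every $\varepsilon_{\mathrm{LR}} > 0$. Parallel gates in one layer commute, so the sequential order used in the induction realizes the $(k+1)$-layer circuit.

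\textbf{Sample complexity.} Each local region $A_{\mathrm{in},i}B_iC_iD_iE_i$ has size $m = O(cd)^k$, since the buffer $A_{\mathrm{in}}$ only adds width $2s$. We need to estimate each of the $K \leq n$ reduced states to trace distance $\varepsilon_{\mathrm{LT}} = \varepsilon/(8n)$. Classical shadows with random Pauli measurements, or local full tomography, estimate all $4^m$ Pauli observables on such a region. Converting entrywise error to trace-norm error costs a factor $2^{O(m)}$. A median-of-means bound, with a union bound over all $n \cdot 4^m$ observables and failure probability $\delta$, then gives
\begin{equation}
    M = 2^{O(cd)^k}\, \frac{n^2}{\varepsilon^2}\, \log\!\left(\frac{n}{\delta}\right).
\end{equation}
The estimate should then be projected onto the set of density matrices, which at most doubles the error. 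This step is the most delicate part of the proof: the shadow norm and the norm-conversion factors must be tracked carefully so that every dependence on region size is absorbed into $2^{O(cd)^k}$ and not into something larger. We keep it clean by working directly with Pauli-basis estimates and the bound $\|X\|_1 \leq 2^{m/2}\|X\|_2$.

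\textbf{Runtime.} The algorithm
\begin{itemize}
    \item reads the $O(M)$ measurement snapshots;
    \item builds $K \leq n$ local density matrices, each in time $2^{O(cd)^k}$ per snapshot (this cost can be folded into $O(M)$ up to the same factor);
    \item solves $K$ SDPs of dimension $2^{O(cd)^k}$ to accuracy $\varepsilon_{\mathrm{SDP}} = \varepsilon/(8n)$, each taking $2^{O(cd)^k}\log(n/\varepsilon)$ by Eq.\,(\ref{eq:e_sdp}).
\end{itemize}
The covering scheme itself is computed in $\mathrm{poly}(n)$ time. Summing these costs gives $T = n \cdot 2^{O(cd)^k}\log(n/\varepsilon) + O(M)$, which completes the plan.
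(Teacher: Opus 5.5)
Your proposal is correct and follows essentially the same route as the paper. It telescopes the per-step error $8\varepsilon_{\mathrm{LI}}$ from Corollary~\ref{cor:overall_error_single} over $K \le n$ steps using the triangle inequality and contractivity, sets $\varepsilon_{\mathrm{LI}} = \varepsilon/(8n)$, uses classical shadows to bound $M$, and counts $K$ SDP solves for $T$. Your identification $S_{i-1} = A' B_i E_i$, $S_i = A' B_i C_i$ (with a partial trace over $A_i \setminus A'$), your layer-$1$ and layer-$(k+1)$ cases, and your Pauli-shadow norm bookkeeping only make explicit what the paper leaves implicit. The one looseness is shared with the paper's stated $T$: post-processing the snapshots for $K \le n$ regions really costs about $n \cdot 2^{O(cd)^k} M$, not $O(M)$.
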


According to Eq.(\ref{eq:1eLI}), we know that there exists a sequence of quantum channels $\{ \mathcal{W}_i \}_{i \in [K]}$ such that
\begin{equation}
    \| \mathcal{W}_i (\rho_{S_{i - 1}}) - \rho_{S_i} \|_1 \leq \varepsilon_{\mathrm{LI}} .
\end{equation}
When we take the local tomography and SDP algorithm error into account, Eq.(\ref{eq:8eLI}) says that we can \textit{efficiently} obtain another sequence of quantum channels $\{ \mathcal{W}'_i \}_{i \in [K]}$ such that
\begin{equation}
    \| \mathcal{W}'_i (\rho_{S_{i - 1}}) - \rho_{S_i} \|_1 \leq 8 \varepsilon_{\mathrm{LI}} .
\end{equation}
We construct the \textit{learned generation channel}
\begin{equation}
    \mathcal{W}=\mathcal{W}'_K \circ \mathcal{W}'_{K - 1} \circ \cdots \circ \mathcal{W}'_1 .
\end{equation}
To prove our main Theorem \ref{thm:main}, we need the following lemma:

\begin{lemma}[Overall error of $\mathcal{W}$]
    For the learned generation channel $\mathcal{W}=\mathcal{W}'_K \circ \mathcal{W}'_{K - 1} \circ \cdots \circ \mathcal{W}'_1$, we have
    \begin{equation}
        \| \mathcal{W} (\rho_0) - \rho \|_1 \leq 8 n^2 d \cdot \varepsilon_{\mathrm{LR}}.
    \end{equation}
\end{lemma}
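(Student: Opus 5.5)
The plan is a standard hybrid (telescoping) argument along the covering sequence $S_0, S_1, \dots, S_K$ of Theorem \ref{thm:cover_scheme}. I will identify $\rho_{S_i}$ with the global state $\rho_{S_i}\otimes\ket{0}\!\bra{0}_{\overline{S_i}}$. With this convention $\rho_{S_0} = \rho_0$ and $\rho_{S_K} = \rho$ by condition 1. Define the hybrid states $\tau_i = \mathcal{W}'_K \circ \cdots \circ \mathcal{W}'_{i+1}(\rho_{S_i})$. Then $\tau_0 = \mathcal{W}(\rho_0)$ and $\tau_K = \rho$. The triangle inequality gives
\begin{equation}
    \| \mathcal{W}(\rho_0) - \rho \|_1 \leq \sum_{i=1}^{K} \| \tau_{i-1} - \tau_i \|_1 .
\end{equation}

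For each term, contractivity of the CPTP map $\mathcal{W}'_K \circ \cdots \circ \mathcal{W}'_{i+1}$ gives $\| \tau_{i-1} - \tau_i \|_1 \leq \| \mathcal{W}'_i(\rho_{S_{i-1}}) - \rho_{S_i} \|_1$. The per-step bound Eq.\,(\ref{eq:8eLI}) controls the right-hand side by $8\varepsilon_{\mathrm{LI}}$. To justify this I will check the three layer types separately:
\begin{itemize}
    \item For layer $1$, $\mathcal{W}'_i$ prepares the product marginal $\rho_{C_i}$; Fact \ref{fac:LCD-3} makes the layer-$1$ patches exactly independent, so this error is at most the tomography and SDP error.
    \item For layers $2$ to $k$, $\mathcal{W}'_i = \Phi'_{B_iE_i\to B_iC_i}$. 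Since $A_iB_iE_i = S_{i-1}$ and $A_iB_iC_i = S_i$ by condition 2, Corollary \ref{cor:overall_error_single} applied to $\rho_{A_iB_iE_i}$ gives exactly $\| \Phi'(\rho_{S_{i-1}}) - \rho_{S_i}\|_1 \leq 8\varepsilon_{\mathrm{LI}}$. The discarded $E_i$ is reset to $\ket{0}$, consistent with the identification above.
    \item For layer $k+1$, the analogous statement for the recovery map $\Psi'$ follows by the same chain: Theorem \ref{thm:AM}, then Lemma \ref{lem:learning_locally} with $E=D=\varnothing$, then Lemma \ref{lem:robustness_LT}.
\end{itemize}

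Summing the $K$ terms gives $8K\varepsilon_{\mathrm{LI}}$. I then use $K \leq n$ and $\varepsilon_{\mathrm{LI}} = nd\,\varepsilon_{\mathrm{LR}}$ to obtain $8n^2 d\,\varepsilon_{\mathrm{LR}}$.

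The main obstacle is the bookkeeping that makes each hybrid step literally match the hypotheses of Corollary \ref{cor:overall_error_single}. The channel $\mathcal{W}'_i$ is applied to the partially generated global state, not to $\rho_{A B E}$ directly. I must argue that the region outside $S_{i-1}$ is in $\ket{0}$ and is untouched, apart from the region $C_i D_i$ (which $\Phi$ initializes from $\ket{0}$) and $E_i$ (which is traced out and reset). This makes $\mathcal{W}'_i(\rho_{S_{i-1}}\otimes\ket{0}\!\bra{0}) = \Phi'_{B_iE_i\to B_iC_i}(\rho_{A_iB_iE_i})\otimes\ket{0}\!\bra{0}_{\overline{S_i}}$, and the trace norm is unchanged by tensoring with a fixed state. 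A second subtlety is that within one layer the maps act in parallel. Since their supports are disjoint (condition 4), parallel application equals sequential application in the index order, so the sequential hybrid argument applies without change.
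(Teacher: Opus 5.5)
Your proposal is correct and is essentially the paper's argument: the triangle inequality and contractivity give a per-step bound of $8\varepsilon_{\mathrm{LI}}$, and summing with $K\le n$ and $\varepsilon_{\mathrm{LI}}=nd\,\varepsilon_{\mathrm{LR}}$ finishes; your hybrid states $\tau_i$ just unroll the paper's induction, which peels off $\mathcal{W}'_K$ first. One small fix to your bookkeeping: condition 2 only gives $D_i\subseteq\overline{S_i\cup S_{i-1}}$, and $D_i$ is local while the unlearned region is not, so in general $A_iB_iE_i\neq S_{i-1}$; instead $S_{i-1}=(A_i\cap S_{i-1})B_iE_i$ and $S_i=(A_i\cap S_{i-1})B_iC_i$, and you obtain the per-step bound by tracing out $A_i\setminus S_{i-1}$ in Corollary \ref{cor:overall_error_single}, exactly as in Corollaries \ref{cor:AM} and \ref{cor:LE}.
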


\begin{proof}
    We notice that $\rho_{S_0} = \rho_0$ and $\rho_{S_K} = \rho$, and
    \begin{eqnarray}
        \| \mathcal{W} (\rho_0) - \rho \|_1 & = & \| \mathcal{W}'_K \circ \mathcal{W}'_{K - 1} \circ \cdots \circ \mathcal{W}'_1 (\rho_{S_0}) - \rho_{S_K} \|_1 \nonumber\\
        & \overset{\text{(i)}}{\leq} & \| \mathcal{W}'_K \circ \mathcal{W}'_{K - 1} \circ \cdots \circ \mathcal{W}'_1 (\rho_0) -\mathcal{W}'_K (\rho_{S_{K - 1}}) \|_1 + \| \mathcal{W}'_K (\rho_{S_{K - 1}}) - \rho_{S_K} \|_1 \nonumber\\
        & \overset{\text{(ii)}}{\leq} & \| \mathcal{W}'_{K - 1} \circ \cdots \circ \mathcal{W}'_1 (\rho_0) - \rho_{S_{K - 1}} \|_1 + \| \mathcal{W}'_K (\rho_{S_{K - 1}}) - \rho_{S_K} \|_1 \nonumber\\
        & \overset{\text{(iii)}}{\leq} & \| \mathcal{W}'_{K - 1} \circ \cdots \circ \mathcal{W}'_1 (\rho_0) - \rho_{S_{K - 1}} \|_1 + 8 \varepsilon_{\mathrm{LI}} . 
    \end{eqnarray}
    Here, inequality (i) follows from the triangle inequality; inequality (ii) follows from the contractivity of the CPTP map $\mathcal{W}'_K$; inequality (iii) is from the property of learned channel $\| \mathcal{W}'_K (\rho_{S_{K - 1}}) - \rho_{S_K} \|_1 \leq 8 \varepsilon_{\mathrm{LI}}$. Then, by induction, we have
    \begin{equation}
        \| \mathcal{W} (\rho_0) - \rho \|_1 \leq K \cdot 8 \varepsilon_{\mathrm{LI}} \leq 8 n^2 d \cdot \varepsilon_{\mathrm{LR}},
    \end{equation}
    which completes the proof.
\end{proof}

% Since $K \leq n$, then we can bound the total error as
% \begin{equation}
%     \| \mathcal{E}_{\mathrm{learned}} (\rho_0) - \rho \|_1 \leq 8 n \cdot \varepsilon_{\mathrm{LI}} = 8 n^2 d \cdot \varepsilon_{\mathrm{LR}} .
% \end{equation}

Therefore, for any given target error $\varepsilon$, we can simply choose:
\begin{equation}
    \varepsilon_{\mathrm{LR}} = \frac{\varepsilon}{8 n^2 d}, \quad \varepsilon_{\mathrm{LI}} = \frac{\varepsilon}{8 n} . \label{eq:eLE_e8n}
\end{equation}

\textbf{Sample complexity}. Using the standard classical shadow methods \cite{huang_2020_predicting}, the copies of $\rho$ we need is
\begin{equation}
    M = \frac{n^2 \cdot 2^{O (cd)^k}}{\varepsilon^2} \log \left( \frac{n}{\delta} \right) .
\end{equation}
The running time of the classical shadow has the same scaling as the sample number.

\textbf{Time complexity}. 
For learning all $K \leq n$ local channels, the total running time is at most:
\begin{equation}
    T = n \cdot  2^{O (cd)^k} \log \left( \frac{n}{\varepsilon} \right) + \frac{n^2 \cdot 2^{O (cd)^k}}{\varepsilon^2} \log \left( \frac{n}{\delta} \right),
\end{equation}
where the first term comes from running the SDP algorithm, and the second term is from performing the classical shadow.

\textbf{Remark}. As mentioned in Section \ref{sec:main_result}, Theorems \ref{thm:AM} and \ref{thm:LE} show that each local recovery and local extension map can be decomposed into at most $(2d + 2)$ layers of gates, each of size at most $c$.
If the circuit structure of $\mathcal{E}$ is known and $c$ is a constant, the we can apply an epsilon-net searching to learn each $\mathcal{W}'_i$.
A standard epsilon-net covering argument yields the time complexity
\begin{equation}
    \left( \frac{n^2 d \cdot (c d)^k}{\varepsilon} \right)^{O\left((c d)^k \cdot 16^c \right)}
\end{equation}
to learn $\mathcal{W}'_i$ up to an error of $O(\varepsilon/(n^2 d))$. This runtime remains polynomial in $n$ when $c,d$ are constant, but becomes inefficient when $c=\mathrm{polylog}(n)$.

\section{Application to Quantum Generative Models}
\label{sec:application}

In quantum machine learning, quantum generative models aim to generate samples from a target probability distribution using quantum systems. A common approach encodes the desired classical distribution into a quantum state $\rho$, such that measuring $\rho$ in a fixed basis produces samples from the target distribution. The learning task is therefore to reconstruct the unknown state $\rho$ from measurement data and to prepare a quantum circuit that generates $\rho$, enabling sampling via repeated measurements.
Recently, quantum diffusion models, a quantum analogue of classical diffusion models, have been proposed as a framework for such generative tasks.

\begin{definition}[Quantum diffusion model]
    \label{def:QDM}
    For two mixed states $\rho$ and $\sigma$ in a $k$-dimensional lattice $\Lambda$, suppose there is a process (called \textnormal{diffusion}) generating $\rho$ from $\sigma$ through a continuous time evolution, described by the Lindblad equation
    \begin{equation}
            \frac{\partial \tilde{\varrho}_t}{\partial t} = \tilde{\mathcal{L}}_t (\tilde{\varrho}_t), \quad \tilde{\varrho}_{t = 0} = \rho, \quad \tilde{\varrho}_{t = 1} = \sigma, \quad \forall t \in [0, 1] .
    \end{equation}
    where $\tilde{\mathcal{L}}_t$ is the Lindbladian super-operator. Then, there is another process (called \textnormal{denoising}) $\mathcal{L}_t$ such that (notice that the initial and final distributions are inverted now)
    \begin{equation}
        \frac{\partial \varrho_t}{\partial t} = \mathcal{L}_t (\varrho_t), \quad \varrho_t = \tilde{\varrho}_{1-t}, \quad \forall t \in [0, 1] .
    \end{equation}
    The pair $(\tilde{\mathcal{L}}_t, \mathcal{L}_t)$ is a \textnormal{quantum diffusion model} that generates $\rho$ from $\sigma$ along the noise path $\varrho_t$. Especially, $\sigma$ is usually selected as the trivial distribution $\sigma = \ket{0}\!\bra{0}^{\otimes n}$.
\end{definition}

In practice, the dynamics are always time-discrete, which means we can give a time-discrete version of quantum diffusion models, which is reduced to the setting in our mixed state learning scenario: there exist channel circuits $\mathcal{E}$ and $\tilde{\mathcal{E}}$ with $d$ layers such that $\sigma$ is layer-wise reversible from $\rho$ along $\mathcal{E}$:
\begin{equation}
    \sigma = \tilde{\mathcal{E}} (\rho) =: \tilde{\mathcal{E}}_1 \circ \tilde{\mathcal{E}}_2 \circ \cdots \circ \tilde{\mathcal{E}}_d (\rho), \quad \rho = \mathcal{E} (Q) = \mathcal{E}_d \circ \cdots \circ \mathcal{E}_2 \circ \mathcal{E}_1 (Q).
\end{equation}
If we further take the locality into account, and suppose that each $\mathcal{E}_{\ell}$ and $\tilde{\mathcal{E}}_{\ell}$ has finer structures:
\begin{equation}
    \mathcal{E}_{\ell} = \prod_x \mathcal{E}_{\ell, x}, \quad \tilde{\mathcal{E}}_{\ell} = \prod_x \tilde{\mathcal{E}}_{\ell, x}, \quad \text{$ \mathcal{E}_{\ell, x} $ and $ \tilde{\mathcal{E}}_{\ell, x} $ are local} . 
\end{equation}
where $| \mathrm{Supp} (\mathcal{E}_{\ell, x}) |, | \mathrm{Supp} (\tilde{\mathcal{E}}_{\ell, x}) | \leq c$. 
Our Main Theorem \ref{thm:main} gives a novel, efficient quantum generative model:

\begin{corollary}[Efficient quantum generative model]
    \label{cor:QGM}
    Suppose the desired distribution is encoded in a trivial phase mixed state $\rho$. 
    Given arbitrary $\varepsilon > 0$ and $\delta > 0$, there exists an algorithm that runs time $T$, learns from $M$ many samples of $P$, and generates $P$ using a $(k + 1)$-layer local noisy channel circuit $W$, where
    \begin{equation}
        T = n \cdot 2^{O (c d)^k} \log \left( \frac{n}{\varepsilon} \right) + O (M), \quad M = \frac{n^2 \cdot 2^{O (c d)^k}}{\varepsilon^2} \log \left( \frac{n}{\delta} \right) .
    \end{equation}
    such that $\| W(\ket{0}\!\bra{0}^{\otimes n}) - \rho \|_1 \leq \varepsilon$ with success probability at least $1 - \delta$.
\end{corollary}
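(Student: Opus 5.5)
This corollary is essentially Theorem \ref{thm:main} restated in the language of generative modelling, so the plan is a reduction. The first step is to check that the time-discrete quantum diffusion model satisfies the hypotheses of Theorem \ref{thm:main}. We read the forward/denoising structure as a shallow local channel circuit $\mathcal{E}=\mathcal{E}_d\circ\cdots\circ\mathcal{E}_1$ with $c$-local gates. The reverse gates $\tilde{\mathcal{E}}_{\ell,x}$ act on the same supports, and we take them to approximately cancel $\mathcal{E}_{\ell,x}$ on the intermediate states $\mathcal{E}_{\leq \ell-1}(\ket{0}\!\bra{0}^{\otimes n})$, in the sense of Definition \ref{def:LR}, with $\sigma=\ket{0}\!\bra{0}^{\otimes n}$. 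The standing assumption that $\rho$ lies in the trivial phase (Definition \ref{def:MSP_via_LR}) then guarantees that, for every $\varepsilon_{\mathrm{LR}}>0$, such a circuit exists with $\ket{0}\!\bra{0}^{\otimes n}$ being $\varepsilon_{\mathrm{LR}}$-locally reversible from $\rho$. This is exactly the input to Theorem \ref{thm:main}. By the Remark after Definition \ref{def:MSP_via_LR}, we may take the preparation error $\eta=0$ at the cost of splitting $\varepsilon$ in half.

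The second step is to invoke Theorem \ref{thm:main} with target error $\varepsilon$ and failure probability $\delta$, choosing $\varepsilon_{\mathrm{LR}}=\varepsilon/(8n^2d)$ as in Eq.\,(\ref{eq:eLE_e8n}). This gives, with probability at least $1-\delta$, a $(k+1)$-layer local channel circuit $W=\mathcal{W}$ with $\|W(\ket{0}\!\bra{0}^{\otimes n})-\rho\|_1\le\varepsilon$. The time $T$ and copy count $M$ are copied verbatim from Theorem \ref{thm:main}.

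The third step is interpretive. The gates of $W$ are CPTP maps on $O(cd)^k$ qubits, so they are local noisy channels. Measuring $W(\ket{0}\!\bra{0}^{\otimes n})$ in the encoding basis yields a distribution within total variation $\varepsilon/2$ of the target $P$, since measurement is a channel and the $1$-norm is contractive under it. The "samples of $P$" in the statement are the copies of $\rho$ consumed by the classical-shadow step.

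The only genuinely delicate point is the first step: matching the diffusion model's reversibility to Definition \ref{def:LR}. Here the gate-by-gate reversal must hold on the intermediate states of the denoising path, not merely as a global inverse $\tilde{\mathcal{E}}\circ\mathcal{E}$. I would handle this by assuming it as part of "trivial phase" rather than deriving it; once it is granted, the rest is immediate.
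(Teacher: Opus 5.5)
Your proposal is correct and matches the paper, which obtains Corollary~\ref{cor:QGM} by invoking Theorem~\ref{thm:main}: the trivial-phase hypothesis supplies the locally reversible preparation circuit, and ``samples of $P$'' is read as copies of $\rho$, just as you do. Your extra remark is also correct and harmless: measuring in the encoding basis gives total variation distance at most $\varepsilon/2$, by contractivity of the $1$-norm under the measurement channel.
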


We also remark that Corollary \ref{cor:QGM} also works for the well-known \textit{classical diffusion models} \cite{sohl_2015_deep, anderson_1982_reverse}, because classical diffusion models are simply the decoherence limits of the quantum diffusion models \cite{hu_2025_local}.
To be more specific, suppose each site of the lattice $\Lambda$ supports a classical random variable from a sample space $\mathcal{X}$ (for simplicity, we assume $\mathcal{X}$ is discrete).
A probability distribution can be treated as being encoded in a diagonal mixed state $\rho = \mathrm{diag} (P) := \sum_i P_i \ket{i}\!\bra{i}$.
A \textit{classical noisy channel} (or simply \textit{noisy channel}) is given by the probability transition matrix $T$ such that $T (P)_i = \sum_j P_j T_{j i}$.
We can also embed the classical noisy channel into the quantum channel by defining
\begin{equation}
    \mathcal{E} (X) = \sum_{i j} K_{i j} X K^{\dagger}_{i j}, \quad K_{i j} = T_{j i}^{\frac{1}{2}} \ket{i}\!\bra{j},
\end{equation}
where $K_{i j} \in \mathbb{R}^{| \mathcal{X} |^n \times | \mathcal{X} |^n}$ are the Kraus operators satisfying the normalization relation $\sum_{i j} K^{\dagger}_{i j} K_{i j} =\mathbb{I}_{| \mathcal{X} |^n}$. We can verify in Appendix \ref{sec:diffusion} that $\mathcal{E}$ acting on $\rho$ indeed realize the classical noisy channel $\mathcal{E} (\mathrm{diag} (P)) = \mathrm{diag} (T(P))$.
Under this classical limit, our Main Theorem \ref{thm:main} also immediately gives a novel efficient classical diffusion model (for a formal statement, we refer to Appendix \ref{sec:diffusion}):

\begin{corollary}[Efficient classical diffusion model, informal]
    \label{cor:learning_TPD_informal}
    For any distribution $P$ that lies in the trivial phase, it satisfies the approximate Markovianity and local extendibility.
    Furthermore, given arbitrary $\varepsilon > 0$ and $\delta > 0$, there exists an algorithm that runs time $T$, learns from $M$ many samples of $P$, and generates $P$ using a $(k + 1)$-layer local noisy channel circuit $W$, where (recall $\mathcal{X}$ is a discrete sample space)
    \begin{equation}
        T = n \cdot | \mathcal{X} |^{O (c d)^k} \log \left( \frac{n}{\varepsilon} \right) + O (M), \quad M = \frac{n^2 \cdot | \mathcal{X} |^{O (c d)^k}}{\varepsilon^2} \log \left( \frac{n}{\delta} \right) .
    \end{equation}
    such that the total variation distance $\frac{1}{2}| W(Q) - P |_1 \leq \varepsilon$ with success probability at least $1 - \delta$.
\end{corollary}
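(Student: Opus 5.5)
The plan is to reduce Corollary~\ref{cor:learning_TPD_informal} to Theorem~\ref{thm:main}. Every object involved is diagonal in the computational basis, and the classical embedding sends diagonal states to diagonal states. First I would fix the dictionary. A distribution $P$ on $\mathcal{X}^n$ becomes $\rho=\mathrm{diag}(P)$, with each site a qudit of local dimension $|\mathcal{X}|$. A stochastic matrix $T$ becomes the channel $\mathcal{E}_T$ with Kraus operators $K_{ij}=T_{ji}^{1/2}\ket{i}\!\bra{j}$, and one checks directly that $\mathcal{E}_T(\mathrm{diag}(P))=\mathrm{diag}(T(P))$. A local $T$ gives a local $\mathcal{E}_T$ with the same support. ``$P$ in the trivial phase'' will mean that there is a shallow local noisy-channel circuit from the point mass $Q=\delta_{0^n}$ to $P$ whose gates admit local stochastic reversals $\tilde T_{\ell,x}$ with $\ell_1$ error $\varepsilon_{\mathrm{LR}}$. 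Since $\|\mathrm{diag}(P)-\mathrm{diag}(P')\|_1=|P-P'|_1$, this embeds verbatim into Definition~\ref{def:LR}, so $\mathrm{diag}(P)$ is a trivial phase mixed state and $\ket{0}\!\bra{0}^{\otimes n}$ is $\varepsilon_{\mathrm{LR}}$-locally reversible along the embedded circuit. Theorems~\ref{thm:AM} and \ref{thm:LE} then hold, so $P$ is approximately Markovian and locally extendible, which is the first claim.

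For the learning claim I would not invoke Theorem~\ref{thm:main} as a black box, because its output $\mathcal{W}$ need not be a classical noisy channel. Instead I would rerun the covering-scheme argument inside the classical category. The key point is that the constructions of $\Psi$ and $\Phi$ in Theorems~\ref{thm:AM} and \ref{thm:LE} are built only from the gates $\mathcal{E}_{\ell,x}$, their reversals, reset channels and partial traces. All of these are classical (resets and partial traces on diagonal states are marginalization followed by resampling the point mass). Hence the existential maps are stochastic matrices, and the existence bounds $\varepsilon_{\mathrm{LI}}$ carry over with $\ell_1$ distance.

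On the learning side, I would replace classical shadows by empirical marginals of the samples. By a standard union bound plus the $\ell_1$ concentration of empirical distributions on $|\mathcal{X}|^{O(cd)^k}$ outcomes, $M=n^2|\mathcal{X}|^{O(cd)^k}\varepsilon^{-2}\log(n/\delta)$ samples give all $K\le n$ local marginals to accuracy $\varepsilon_{\mathrm{LT}}=\varepsilon/(8n)$. I would replace the fidelity SDP by a linear program that minimizes $|P_{A_{\mathrm{in}}BC}-T(P_{A_{\mathrm{in}}BE})|_1$ over stochastic matrices $T$ from $BE$ to $BC$. This program has $|\mathcal{X}|^{O(cd)^k}$ variables and is solvable in time $|\mathcal{X}|^{O(cd)^k}\log(n/\varepsilon)$. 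The error bookkeeping then copies Lemma~\ref{lem:learning_locally}, Lemma~\ref{lem:robustness_LT} and Corollary~\ref{cor:overall_error_single} word for word, since they only use the triangle inequality and contractivity, both valid for stochastic maps in $\ell_1$. The telescoping bound gives total error at most $8K\varepsilon_{\mathrm{LI}}\le\varepsilon$, and halving gives total variation distance.

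The main obstacle is Lemma~\ref{lem:learning_locally}, which uses a recovery map $\Psi_{A_{\mathrm{in}}\to A}$. I must make sure that this map, which is only used in the analysis, can be taken classical, so that it commutes with the learned $T$. The map from Theorem~\ref{thm:AM} is a composition of classical pieces, so it is fine. Even if it were not, commutation holds because the supports are disjoint, and contractivity holds for any channel, so the argument survives either way.
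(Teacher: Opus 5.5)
Your proposal is correct, but you rebuild the learning pipeline where the paper uses a black box. The paper treats this corollary as an immediate consequence of Theorems~\ref{thm:AM}, \ref{thm:LE} and \ref{thm:main}. It embeds $P$ as $\mathrm{diag}(P)$ and embeds noisy channels via $K_{ij}=T_{ji}^{1/2}\ket{i}\!\bra{j}$. It replaces qubits by $|\mathcal{X}|$-level sites, which is where $|\mathcal{X}|^{O(cd)^k}$ comes from, and then applies the quantum results directly.

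You share the first half of this. For the learning claim, however, you redo everything classically: empirical marginals instead of classical shadows, an $\ell_1$-minimizing LP over stochastic matrices instead of the fidelity SDP, and the same triangle-inequality and contractivity bookkeeping. Your route buys exactly the two points the paper glosses over. First, the output $W$ is genuinely a stochastic circuit, rather than a quantum channel whose SDP-learned gates could create coherences. Second, the learner only ever touches classical samples. It also replaces an SDP by a cheaper LP.

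The paper's route buys brevity, and it can be made rigorous with a much smaller patch than a full rerun. Every target $\rho_{S_i}$ is diagonal. So replace each learned gate $\mathcal{W}'_i$ by $\Delta\circ\mathcal{W}'_i\circ\Delta$, where $\Delta$ is the dephasing on the sites the gate acts on. This gives $\|\Delta\mathcal{W}'_i\Delta(\rho_{S_{i-1}})-\rho_{S_i}\|_1=\|\Delta(\mathcal{W}'_i(\rho_{S_{i-1}})-\rho_{S_i})\|_1\le 8\varepsilon_{\mathrm{LI}}$, so the telescoping bound is untouched. Moreover, a sample $y\sim P$ is already a copy $\ket{y}\!\bra{y}$ of $\mathrm{diag}(P)$, on which the shadow measurements can be simulated classically.

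The same dephasing observation shows in one line that the existential maps of Theorems~\ref{thm:AM} and \ref{thm:LE} can be taken classical, so you need not inspect their construction. You are also right that $\Psi_{A_{\mathrm{in}}\to A}$ in Lemma~\ref{lem:learning_locally} appears only in the analysis and need not be classical.
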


\section*{Acknowledgement}

F.\,H.\,would like to thank Shengqi Sang for fruitful technical discussions.
The authors also acknowledge Chi-Fang Chen, Sitan Chen, Timothy H.\,Hsieh, Liang Jiang, Su-un Lee, Guangkuo Liu, Yunchao Liu, Jonathan Wurtz, and Yifan F.\,Zhang for stimulating discussions about the work.
F.\,H.\,acknowledges support from the QuEra Quantum Innovation Postdoctoral Fellowship.
W.\,G.\,acknowledges support from NSF Grant CCF-2430375 and the Von Neumann Award from Harvard Computer Science.
X.\,G.\,acknowledges support from NSF-CCF-2534807 and NSF PFC grant No.\,PHYS 2317149.

% \bibliographystyle{unsrt}
% \bibliography{ref}
\printbibliography

% \newpage

\appendix

\section{Proofs of Backward Lightcone Decompositions}
\label{sec:backward_lightcone_decomposition}

In this appendix, we provide the proof of backward lightcone decompositions (Fact \ref{fac:LCD-1} and Fact \ref{fac:LCD-2}) given in Section \ref{sec:backward_lightcone} of the main text.

\begin{proof}[Proof of Fact \ref{fac:LCD-1}]
    Let $\mathcal{B}_S =\mathcal{B}_{S, d} \circ \cdots \circ \mathcal{B}_{S, 1}$, $\mathcal{B}_{S_1} =\mathcal{B}_{S_1, d} \circ \cdots \circ \mathcal{B}_{S_1, 1}$ and $\mathcal{B}_{S_2} =\mathcal{B}_{S_2, d} \circ \cdots \circ \mathcal{B}_{S_2, 1}$. We first prove a claim: for any $\mathcal{E}_{\ell, x} \in \mathcal{B}_{S, \ell}$, if $\mathcal{E}_{\ell, x} \notin \mathcal{B}_{S_1, \ell}$ then we must have $\mathcal{E}_{\ell, x} \in \mathcal{B}_{S_2, \ell}$. We prove this claim by induction.
    \begin{itemize}
        \item For $\ell = d$, suppose a gate $\mathcal{E}_{d, x} \in \mathcal{B}_{S, d}$ and $\mathcal{E}_{d, x} \notin \mathcal{B}_{S_1, d}$. According to the definition of backward lightcone, we have $\mathrm{Supp} (\mathcal{E}_{d, x}) \cap (S_1 \cup S_2) \neq \varnothing$ but $\mathrm{Supp} (\mathcal{E}_{d, x}) \cap S_1 = \varnothing$. This implies that $\mathrm{Supp} (\mathcal{E}_{d, x}) \cap S_2 \neq \varnothing$, namely, $\mathcal{E}_{d, x} \in \mathcal{B}_{S_2, d}$.
    
        \item Now suppose for any $\ell' \geq \ell + 1$ (where $1 \leq \ell \leq d - 1$), suppose we have: for any $\mathcal{E}_{\ell', x} \in \mathcal{B}_{S, \ell'}$, if $\mathcal{E}_{\ell', x} \notin \mathcal{B}_{S_1, \ell'}$ then we must have $\mathcal{E}_{\ell', x} \in \mathcal{B}_{S_2, \ell'}$. Now, consider a gate $\mathcal{E}_{\ell, x} \in \mathcal{B}_{S, \ell}$ and $\mathcal{E}_{\ell, x} \notin \mathcal{B}_{S_1, \ell}$. According to the definition of backward lightcone, we have $\mathrm{Supp} (\mathcal{E}_{\ell, x}) \cap (\mathrm{Supp} (\mathcal{B}_{S, d} \circ \cdots \circ \mathcal{B}_{S, \ell + 1}) \cup S) \neq \varnothing$. There are two cases, and in both cases, we always have $\mathcal{E}_{\ell, x} \in \mathcal{B}_{S_2, \ell}$:
        \begin{itemize}
            \item There exists an $\ell' \geq \ell + 1$ such that $\mathrm{Supp} (\mathcal{E}_{\ell, x}) \cap \mathrm{Supp} (\mathcal{B}_{S, \ell'}) \neq \varnothing$. Let $\mathrm{Supp} (\mathcal{E}_{\ell, x}) \cap \mathrm{Supp} (\mathcal{E}_{\ell', x'}) \neq \varnothing$ for some position index $x'$. However, since $\mathcal{E}_{\ell, x} \notin \mathcal{B}_{S_1, \ell}$ then $\mathrm{Supp} (\mathcal{E}_{\ell, x}) \cap \mathrm{Supp} (\mathcal{B}_{S_1, \ell'}) = \varnothing$ and hence $\mathcal{E}_{\ell', x'} \notin \mathcal{B}_{S_1, \ell'}$. By the induction hypothesis, we have $\mathcal{E}_{\ell, x} \in \mathcal{B}_{S_2, \ell'}$, namely $\mathrm{Supp} (\mathcal{E}_{\ell, x}) \cap \mathrm{Supp} (\mathcal{B}_{S_2, \ell'}) \neq \varnothing$ for this $\ell' \geq \ell + 1$. That is, $\mathcal{E}_{\ell, x} \in \mathcal{B}_{S_2, \ell}$.
      
            \item $\mathrm{Supp} (\mathcal{E}_{\ell, x}) \cap S \neq \varnothing$. However, since $\mathcal{E}_{\ell, x} \notin \mathcal{B}_{S_1, \ell}$, then $\mathrm{Supp} (\mathcal{E}_{\ell, x}) \cap S_1 = \varnothing$ and hence $\mathrm{Supp} (\mathcal{E}_{\ell, x}) \cap S_2 \neq \varnothing$. That is, $\mathcal{E}_{\ell, x} \in \mathcal{B}_{S_2, \ell}$.
        \end{itemize}
    Our claim follows immediately by induction.
    \end{itemize}
    On the other hand, we also claim that for any $\mathcal{E}_{\ell, x} \in \mathcal{B}_{S_2, \ell}$, we must have $\mathcal{E}_{\ell, x} \in \mathcal{B}_{S, \ell}$. This claim is simply from the monotonicity of the backward lightcone for $S_2 \subseteq S$. According to the definition of $\mathcal{B}_{S_2} \backslash \mathcal{B}_{S_1}$, we have $\mathcal{Q}=\mathcal{B}_{S_2} \backslash \mathcal{B}_{S_1} =\mathcal{Q}_d \circ \cdots \circ \mathcal{Q}_1$. According to our two claims above, we have
    \begin{equation}
        \mathcal{Q}_{\ell} = \prod_{\mathcal{E}_{\ell, x} \notin \mathcal{B}_{S_1, \ell}, \mathcal{E}_{\ell, x} \in \mathcal{B}_{S_2, \ell}} \mathcal{E}_{\ell, x} = \prod_{\mathcal{E}_{\ell, x} \notin \mathcal{B}_{S_1, \ell}, \mathcal{E}_{\ell, x} \in \mathcal{B}_{S, \ell}} \mathcal{E}_{\ell, x} .
    \end{equation}
    Notice that for any $\ell \in [d]$, by the definition of backward lightcone, for any $\mathcal{E}_{\ell, x} \notin \mathcal{B}_{S_1, \ell}$, we have $\mathrm{Supp} (\mathcal{E}_{\ell, x}) \cap \mathrm{Supp} (\mathcal{B}_{S_1, d} \circ \cdots \circ \mathcal{B}_{S_1, \ell + 1}) = \varnothing$. Since $\mathrm{Supp} (\mathcal{Q}_{\ell}) = \cup_{\mathcal{E}_{\ell, x} \notin \mathcal{B}_{S_1, \ell}, \mathcal{E}_{\ell, x} \in \mathcal{B}_{S, \ell}} \mathrm{Supp} (\mathcal{E}_{\ell, x})$, we have $\mathrm{Supp} (\mathcal{Q}_{\ell}) \cap \mathrm{Supp} (\mathcal{B}_{S_1, d} \circ \cdots \circ \mathcal{B}_{S_1, \ell + 1}) = \varnothing$. It means $\mathcal{Q}_{\ell}$ and $\mathcal{B}_{S_1, d} \circ \cdots \circ\mathcal{B}_{S_1, \ell + 1}$ commute. Moreover, by the definition of $\mathcal{Q}_{\ell}$, we have $\mathcal{Q}_{\ell} \circ \mathcal{B}_{S_1, \ell} = \left( \prod_{\mathcal{E}_{\ell, x} \notin \mathcal{B}_{S_1, \ell}, \mathcal{E}_{\ell, x} \in \mathcal{B}_{S, \ell}} \mathcal{E}_{\ell, x} \right) \left( \prod_{\mathcal{E}_{\ell, x} \in \mathcal{B}_{S_1, \ell}} \mathcal{E}_{\ell, x} \right) = \prod_{\mathcal{E}_{\ell, x} \in \mathcal{B}_{S, \ell}} \mathcal{E}_{\ell, x} =\mathcal{B}_{S, \ell}$. This yields:
    \begin{eqnarray}
        (\mathcal{B}_{S_2} \backslash \mathcal{B}_{S_1}) \circ \mathcal{B}_{S_1} & = & \mathcal{Q}_d \circ \cdots \circ \mathcal{Q}_2 \circ \mathcal{Q}_1 \circ \mathcal{B}_{S_1, d} \circ \cdots \circ \mathcal{B}_{S_2, 1} \circ \mathcal{B}_{S_1, 1} \nonumber\\
        & = & \mathcal{Q}_d \circ \cdots \circ \mathcal{Q}_2 \circ \mathcal{B}_{S_1, d} \circ \cdots \circ \mathcal{B}_{S_2, 1} \circ (\mathcal{Q}_1 \circ \mathcal{B}_{S_1, 1}) . 
    \end{eqnarray}
    By induction, we have
    \begin{equation}
        \mathcal{Q} \circ \mathcal{B}_{S_1} = (\mathcal{Q}_d \circ \mathcal{B}_{S_1, d}) \circ \cdots \circ (\mathcal{Q}_1 \circ \mathcal{B}_{S_1, 1}) =\mathcal{B}_{S, d} \circ \cdots \circ \mathcal{B}_{S, 1} =\mathcal{B}_S .
    \end{equation}
    Furthermore, for any $\mathcal{E}_{\ell, x} \in \mathcal{Q}$, we have proved that $\mathrm{Supp} (\mathcal{E}_{\ell, x}) \subseteq \mathrm{Supp} (\mathcal{B}_{S_2})$ and $\mathrm{Supp} (\mathcal{E}_{\ell, x}) \cap S_1$, by the two claims we made above. Therefore, $\mathrm{Supp} (\mathcal{E}_{\ell, x}) \subseteq \mathrm{Supp} (\mathcal{B}_{S_2}) \backslash S_1 \subseteq S_2 (s) \backslash S_1$.
\end{proof}

\begin{proof}[Proof of Fact \ref{fac:LCD-2}]
    We set $S \rightarrow \Lambda$, $S_1 \rightarrow \bar{S}$ and $S_2 \rightarrow S$, we immediately have $\mathcal{E}=\mathcal{B}_{\Lambda} =\mathcal{Q}_S \circ \mathcal{B}_{\bar{S}}$ where $\mathcal{Q}_S =\mathcal{B}_S \backslash \mathcal{B}_{\bar{S}}$. Furthermore, $\mathrm{Supp} (\mathcal{Q}_S) \subseteq S (s) \backslash \bar{S} = S$.
\end{proof}

\begin{proof}[Proof of Fact \ref{fac:LCD-3}]
    The reduced density matrix of $\rho_{S_1 S_2}$ is entirely determined by the backward lightcone $\mathcal{B}_{S_1 S_2}$. Since $\mathrm{dist}(S_1, S_2) \geq 2s$, then $\mathrm{Supp}(\mathcal{B}_{S_1}) \cap \mathrm{Supp}(\mathcal{B}_{S_2}) = \varnothing$. We can directly compute $\rho_{S_1 S_2}$:
    \begin{align}
        \rho_{S_1 S_2} & = \mathrm{Tr}_{\overline{S_1 S_2}}(\mathcal{B}_{S_1 S_2}(\rho_0)) = \mathrm{Tr}_{\overline{S_1 S_2}}(\mathcal{B}_{S_1} \mathcal{B}_{S_2} (\rho_0)) \nonumber\\
        & = \mathrm{Tr}_{\bar{S_1}}(\mathcal{B}_{S_1} (\rho_0)) \otimes \mathrm{Tr}_{\bar{S_2}}(\mathcal{B}_{S_2} (\rho_0)) = \rho_{S_1} \otimes \rho_{S_2}, 
    \end{align}
    which completes the proof,.
\end{proof}

\section{Local Reversibility is an Equivalence Relation}
\label{sec:equivalence_relation}

\begin{theorem}[Equivalence relation]
  Local reversibility satisfies the three properties of an equivalence relation:
  \begin{enumerate}
    \item Reflexivity: $\rho$ is locally reversible from
    $\rho$.
    
    \item Symmetry: if $\sigma$ is locally reversible from $\rho$, then $\rho$ is locally reversible
    from $\sigma$.
    
    \item Transitivity: if $\sigma$ is locally reversible from $\rho$, and $\tau$ is locally reversible from $\rho$, then $\sigma$ is locally reversible from $\tau$.
  \end{enumerate}
\end{theorem}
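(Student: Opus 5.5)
We prove the three properties directly from Definition \ref{def:LR}. Throughout we use the following reading of that definition: "$\sigma$ is locally reversible from $\rho$" means that shallow circuits $\mathcal{E}$ with $\mathcal{E}(\sigma)\approx\rho$ exist, reversible gate by gate along the path starting at $\sigma$. The only tools needed are the telescoping argument of Lemma \ref{lem:LI}, the relaxed reversibility of Lemma \ref{lem:LR-modify}, and contractivity. Since $\eta$ and $\varepsilon_{\mathrm{LR}}$ are arbitrary, it suffices to show that every target pair $(\eta',\varepsilon'_{\mathrm{LR}})$ can be met by choosing the input parameters small enough. The depth may grow by a constant factor and the gate size stays $c$.

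\textbf{Reflexivity.} Take $\mathcal{E}$ to consist of identity gates and set $\tilde{\mathcal{E}}_{\ell,x}=\mathcal{I}$. Then $\mathcal{E}(\rho)=\rho$ and all reversibility errors vanish.

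\textbf{Symmetry.} Fix $\eta,\varepsilon_{\mathrm{LR}}$ and take $\mathcal{E}$ with $\|\rho-\mathcal{E}(\sigma)\|_1\le\eta$ and reversal gates $\tilde{\mathcal{E}}_{\ell,x}$. Define the reversed circuit $\mathcal{F}=\mathcal{F}_d\circ\cdots\circ\mathcal{F}_1$ with $\mathcal{F}_{j}=\prod_x\tilde{\mathcal{E}}_{d+1-j,x}$. It has the same depth and supports.

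\emph{Closeness of the intermediate states.} Repeat the telescoping of Lemma \ref{lem:LI} (with $S=\Lambda$), but stop after undoing layers $d,\dots,\ell+1$. Combining this with contractivity and $\|\rho-\mathcal{E}(\sigma)\|_1\le\eta$ gives
\[
\|\mathcal{F}_{\le d-\ell}(\rho)-\mathcal{E}_{\le \ell}(\sigma)\|_1\le \eta+nd\,\varepsilon_{\mathrm{LR}}
\quad\text{for every } 0\le \ell\le d .
\]
The case $\ell=0$ is $\|\mathcal{F}(\rho)-\sigma\|_1\le\eta+nd\,\varepsilon_{\mathrm{LR}}$.

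\emph{Reversal gates for $\mathcal{F}$.} For the gate $\tilde{\mathcal{E}}_{\ell,x}$ of $\mathcal{F}$, take the reversal gate $\mathcal{E}_{\ell,x}$, which has the same support. Evaluate on the ideal state $\mathcal{E}_{\le\ell}(\sigma)=\mathcal{C}\circ\mathcal{E}_{\ell,x}\circ\mathcal{E}_{\le\ell-1}(\sigma)$, where $\mathcal{C}$ collects the other gates of layer $\ell$ and is disjoint from $\mathrm{Supp}(\mathcal{E}_{\ell,x})$. Lemma \ref{lem:LR-modify} gives
\[
\tilde{\mathcal{E}}_{\ell,x}\,\mathcal{E}_{\le\ell}(\sigma)\approx_{\varepsilon_{\mathrm{LR}}}\mathcal{C}\,\mathcal{E}_{\le\ell-1}(\sigma).
\]
Applying $\mathcal{E}_{\ell,x}$ to both sides and using contractivity returns $\mathcal{E}_{\le\ell}(\sigma)$ within $\varepsilon_{\mathrm{LR}}$.

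\emph{Transfer to the true state.} The true input to $\tilde{\mathcal{E}}_{\ell,x}$ is $\mathcal{F}_{\le d-\ell}(\rho)$ followed by some other gates of $\mathcal{F}_{d-\ell+1}$, which act disjointly from $\mathrm{Supp}(\mathcal{E}_{\ell,x})$. This is the only slightly delicate point, because gates inside a layer are applied in some order. We handle it by absorbing those disjoint gates via Lemma \ref{lem:LR-modify}, which they commute with, and by contractivity. Swapping the ideal state for the true state costs twice the distance between them, so the per-gate reversibility error is at most
\[
\varepsilon_{\mathrm{LR}}+2(\eta+nd\,\varepsilon_{\mathrm{LR}}).
\]
Given targets $\eta',\varepsilon'_{\mathrm{LR}}$, choosing $\eta,\varepsilon_{\mathrm{LR}}\le \min(\eta',\varepsilon'_{\mathrm{LR}})/(4nd+4)$ suffices.

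\textbf{Transitivity.} Fix $\eta,\varepsilon_{\mathrm{LR}}$.
\begin{itemize}
\item From "$\tau$ is locally reversible from $\rho$": a circuit $\mathcal{G}$ with $\mathcal{G}(\tau)\approx_{\eta}\rho$, reversible along its path from $\tau$.
\item From "$\sigma$ is locally reversible from $\rho$" and Symmetry: a circuit $\mathcal{F}$ with $\mathcal{F}(\rho)\approx_{\eta}\sigma$, reversible along its path from $\rho$.
\end{itemize}
Set $\mathcal{H}=\mathcal{F}\circ\mathcal{G}$, of depth $2d$. Then
\[
\|\mathcal{H}(\tau)-\sigma\|_1\le 2\eta
\]
by contractivity. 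The gates of $\mathcal{G}$ are reversible with error $\varepsilon_{\mathrm{LR}}$ directly. For a gate of $\mathcal{F}$, the true intermediate state $\mathcal{F}_{<}\mathcal{G}(\tau)$ is $\eta$-close to the reference $\mathcal{F}_{<}(\rho)$, so its reversibility error is at most $\varepsilon_{\mathrm{LR}}+2\eta$. Hence $\tau$ is locally reversible from $\sigma$ along $\mathcal{H}$.

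Applying Symmetry once more gives $\sigma$ locally reversible from $\tau$, as claimed. This step uses the same $nd$ bookkeeping as before, now with $d$ replaced by $2d$. Again all errors are made arbitrarily small by the choice of $\eta$ and $\varepsilon_{\mathrm{LR}}$.

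The main obstacle is the Symmetry step. There one must control, gate by gate, how far the reversed circuit's intermediate states drift from the ideal states $\mathcal{E}_{\le\ell}(\sigma)$, and correctly absorb the not-yet-applied gates of the same layer using Lemma \ref{lem:LR-modify}. Reflexivity and Transitivity then reduce to routine triangle-inequality bookkeeping.
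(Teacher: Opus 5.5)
Your proof is correct and follows the paper's argument essentially step for step: identity gates for reflexivity; for symmetry, the reversed circuit, the partial-undo bound $\|\mathcal{F}_{\le d-\ell}(\rho)-\mathcal{E}_{\le\ell}(\sigma)\|_1\le\eta+nd\,\varepsilon_{\mathrm{LR}}$ from the telescoping of Lemma \ref{lem:LI}, Lemma \ref{lem:LR-modify} plus contractivity on the ideal state, and the triangle-inequality transfer giving $(2nd+1)\varepsilon_{\mathrm{LR}}+2\eta$; and composition for transitivity. The only difference is in transitivity: the paper composes $\sigma\to\rho\to\tau$ directly, tacitly taking ``$\rho$ is locally reversible from $\tau$'' as its second hypothesis, whereas you reconcile with the statement as written via Symmetry (a single application, to the hypothesis on $\tau$, would suffice instead of your two), and your same-layer concern in the symmetry step is harmless but unnecessary because Definition \ref{def:LR} conditions only on completed layers.
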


\begin{proof}[Proof of Reflexivity]
    We simply take $\mathcal{E}=\mathcal{I}$, the identity map, then we have $\rho =\mathcal{I} (\rho)$.
\end{proof}

\begin{proof}[Proof of Symmetry]
    Let $\sigma$ be locally reversible from $\rho$. This means:
    For any $\eta > 0$ and $\varepsilon_{\mathrm{LR}} > 0$, there exists a shallow channel $\mathcal{E}=\mathcal{E}_d \circ \cdots \circ \mathcal{E}_2 \circ \mathcal{E}_1$ where $\mathcal{E}_{\ell} = \prod_x \mathcal{E}_{\ell, x}$ and each $\mathcal{E}_{\ell, x}$ has a local support, such that $\| \rho -\mathcal{E} (\sigma) \|_1 \leq \eta$ and $\sigma$ is $\varepsilon_{\mathrm{LR}}$-locally reversible from $\mathcal{E} (\sigma)$ such that $\| \tilde{\mathcal{E}}_{\ell, x} \circ \mathcal{E}_{\ell, x} \circ \mathcal{E}_{\leq \ell - 1} (\sigma) -\mathcal{E}_{\leq \ell - 1} (\sigma) \|_1 \leq \varepsilon_{\mathrm{LR}}$ where $\mathcal{E}_{\leq \ell - 1} =\mathcal{E}_{\ell - 1} \circ \cdots \circ \mathcal{E}_2 \circ \mathcal{E}_1$.

    Now, given $\eta' > 0$ and $\varepsilon_{\mathrm{LR}}' > 0$, we define that (notice the ordering)
    \begin{equation}
        \mathcal{S}=\mathcal{S}_d \circ \cdots \circ \mathcal{S}_2 \circ \mathcal{S}_1, \quad \mathcal{S}_{\ell} = \prod_x \tilde{\mathcal{E}}_{d - \ell + 1, x} ,
    \end{equation}
    and we claim that
    \begin{enumerate}
        \item $\| \sigma -\mathcal{S} (\rho) \|_1 \leq \eta'$,
        \item $\rho$ is $\varepsilon'_{\mathrm{LR}}$-locally reversible from $\mathcal{S} (\rho)$ along the circuit $\mathcal{S}$. 
    \end{enumerate}
    Let us start with the first claim. Since $\sigma$ is $\varepsilon_{\mathrm{LR}}$-locally reversible from $\mathcal{E} (\sigma)$, according to Lemma \ref{lem:LI}, we have $\| \sigma -\mathcal{S} (\mathcal{E} (\sigma)) \|_1 \leq n d \cdot \varepsilon_{\mathrm{LR}}$. It implies that
    \begin{equation}
        \| \sigma -\mathcal{S} (\rho) \|_1 \overset{\text{(i)}}{\leq} \| \sigma -\mathcal{S} (\mathcal{E} (\sigma)) \|_1 + \| \mathcal{S} (\mathcal{E} (\sigma)) -\mathcal{S} (\rho) \|_1 \overset{\text{(ii)}}{\leq} n d \cdot \varepsilon_{\mathrm{LR}} + \| \mathcal{E} (\sigma) - \rho \|_1 \overset{\text{(iii)}}{\leq} n d \cdot \varepsilon_{\mathrm{LR}} + \eta,
    \end{equation}
    where inequality (i) is from the triangle inequality; inequality (ii) is from the contractivity of the CPTP map $\mathcal{S}$; and inequality (iii) is by $\| \rho -\mathcal{E} (\sigma) \|_1 \leq \eta$.
    
    Next, let us prove the second claim. Consider the first $\ell - 1$ layer of $\mathcal{S}$, namely $\mathcal{S}_{\leq \ell - 1} =\mathcal{S}_{\ell - 1} \circ \cdots \circ \mathcal{S}_2 \circ \mathcal{S}_1$ (recall $\mathcal{S}_{\ell - 1} = \prod_x \tilde{\mathcal{E}}_{d - \ell + 2, x}$).
    Notice that $\mathcal{E}_{\leq d - \ell + 1} (\sigma)$ is $\varepsilon_{\mathrm{LR}}$-locally reversible from $\mathcal{E} (\sigma)$ along $\mathcal{E}_d \circ \mathcal{E}_{d - 1} \circ \cdots \circ \mathcal{E}_{d - \ell + 2}$. 
    Then according to Lemma \ref{lem:LI}, we have 
    \begin{equation}
        \| \mathcal{S}_{\leq \ell - 1} \circ \mathcal{E} (\sigma) -\mathcal{E}_{\leq d - \ell + 1} (\sigma) \|_1 \leq n d \cdot \varepsilon_{\mathrm{LR}}
    \end{equation}
    This implies that (recall $\mathcal{E}_{\leq d - \ell + 1} =\mathcal{E}_{d - \ell + 1} \circ \cdots \circ \mathcal{E}_2 \circ \mathcal{E}_1$):
    \begin{eqnarray}
        \| \mathcal{E}_{\leq d - \ell + 1} (\sigma) -\mathcal{S}_{\leq \ell - 1} (\rho) \|_1 & \overset{\text{(iii)}}{\leq} & \| \mathcal{E}_{\leq d - \ell + 1} (\sigma) -\mathcal{S}_{\leq \ell - 1} \circ \mathcal{E} (\sigma) \|_1 + \| \mathcal{S}_{\leq \ell - 1} \circ \mathcal{E} (\sigma) -\mathcal{S}_{\leq \ell - 1} (\rho) \|_1 \nonumber\\
        & \overset{\text{(iv)}}{\leq} & n d \cdot \varepsilon_{\mathrm{LR}} + \| \mathcal{E} (\sigma) - \rho \|_1 \overset{\text{(v)}}{\leq} n d \cdot \varepsilon_{\mathrm{LR}} + \eta . 
    \end{eqnarray}
    where inequality (iii) is from the triangle inequality; inequality (iv) is from the contractivity of the CPTP map $\mathcal{S}_{\leq \ell - 1}$; and the inequality (v) is by $\| \rho -\mathcal{E} (\sigma) \|_1 \leq \eta$.
    
    For any given gate position index $x$ in the $\ell$-th layer, we decompose $\mathcal{E}_{d - \ell + 1} =\mathcal{E}_{d - \ell + 1, x} \circ \mathcal{C}$ where $\mathcal{C} = \prod_{x' \neq x} \mathcal{E}_{d - \ell + 1, x'}$ such that $\mathrm{Supp} (\mathcal{C}) \cap \mathrm{Supp} (\mathcal{E}_{\ell, x}) = \varnothing$. This implies that $\mathcal{E}_{\leq d - \ell + 1} =\mathcal{E}_{d - \ell + 1, x} \circ \mathcal{C} \circ \mathcal{E}_{\leq d - \ell}$.
    
    By the triangle inequality, we can bound
    \begin{align}
        & \| \mathcal{E}_{d - \ell + 1, x} \circ \tilde{\mathcal{E}}_{d - \ell + 1, x} \circ \mathcal{S}_{\leq \ell - 1} (\rho) -\mathcal{S}_{\leq \ell - 1} (\rho) \|_1 \nonumber\\
        \leq~ & \| \mathcal{E}_{d - \ell + 1, x} \circ \tilde{\mathcal{E}}_{d - \ell + 1, x} \circ \mathcal{S}_{\leq \ell - 1} (\rho) -\mathcal{E}_{d - \ell + 1, x} \circ \tilde{\mathcal{E}}_{d - \ell + 1, x} \circ \mathcal{E}_{\leq d - \ell + 1} (\sigma) \|_1 \nonumber\\
        & + \| \mathcal{E}_{d - \ell + 1, x} \circ \tilde{\mathcal{E}}_{d - \ell + 1, x} \circ \mathcal{E}_{\leq d - \ell + 1} (\sigma) -\mathcal{E}_{\leq d - \ell + 1} (\sigma) \|_1 + \| \mathcal{E}_{\leq d - \ell + 1} (\sigma) -\mathcal{S}_{\leq \ell - 1} (\rho) \|_1 . \label{eq:symmetry}
    \end{align}
    The first term in Eq.\,(\ref{eq:symmetry}) is further bounded by $\| \mathcal{S}_{\leq \ell - 1} (\rho) -\mathcal{E}_{\leq d - \ell + 1} (\sigma) \|_1 \leq n d \cdot \varepsilon_{\mathrm{LR}} + \eta$ using the contractivity of the CPTP map $\mathcal{E}_{d - \ell + 1, x} \circ \tilde{\mathcal{E}}_{d - \ell + 1, x}$. The second term in Eq.\,(\ref{eq:symmetry}) is bounded by
    \begin{eqnarray}
        &  & \| \mathcal{E}_{d - \ell + 1, x} \circ \tilde{\mathcal{E}}_{d - \ell + 1, x} \circ \mathcal{E}_{\leq d - \ell + 1} (\sigma) -\mathcal{E}_{\leq d - \ell + 1} (\sigma) \|_1 \nonumber\\
        & \overset{\text{(vi)}}{=} & \| \mathcal{E}_{d - \ell + 1, x} \circ \tilde{\mathcal{E}}_{d - \ell + 1, x} \circ \mathcal{E}_{d - \ell + 1, x} \circ \mathcal{C} \circ \mathcal{E}_{\leq d - \ell} (\sigma) -\mathcal{E}_{d - \ell + 1, x} \circ \mathcal{C} \circ \mathcal{E}_{\leq d - \ell} (\sigma) \|_1 \nonumber\\
        & \overset{\text{(vii)}}{\leq} & \| \tilde{\mathcal{E}}_{d - \ell + 1, x} \circ \mathcal{E}_{d - \ell + 1, x} \circ \mathcal{C} \circ \mathcal{E}_{\leq d - \ell} (\sigma) -\mathcal{C} \circ \mathcal{E}_{\leq d - \ell} (\sigma) \|_1 \overset{\text{(viii)}}{\leq} \varepsilon_{\mathrm{LR}} ,
    \end{eqnarray}
    where equality (vi) is from the decomposition $\mathcal{E}_{d - \ell + 1, x} \circ \mathcal{C} \circ \mathcal{E}_{\leq d - \ell}$; inequality (vii) is from the contractivity of the CPTP map $\mathcal{E}_{d - \ell + 1, x}$; and inequality
    (viii) is from Lemma \ref{lem:LR-modify}. 
    The third term in Eq.\,(\ref{eq:symmetry}) is bounded by $\| \mathcal{S}_{\leq \ell - 1} (\rho) -\mathcal{E}_{\leq d - \ell + 1} (\sigma) \|_1 \leq n d \cdot \varepsilon_{\mathrm{LR}} + \eta$. 
    Combining these results, we have
    \begin{equation}
        \| \mathcal{E}_{d - \ell + 1, x} \circ \tilde{\mathcal{E}}_{d - \ell + 1, x} \circ \mathcal{S}_{\leq \ell - 1} (\rho) -\mathcal{S}_{\leq \ell - 1} (\rho) \|_1 \leq (2 n d + 1) \cdot \varepsilon_{\mathrm{LR}} + 2 \eta .
    \end{equation}
    Finally, by setting
    \begin{equation}
        \varepsilon_{\mathrm{LR}} = \min \left( \frac{\varepsilon_{\mathrm{LR}}'}{2 (2 n d + 1)}, \frac{\eta'}{2 n d} \right), \quad \eta = \min \left( \frac{\varepsilon_{\mathrm{LR}}'}{4}, \frac{\eta'}{2} \right),
    \end{equation}
    and taking the corresponding $\mathcal{E}$ and $\mathcal{S}$, we can ensure that
    \begin{align}
        \| \sigma -\mathcal{S} (\rho) \|_1 & \leq n d \cdot \varepsilon_{\mathrm{LR}} + \eta \leq \eta', \nonumber\\
        \| \mathcal{E}_{d - \ell + 1, x} \circ \tilde{\mathcal{E}}_{d - \ell + 1, x} \circ \mathcal{S}_{\leq \ell - 1} (\rho) -\mathcal{S}_{\leq \ell - 1} (\rho) \|_1 & \leq (2 n d + 1) \cdot \varepsilon_{\mathrm{LR}} + 2 \eta \leq \varepsilon_{\mathrm{LR}}' .
    \end{align}
    This completes the proof.
\end{proof}

\begin{proof}[Proof of Transitivity]
    Let $\sigma$ is locally reversible from $\rho$, and $\tau$ is locally reversible from $\sigma$. This means:
    For any $\eta_1, \eta_2 > 0$ and $\varepsilon_{\mathrm{LR}, 1}, \varepsilon_{\mathrm{LR}, 2} > 0$, there exists shallow channels $\mathcal{E}=\mathcal{E}_{d_1} \circ \cdots \circ \mathcal{E}_2 \circ \mathcal{E}_1$ and $\mathcal{F}=\mathcal{F}_{d_2} \circ \cdots \circ \mathcal{F}_2 \circ \mathcal{F}_1$. Here, \ $\mathcal{E}_{\ell} = \prod_x \mathcal{E}_{\ell, x}$ and $\mathcal{F}_{\ell} = \prod_x \mathcal{F}_{\ell, x}$, and each $\mathcal{E}_{\ell, x}$ and $\mathcal{F}_{\ell, x}$ has a local support. We have $\| \rho -\mathcal{E} (\sigma) \|_1 \leq \eta_1$ and $\| \tau -\mathcal{F} (\rho) \|_1 \leq \eta_2$; $\sigma$ is $\varepsilon_{\mathrm{LR}, 1}$-locally reversible from $\mathcal{E} (\sigma)$ (namely $\| \tilde{\mathcal{E}}_{\ell, x} \circ \mathcal{E}_{\ell, x} \circ \mathcal{E}_{\leq \ell - 1} (\sigma) -\mathcal{E}_{\leq \ell - 1} (\sigma) \|_1 \leq \varepsilon_{\mathrm{LR}, 1}$ where $\mathcal{E}_{\leq \ell - 1} =\mathcal{E}_{\ell - 1} \circ \cdots \circ \mathcal{E}_2 \circ \mathcal{E}_1$) ; and $\rho$ is $\varepsilon_{\mathrm{LR}, 2}$-locally reversible from $\mathcal{F} (\rho)$ (namely $\| \tilde{\mathcal{F}}_{\ell, x} \circ \mathcal{F}_{\ell, x} \circ \mathcal{F}_{\leq \ell - 1} (\sigma) -\mathcal{F}_{\leq \ell - 1} (\sigma) \|_1 \leq \varepsilon_{\mathrm{LR}, 2}$ where $\mathcal{F}_{\leq \ell - 1} =\mathcal{F}_{\ell - 1} \circ \cdots \circ \mathcal{F}_2 \circ \mathcal{F}_1$).
  
    Now, given $\eta' > 0$ and $\varepsilon_{\mathrm{LR}}' > 0$, we define that
    \begin{equation}
        \mathcal{T}=\mathcal{F} \circ \mathcal{E} =: \mathcal{T}_{d_1 + d_2} \circ \cdots \circ \mathcal{T}_2 \circ \mathcal{T}_1,
    \end{equation}
    and we claim that
    \begin{enumerate}
        \item $\| \tau -\mathcal{T} (\sigma) \|_1 \leq \eta'$,
        \item $\sigma$ is $\varepsilon'_{\mathrm{LR}}$-locally reversible from $\mathcal{T} (\tau)$ along the circuit $\mathcal{T}$. 
    \end{enumerate}
    The first claim is straightforward:
    \begin{equation}
        \| \tau -\mathcal{T} (\sigma) \|_1 \overset{\text{(i)}}{\leq} \| \tau -\mathcal{F} (\rho) \|_1 + \| \mathcal{F} (\rho) -\mathcal{F} \circ \mathcal{E} (\sigma) \|_1 \overset{\text{(ii)}}{\leq} \eta_2 + \| \rho -\mathcal{E} (\sigma) \|_1 \overset{\text{(iii)}}{\leq} \eta_1 + \eta_2 .
    \end{equation}
    where inequality (i) is from the triangle inequality; inequality (ii) is from the contractivity of the CPTP map $\mathcal{F}$, and also $\| \tau -\mathcal{F} (\rho) \|_1 \leq \eta_2$; and the inequality (iii) is by $\| \rho -\mathcal{E} (\sigma) \|_1 \leq \eta_1$.
  
    Next, let us prove the second claim. Consider the first $\ell - 1$ layer of $\mathcal{T}$, namely $\mathcal{T}_{\leq \ell - 1} =\mathcal{T}_{\ell - 1} \circ \cdots \circ \mathcal{T}_2 \circ \mathcal{T}_1$. If $\ell \leq d_1$, then $\mathcal{T}_{\leq \ell - 1} =\mathcal{E}_{\leq \ell - 1}$ and we automatically have $\| \tilde{\mathcal{E}}_{\ell, x} \circ \mathcal{E}_{\ell, x} \circ \mathcal{E}_{\leq \ell - 1} (\sigma) -\mathcal{E}_{\leq \ell - 1} (\sigma) \|_1 \leq \varepsilon_{\mathrm{LR}, 1}$. Now, we only consider the case where $\ell \geq d_1 + 1$. In this case, $\mathcal{T}_{\leq \ell - 1} =\mathcal{F}_{\leq \ell - d_1 - 1} \circ \mathcal{E}$.
  
    By the triangle inequality, we can bound
    \begin{align}
        & \| \tilde{\mathcal{F}}_{\ell - d_1, x} \circ \mathcal{F}_{\ell - d_1, x} \circ \mathcal{T}_{\leq \ell - 1} (\sigma) -\mathcal{T}_{\leq \ell - 1} (\sigma) \|_1 \nonumber\\
        \leq~& \| \tilde{\mathcal{F}}_{\ell - d_1, x} \circ \mathcal{F}_{\ell - d_1, x} \circ \mathcal{F}_{\leq \ell - d_1 - 1} \circ \mathcal{E} (\sigma) - \tilde{\mathcal{F}}_{\ell - d_1, x} \circ \mathcal{F}_{\ell - d_1, x} \circ \mathcal{F}_{\leq \ell - d_1 - 1} (\rho) \|_1 \nonumber\\
        & + \| \tilde{\mathcal{F}}_{\ell - d_1, x} \circ \mathcal{F}_{\ell - d_1, x} \circ \mathcal{F}_{\leq \ell - d_1 - 1} (\rho) -\mathcal{F}_{\leq \ell - d_1 - 1} (\rho) \|_1 \nonumber\\
        & + \| \mathcal{F}_{\leq \ell - d_1 - 1} (\rho) -\mathcal{F}_{\leq \ell - d_1 - 1} \circ \mathcal{E} (\sigma) \|_1 . \label{eq:transitivity}
    \end{align}
    The first term in Eq.\,(\ref{eq:transitivity}) is further bounded by $\| \mathcal{E} (\sigma) - \rho \|_1 \leq \eta_1$ using the contractivity of CPTP map $\tilde{\mathcal{F}}_{\ell - d_1, x} \circ \mathcal{F}_{\ell - d_1, x} \circ \mathcal{F}_{\leq \ell - d_1 - 1}$. The second term in Eq.\,(\ref{eq:transitivity}) bounded by $\varepsilon_{\mathrm{LR}, 2}$ because $\rho$ is $\varepsilon_{\mathrm{LR}, 2}$-locally reversible from $\mathcal{F} (\rho)$. The third term in Eq.\,(\ref{eq:transitivity}) is bounded by $\| \rho -\mathcal{E} (\sigma) \|_1 \leq \eta_1$. 
    Combine there results together, we have
    \begin{equation}
        \| \tilde{\mathcal{F}}_{\ell - d_1, x} \circ \mathcal{F}_{\ell - d_1, x} \circ \mathcal{T}_{\leq \ell - 1} (\sigma) -\mathcal{T}_{\leq \ell - 1} (\sigma) \|_1 \leq \varepsilon_{\mathrm{LR}, 2} + 2 \eta_1 .
    \end{equation}
    Then, by setting
    \begin{equation}
        \varepsilon_{\mathrm{LR}, 1} = \varepsilon_{\mathrm{LR}}', \quad \varepsilon_{\mathrm{LR}, 2} = \frac{\varepsilon_{\mathrm{LR}}'}{2}, \quad \eta_1 = \min \left( \frac{\varepsilon_{\mathrm{LR}}'}{4}, \frac{\eta'}{2} \right), \quad \eta_2 = \frac{\eta'}{2} .
    \end{equation}
    and taking the corresponding $\mathcal{E}$ and $\mathcal{F}$, we can ensure that
    \begin{align}
        \| \tau -\mathcal{T} (\sigma) \|_1 & \leq \eta_1 + \eta_2 \leq \eta', \\
        \| \tilde{\mathcal{T}}_{\ell, x} \circ \mathcal{T}_{\ell, x} \circ \mathcal{T}_{\leq \ell - 1} (\sigma) -\mathcal{T}_{\leq \ell - 1} (\sigma) \|_1 & \leq \varepsilon_{\mathrm{LR}, 2} + 2 \eta_1 \leq \varepsilon_{\mathrm{LR}}' . 
    \end{align}
    This completes the proof.
\end{proof}

\section{Mixed State Phases via Conditional Mutual Information}
\label{sec:cmi_decay}

For the tripartition $\Lambda=ABC$ of $\rho$ shown in Figure \ref{fig:partition}a, we let with the thickness of $B$ being $\mathrm{dist}(A, C)=r$. 
Now, suppose any channel that acts locally on $C$, the \textit{twirled Petz map} acting on $BC$ is defined as follows:
\begin{equation}
    \tilde{\mathcal{E}}_{BC} (X) : = \int \dd \theta f(\theta) \rho^{\frac{1-i \theta}{2}} \mathcal{N}_C^{\dagger}(\mathcal{N}_C (\rho)^{\frac{-1+i \theta}{2}} X \mathcal{N}_C (\rho)^{\frac{-1-i \theta}{2}}) \rho^{\frac{1+i \theta}{2}}. \label{eq:TPRM}
\end{equation}
where $f (\theta) = \pi / (2 \cosh (\pi \theta) + 2)$ is a probability distribution function of angles $\theta$. 

According to the Fawzi-Renner inequality \cite{fawzi_2015_quantum, junge_2018_universal, mark_2016_quantum}, the twirled Petz map $\tilde{\mathcal{E}}_{B C}$ has recovery property:
\begin{equation}
    \frac{1}{2 \ln 2} \| \tilde{\mathcal{E}}_{B C} \circ \mathcal{E}_C (\rho_{A B C}) - \rho_{A B C} \|_1^2 \leq I_\rho (A : C |  B) ,
\end{equation}
which tells us that the conditional mutual information (CMI) controls the recovery error of any local noise by using a (quasi)-local channel acting on $B C$. 
This observation leads to the following definition of Markov length to study the phases of quantum mixed states \cite{sang_2025_statbility, sang_2025_mixed}:
\begin{definition}[Markov length]
    The Markov length $\xi (\rho)$ of $\rho$ is the smallest length $\xi$ such that:
    \begin{equation}
        I (A : C |  B) \leq \gamma e^{- \mathrm{dist} (A, C) / \xi} .
    \end{equation}
    where $\gamma = \mathrm{poly} (n)$.
\end{definition}

This gives the mixed state phases used in quantum many-body physics (Definition 2 in Ref.\,\cite{sang_2025_mixed}):
\begin{definition}[Mixed state phases via finite Markov length]
    \label{def:MSP_via_FML}
    Given two mixed states $\rho$ and $\sigma$. Suppose there exists a local channel circuit $\mathcal{E}=\mathcal{E}_d \circ \cdots \mathcal{E}_2 \circ \mathcal{E}_1$ where each $\mathcal{E}_{\ell} = \prod_x \mathcal{E}_{\ell, x}$ is a layer of non-overlapping local channel gates $\mathcal{E}_{\ell, x}$) such that:
    \begin{equation}
        \rho =\mathcal{E} (\sigma),
    \end{equation}
    and, further, for any time-layer $\ell$ and any $\mathcal{C} \subseteq \mathcal{E}_{\ell}$ being a layer composed of a subset of channel gates in $\mathcal{E}_{\ell}$, $\rho'_{\ell} =\mathcal{C} \circ \mathcal{E}_{\ell - 1} \circ \cdots \circ \mathcal{E}_2 \circ \mathcal{E}_1 (\sigma)$ has finite Markov length, i.e., $\xi (\rho'_{\ell}) \leq \xi$ with some $\xi$ independent of $\ell$.
\end{definition}

As promised in Section \ref{sec:MSP_via_LR} of the main text, we now show that the mixed state phases via finite Markov length (Definition \ref{def:MSP_via_FML}) imply the mixed state phases via local reversibility (Definition \ref{def:MSP_via_LR}).

\begin{figure}
    \centering
    \includegraphics[width=\linewidth]{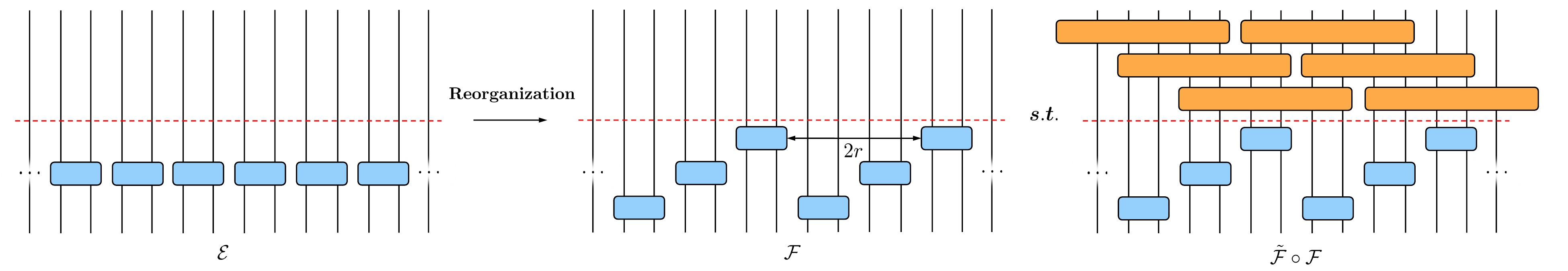}
    \caption{
    Reorganization trick that transfers from the original circuit $\mathcal{E}$ into a new circuit $\mathcal{F}$ (we only plot one layer for simplicity). All channel gates in each layer of $\mathcal{F}$ is separated by a distance $2r$, such that we can find a circuit $\tilde{\mathcal{F}}$ to locally reverse $\mathcal{F}$ with the same depth as $\mathcal{F}$.
    }
    \label{fig:reorganization}
\end{figure}

\begin{theorem}
    Given two mixed states $\rho$ and $\sigma$. Suppose $\rho$ can be prepared with a quasi-local channel circuit $\rho = \mathcal{E} (\sigma)$, which satisfies the condition in Definition \ref{def:MSP_via_FML} with a Markov length $\xi = O(1)$. 
    Assume $\mathcal{E}=\mathcal{E}_d \circ \cdots \mathcal{E}_2 \circ \mathcal{E}_1$ and each $\mathcal{E}_{\ell,x}$ has a support whose linear size is at most $l$.
    Then, there exists another quasi-local channel circuit $\mathcal{F}=\mathcal{F}_{d'} \circ \cdots \mathcal{F}_2 \circ \mathcal{F}_1$ such that $\rho = \mathcal{F} (\sigma)$ and each $\mathcal{F}_{\ell,x}$ has a support whose linear size is at most $l'$, where 
    \begin{equation}
        l' = l + 2r, \quad d' = d \cdot O(r^k),
    \end{equation}
    where $r = \xi \cdot O ( \mathrm{log} (n / \varepsilon_{\mathrm{LR}} ) )$, for arbitrary given $\varepsilon_{\mathrm{LR}} = O(1)$. 
\end{theorem}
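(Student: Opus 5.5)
The plan is to build $\mathcal{F}$ by refining each layer $\mathcal{E}_\ell$ into $O(r^k)$ sublayers. Within each sublayer, the gates of $\mathcal{E}_\ell$ are pairwise separated by distance at least $2r$ (this is the reorganization trick of Figure \ref{fig:reorganization}).

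\textbf{Step 1: coloring.} Partition the lattice into cubes of linear size $l+2r$. Color them with $O(r^k)$ colors (with $l=O(1)$ absorbed) so that gates of the same color are at distance $\geq 2r$. Since the gates of $\mathcal{E}_\ell$ do not overlap, they commute, so $\mathcal{E}_\ell=\prod_{\text{colors } j}\mathcal{C}_{\ell,j}$ in any order. Setting $\mathcal{F}$ to be the concatenation of the $\mathcal{C}_{\ell,j}$ gives $\mathcal{F}(\sigma)=\mathcal{E}(\sigma)=\rho$ exactly, with depth $d'=d\cdot O(r^k)$.

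\textbf{Step 2: intermediate states.} Every intermediate state of $\mathcal{F}$ has the form $\rho'=\mathcal{C}\circ\mathcal{E}_{\leq\ell-1}(\sigma)$, with $\mathcal{C}\subseteq\mathcal{E}_\ell$ a subset of gates. By Definition \ref{def:MSP_via_FML}, each such state has Markov length at most $\xi$. This is precisely why that definition quantifies over all sub-layers $\mathcal{C}$.

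\textbf{Step 3: reversal gates.} For a gate $\mathcal{G}=\mathcal{E}_{\ell,x}$ in sublayer $\mathcal{C}_{\ell,j}$, take $C=\mathrm{Supp}(\mathcal{G})$, let $B$ be the annulus of width $r$ around $C$, and let $A$ be the rest. Define $\tilde{\mathcal{F}}_{\ell,j,x}$ as the twirled Petz map \eqref{eq:TPRM} for $\mathcal{N}_C=\mathcal{G}$, with respect to the intermediate state $\rho'$ just before this sublayer. It acts on $BC$, whose linear size is $l+2r=l'$. The Fawzi–Renner bound gives
\[
\|\tilde{\mathcal{F}}\circ\mathcal{G}(\rho')-\rho'\|_1\leq\sqrt{2\ln 2\, I_{\rho'}(A:C|B)}\leq\sqrt{2\ln2\,\gamma}\,e^{-r/(2\xi)}.
\]
Choosing $r=\xi\cdot O(\log(n/\varepsilon_{\mathrm{LR}}))$ makes this at most $\varepsilon_{\mathrm{LR}}$, using $\gamma=\mathrm{poly}(n)$.

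\textbf{Step 4: matching Definition \ref{def:LR}.} Two adjustments are needed.

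First, the reversal gates should have the same support as the forward gate. I would enlarge each forward gate of $\mathcal{F}$ to act trivially (as $\mathcal{G}\otimes\mathcal{I}_B$) on the full $BC$ region. Because same-color gates are $2r$-separated, these enlarged $BC$ supports are disjoint within a sublayer, so the layered structure survives.

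Second, the reference state must be the one required by Definition \ref{def:LR}. That definition uses the state before the whole sublayer, while the CMI bound above should hold for $\rho'$ with the other gates of the same sublayer not yet applied. This is fine: those gates act outside $BC$, so Lemma \ref{lem:LR-modify}-style commutation shows they do not matter. Equivalently, the CMI is evaluated on $\rho'$, which is of the allowed form $\mathcal{C}\circ\mathcal{E}_{\leq\ell-1}(\sigma)$.

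\textbf{Main obstacle.} I expect the bookkeeping in Step 4 to be the hardest part. Specifically, one must ensure that the correct partition $A,B,C$, with $C=\mathrm{Supp}(\mathcal{G})$ a small region rather than the annular geometry of Figure \ref{fig:partition}a, still falls under the Markov-length hypothesis. One must also ensure the error does not accumulate beyond $\varepsilon_{\mathrm{LR}}$ per gate. The $\log n$ in $r$ absorbs the $\mathrm{poly}(n)$ prefactor.
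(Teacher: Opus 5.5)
Your proposal is correct and follows the paper's proof essentially step for step: you reorganize each layer into $O(r^k)$ sublayers of $2r$-separated gates, pad each gate to $\mathcal{E}_{\ell,x}\otimes\mathcal{I}_{C(r)\backslash C}$ so the forward and reverse gates share the support $C(r)$, and reverse each gate with the twirled Petz map, whose error is bounded by the Fawzi--Renner inequality and the Markov-length hypothesis on the intermediate states $\mathcal{C}\circ\mathcal{E}_{\leq \ell-1}(\sigma)$. Two small remarks: coloring cubes of side $l+2r$ does not by itself separate distinct gates inside the same cube, so color the gates directly (each conflicts with only $O((l+r)^k)$ others, which gives $O(r^k)$ colors); and the second adjustment in Step 4 and your ``main obstacle'' are non-issues, because $\rho'$ is already the pre-sublayer state that Definition \ref{def:LR} requires, $B$ is exactly the annulus of Figure \ref{fig:partition}a, and each gate is checked against the exact intermediate state, so no error accumulates.
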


\begin{proof}
    Let $\mathrm{Supp}(\mathcal{E}_{\ell, x})= C_{\ell, x}$. Now for each $\ell, x$, we define a tripartition $\Lambda=A_{\ell, x}B_{\ell, x}C_{\ell, x}$ such that the thickness of $B_{\ell, x}$ is denoted as $\mathrm{dist}(A_{\ell, x}, C_{\ell, x})=r$, where $r$ is the distance to be determined. 
    
    We use the reorganization trick (first introduced in Ref.\,\cite{sang_2025_statbility}) to rearrange $\mathcal{E}_{\ell, x}$ such that each $\mathcal{E}_{\ell, x}$ has been separated by a distance $2 r$ (see Figure \ref{fig:reorganization}).
    After the reorganization procedure, we can treat $\mathcal{E}_{\ell, x}$ as a new gate
    \begin{equation}
        \mathcal{F}_{\ell', x} := \mathcal{E}_{\ell, x} \otimes \mathcal{I}_{C_{\ell, x}(r) \backslash C_{\ell, x}},
    \end{equation}
    whose support is $\mathrm{Supp}(\mathcal{F}_{\ell', x})= C_{\ell, x}(r)$, where $\ell'$ is the new layer index of $\mathcal{E}_{\ell, x}$. 
    This procedure defines the new reorganized circuit $\mathcal{F}$
    \begin{equation}
        \mathcal{F} = \mathcal{F}_{d'} \circ \cdots \mathcal{F}_2 \circ \mathcal{F}_1.
    \end{equation}
    where $d' = d \cdot O(r^k)$. 
    
    Now, we can check the local reversibility from $\mathcal{F}(\rho)$ to $\rho$ along $\mathcal{F}$. 
    In fact, for each $\ell', x$, we just let $\tilde{\mathcal{F}}_{\ell', x} = \tilde{\mathcal{E}}_{\ell, x}$, where $\ell$ is the original layer index of $\mathcal{F}_{\ell', x}$.
    By the definition of the reorganization procedure, we can decompose $\mathcal{F}_{\leq \ell' - 1} = \mathcal{F}_{\ell' - 1} \circ \cdots \circ \mathcal{F}_{2} \circ \mathcal{F}_{1}$ into $\mathcal{F}_{\leq \ell' - 1} = \mathcal{C} \circ \mathcal{E}_{\ell} \circ \cdots \circ \mathcal{E}_2 \circ \mathcal{E}_1 (\sigma)$ where $\mathcal{C}$ is a sub-circuit of $\mathcal{E}_{\ell'-1}$.

    According to the Markov length definition of mixed-state phase, $\mathcal{C} \circ \mathcal{E}_{\ell} \circ \cdots \circ \mathcal{E}_2 \circ \mathcal{E}_1 (\sigma)$ exhibits exponential decay of CMI for tripartition $\Lambda=A_{\ell, x}B_{\ell, x}C_{\ell, x}$, and hence
    \begin{align}
        & \| \tilde{\mathcal{F}}_{\ell' , x} \circ \mathcal{F}_{\ell', x} \circ \mathcal{F}_{\leq \ell' - 1} (\sigma) -\mathcal{F}_{\leq \ell' - 1} (\sigma) \|_1 \nonumber\\
        =~& \| \tilde{\mathcal{E}}_{\ell, x} \circ \mathcal{E}_{\ell, x} (\mathcal{C} \circ \mathcal{E}_{\ell} \circ \cdots \circ \mathcal{E}_2 \circ \mathcal{E}_1 (\sigma)) - \mathcal{C} \circ \mathcal{E}_{\ell} \circ \cdots \circ \mathcal{E}_2 \circ \mathcal{E}_1 (\sigma) \|_1 \nonumber\\
        \leq~& \left( 2 \ln 2 \cdot I (A_{\ell, x} : C_{\ell, x} |  B_{\ell, x}) \right)^{\frac{1}{2}} \leq \sqrt{2 \gamma \ln 2} \cdot e^{- \frac{r}{2\xi}}.
    \end{align}
    As a result, we just need to take
    \begin{equation}
        r = \xi \cdot O ( \mathrm{log} (\gamma / \varepsilon_{\mathrm{LR}} ) ) = \xi \cdot O ( \mathrm{log} (n / \varepsilon_{\mathrm{LR}} ) ),
    \end{equation}
    which ensures $\| \tilde{\mathcal{F}}_{\ell , x} \circ \mathcal{F}_{\ell, x} \circ \mathcal{F}_{\leq \ell - 1} (\sigma) -\mathcal{F}_{\leq \ell - 1} (\sigma) \|_1 \leq \varepsilon_{\mathrm{LR}}$.
\end{proof}
We remark here that even if the new circuit $\mathcal{F}$ has a depth $d' = d \cdot O(r^k)$, the backward lightcone has a distance at most $s' = c d + r$.

\section{Classical Diffusion Models}
\label{sec:diffusion}

Suppose a $k$-dimensional lattice $\Lambda$, where each site supports a random variable from a sample space $\mathcal{X}$. A probability distribution is a function $P (y)$ where $y \in \mathcal{X}^n$. For simplicity, we assume a discrete sample space $\mathcal{X}$. When there is no confusion, we represent the function $P$ by a vector $P = (P_1, \cdots, P_{| \mathcal{X} |^n})$, where $\sum_i P_i = 1$ and $P_i \in [0, 1]$ for each $i \in \mathcal{X}^n$.

We embed the probability distribution into a diagonal mixed state
\begin{equation}
    \rho =\mathrm{diag} (P_1, \cdots, P_{| \mathcal{X} |^n}) = \sum_i P_i \ket{i}\!\bra{i} .
\end{equation}
A \textit{classical noisy channel} (or simply \textit{noisy channel}) is given by the probability transition matrix $T \in \mathbb{R}^{| \mathcal{X} |^n \times | \mathcal{X} |^n}$ such that
\begin{equation}
    T (P)_i = \sum_j P_j T_{j i} . \label{eq:TP}
\end{equation}
Then, we can also embed the classical noisy channel into the quantum channel by defining
\begin{equation}
    \mathcal{E} (X) = \sum_{i j} K_{i j} X K^{\dagger}_{i j}, \quad K_{i j} = T_{j i}^{\frac{1}{2}} \ket{i}\!\bra{j}, \label{eq:P_to_rho}
\end{equation}
where $K_{i j} \in \mathbb{R}^{| \mathcal{X} |^n \times | \mathcal{X} |^n}$ are the Kraus operators satisfying the normalization relation $\sum_{i j} K^{\dagger}_{i j} K_{i j} =\mathbb{I}_{| \mathcal{X} |^n}$. We can verify that
$\mathcal{E}$ acting on $\rho$ indeed realize Eq.(\ref{eq:TP}):
\begin{equation}
    \mathcal{E} (\rho) = \sum_{i j k} T_{j i}^{\frac{1}{2}} \ket{i}\!\bra{j} \left( P_k \ket{k}\!\bra{k} \right) T_{j i}^{\frac{1}{2}} \ket{j}\!\bra{i} = \sum_{i j} P_j T_{j i}  \ket{i}\!\bra{i} = \sum_i T (P)_i \ket{i}\!\bra{i} .
\end{equation}
\begin{definition}[Diffusion model]
    \label{def:DM}
    Suppose two distributions $P$ and $Q$. If there is a process (called \textnormal{diffusion}) generating $P$ from $Q$ through a continuous time evolution, described by the master equation (for continuous sample space $\mathcal{X}$, it is the Fokker-Planck equation)
    \begin{equation}
            \frac{\partial \tilde{p}_t}{\partial t} = \tilde{L}_t [\tilde{p}_t], \quad \tilde{p}_{t = 0} = P, \quad \tilde{p}_{t = 1} = Q, \quad \forall t \in [0, 1] .
    \end{equation}
    where $\tilde{L}_t$ is a linear operator. Then, there is another process (called \textnormal{denoising}) $L_t$ such that (notice that the initial and final distributions are inverted now)
    \begin{equation}
        \frac{\partial p_t}{\partial t} = L_t [p_t], \quad p_t = \tilde{p}_{1-t}, \quad \forall t \in [0, 1] .
    \end{equation}
    The pair $(\tilde{L}_t, L_t)$ is a \textnormal{diffusion model} that generates $P$ from $Q$ along the noise path $p_t$. Especially, $Q$ is usually selected as the trivial distribution $Q (0^n) = 1$.
\end{definition}

In practice, the dynamics are always time-discrete, which means we can give a time-discrete version of diffusion models: there exist classical noisy channels $T$ and $\tilde{T}$ with $d$ layers such that
\begin{equation}
    Q = \tilde{T} (P) =: \tilde{T}_1 \circ \tilde{T}_2 \circ \cdots \circ \tilde{T}_d (P), \quad P = T (Q) = T_d \circ \cdots \circ T_2 \circ T_1 (Q) , \label{eq:DM_def_1}
\end{equation}
and $Q$ is layer-wise reversible from $P$ along $T$, namely 
\begin{equation}
    \tilde{T}_{\ell} \circ T_{\ell} \circ T_{\leq \ell-1} (Q) = T_{\leq \ell-1} (Q),
\end{equation}
where $T_{\leq \ell-1} := T_{\ell-1} \circ \cdots \circ T_{2} \circ T_{1}$ is the first $(\ell-1)$ layer of $T$.
According to Ref.\,\cite{hu_2025_local}, it is intuitive to suppose that each $T_{\ell}$ and $\tilde{T}_{\ell}$ has finer structures:
\begin{equation}
    T_{\ell} = \prod_x T_{\ell, x}, \quad \tilde{T}_{\ell} = \prod_x \tilde{T}_{\ell, x}, \quad \text{$ T_{\ell, x} $ and $ \tilde{T}_{\ell, x} $ are local} . \label{eq:DM_def_2}
\end{equation}
Similar to Definition \ref{def:MSP_via_LR}, we denote $| \mathrm{Supp} (T_{\ell, x}) |, | \mathrm{Supp} (\tilde{T}_{\ell, x}) | \leq c$, and we also define $s = c d$.

Since both $P$ and $Q$ are classical distributions, we can always embed them into diagonal mixed states $\rho =\mathrm{diag} (P)$ and $\sigma =\mathrm{diag} (Q)$ using Eq.\,(\ref{eq:P_to_rho}). 
Then the classical limit of mixed state phases via local reversibility (Definition \ref{def:MSP_via_LR}) naturally induces the phases of classical distributions via local reversibility \cite{hu_2025_local, schuster_2025_hardness}. 

\begin{definition}[Phases of distributions via local reversibility]
    \label{def:DP_via_LR}
    Suppose two distributions $P$ and $Q$. We say $P$ and $Q$ are in the same phases of distribution, if for any $\eta > 0$ and $\varepsilon_{\mathrm{LR}} > 0$, there exists $T = T_d \circ \cdots \circ T_2 \circ T_1$ where $T_{\ell} = \prod_x T_{\ell, x}$ and each $T_{\ell, x}$ has a local support, such that:
    \begin{enumerate}
        \item The total variation distance $\frac{1}{2}| P - T (Q) |_1 \leq \eta$,
        \item $Q$ is \textnormal{$\varepsilon_{\mathrm{LR}}$-locally reversible} from $T(Q)$ along $T$: for each $T_{\ell, x}$, there exists another channel gate $\tilde{T}_{\ell, x}$ with the same spatial support as $T_{\ell, x}$ such that
        \begin{equation}
            \left| \tilde{T}_{\ell,x} \circ T_{\ell,x} \circ T_{\leq \ell - 1} (Q) - T_{\leq \ell - 1} (Q) \right|_1 \leq \varepsilon_{\mathrm{LR}}.
        \end{equation}
        where $T_{\leq \ell - 1} := T_{\ell - 1} \circ \cdots \circ T_2 \circ T_1$ represents the first $\ell - 1$ layers of T.
    \end{enumerate} 
    Especially, we say a distribution $P$ is in the trivial phase if $Q$ is the trivial distribution $Q (0^n) = 1$.
\end{definition}

% The phase structures for classical distribution are also abundant.
% In the classical case, for a trivial phase distribution $P$, it is also possible that there exists another shallow circuit $T'$ preparing $P$ that passes through an intermediate distribution violating local reversibility -- for example, a uniform mixture over all closed loop configurations with trivial homology on a torus \cite{hu_2025_local}.

On the other hand, our Theorem \ref{thm:AM}, Theorem \ref{thm:LE}, and Main Theorem \ref{thm:main} immediately imply the following crucial properties for arbitrary trivial phase distributions:

\begin{corollary}[Learning trivial phase distribution]
    \label{cor:learning_TPD}
    Suppose a trivial phase distribution $P$, it satisfies the following properties: given arbitrary $\varepsilon > 0$ and $\delta > 0$,
    \begin{enumerate}
        \item Approximate Markovianity: for any tripartition $\Lambda = A B C$ with $\mathrm{dist} (A, C) \geq 2 s$, there exists classical local recovery map $\Psi_{B \rightarrow B C}$ such that
        \begin{equation}
            | P - \Psi_{B \rightarrow B C} (P_{A B}) |_1 \leq \varepsilon .
        \end{equation}
        
        \item Local extendibility: for any pentapartition $\Lambda = A B C D E$ with $\mathrm{dist} (A, C) \geq 2 s$ and $\mathrm{dist} (B, D) \geq 2 s$, there exists classical local extension map $\Phi_{B E \rightarrow B C}$ such that
        \begin{equation}
            | P_{A B C} - \Phi_{B E \rightarrow B C} (P_{A B E}) |_1 \leq \varepsilon .
        \end{equation}
    \end{enumerate}
    Furthermore, there exists an algorithm that runs time $T$, learns from $M$ many samples of $P$, and generates $P$ using a $(k + 1)$-layer local noisy channel circuit $W$, where (recall $\mathcal{X}$ is a discrete sample space)
    \begin{equation}
        T = n \cdot | \mathcal{X} |^{O (c d)^k} \log \left( \frac{n}{\varepsilon} \right) + O (M), \quad M = \frac{n^2 \cdot | \mathcal{X} |^{O (c d)^k}}{\varepsilon^2} \log \left( \frac{n}{\delta} \right) .
    \end{equation}
    such that $\frac{1}{2}| W(Q) - P |_1 \leq \varepsilon$ with success probability at least $1 - \delta$.
\end{corollary}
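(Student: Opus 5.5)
Our approach is to reduce Corollary \ref{cor:learning_TPD} to the quantum statements through the diagonal embedding. Set $\rho = \mathrm{diag}(P)$ and $\sigma = \mathrm{diag}(Q)$, and embed each local noisy gate $T_{\ell,x}$ and its reverser $\tilde{T}_{\ell,x}$ as quantum channels via Eq.\,(\ref{eq:P_to_rho}). The Kraus construction acts site-locally, so it preserves supports, and it gives $\mathcal{E}(\mathrm{diag}(P)) = \mathrm{diag}(T(P))$. Moreover, the $1$-norm between diagonal states equals the $\ell_1$ distance between the underlying distributions. Hence Definition \ref{def:DP_via_LR} translates verbatim into Definition \ref{def:MSP_via_LR}, and $\rho$ is a trivial-phase mixed state of local dimension $|\mathcal{X}|$, with $\mathcal{X}$-valued sites replacing qubits. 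Throughout, $Q$ is the point mass on $0^n$, which embeds to $\rho_0$.

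For items (1) and (2), we apply Theorem \ref{thm:AM} and Theorem \ref{thm:LE} to $\rho$. These give quantum channels $\Psi$ and $\Phi$ with errors at most $\varepsilon_{\mathrm{LI}} = nd\,\varepsilon_{\mathrm{LR}}$. Since $\varepsilon_{\mathrm{LR}}$ is arbitrary, we choose it so that this error is at most $\varepsilon$. The remaining issue is that the recovered maps are quantum, whereas the statement asks for classical stochastic maps. To handle this, we compose with the full dephasing channel $\mathcal{D}$ in the computational basis on the output registers, and precompose with $\mathcal{D}$ on the input registers. Then $\mathcal{D}\circ\Psi\circ\mathcal{D}$, restricted to diagonal inputs, is a stochastic matrix supported on the same region. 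Because $\rho_{AB}$ and $\rho_{ABC}$ are diagonal, the input dephasing acts trivially, and by contractivity the output dephasing does not increase the distance to the diagonal target. This is the key observation that makes the classical maps inherit the error bound, and similarly for $\Phi_{BE\to BC}$.

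For the algorithmic part, we run Algorithm \ref{alg:learning} on the embedded state. Two adjustments are needed. First, we replace classical shadows by empirical estimation of the local marginals $P_{A_{\mathrm{in}}BCDE}$ from samples. By standard concentration (Hoeffding/DKW-type bounds on a domain of size $|\mathcal{X}|^{O(cd)^k}$, followed by a union bound over the $K\le n$ patches), this attains $\ell_1$ error $\varepsilon_{\mathrm{LT}}$ with $M$ as stated. Second, we replace the SDP by an LP over stochastic matrices, or equivalently run the SDP and dephase its output as above. The dimension becomes $|\mathcal{X}|^{O(cd)^k}$ in place of $2^{O(cd)^k}$. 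Lemma \ref{lem:learning_locally}, Lemma \ref{lem:robustness_LT}, Corollary \ref{cor:overall_error_single} and the telescoping error lemma then go through unchanged, since they use only the triangle inequality and contractivity, both of which hold for stochastic maps in $\ell_1$. Choosing $\varepsilon_{\mathrm{LR}}$ as in Eq.\,(\ref{eq:eLE_e8n}) yields $\frac12|W(Q)-P|_1\le\varepsilon$.

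The main obstacle is ensuring the output circuit $W$ is genuinely classical while keeping the guarantee. The dephasing sandwich argument settles this, but it must be applied at every layer, and one must check that all intermediate states $\rho_{S_i}$ stay diagonal so that the input dephasing is harmless. This holds by induction: each layer maps diagonal states to diagonal states, and the reference marginals of $P$ are diagonal.
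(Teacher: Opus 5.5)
Your proposal is correct and follows essentially the paper's route. The paper obtains this corollary by embedding $P$ and the local stochastic gates as diagonal states and measure-and-prepare channels via Eq.~(\ref{eq:P_to_rho}), then invoking Theorem~\ref{thm:AM}, Theorem~\ref{thm:LE} and Theorem~\ref{thm:main}. Your dephasing sandwich, empirical estimation of local marginals, and LP in place of the SDP are valid fillings-in of details the paper leaves implicit. For items (1) and (2) the dephasing is in fact unnecessary, since the explicit maps $\mathcal{Q}_1 \circ \mathcal{R}_{C(s)\backslash C} \circ \mathcal{P}_B$ and $\mathrm{Tr}_{DE}\circ\mathcal{Q}_1 \circ \mathcal{R}_{F\backslash CD}\circ\mathcal{P}_{BE}$ are built only from embedded stochastic gates, resets and partial traces, and are therefore already classical.
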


\textbf{Remark}. For a continuous sample space $\mathcal{X}$, the approximate Markovianity and local extendibility are still true, following similarly to the proof of Theorem \ref{thm:AM} and Theorem \ref{thm:LE}. But for a continuous sample space, the semidefinite programming algorithm does not work. However, our approximate Markovianity and local extendibility, together with the covering schemes described in Section \ref{sec:cover_scheme}, still imply a more efficient learning paradigm of trivial phase distribution $P$, by learning recovery maps and extension maps locally and stitching the learned local pieces consistently.

% \section{Algorithm}

\end{document}